\tikzset{%
  >=latex, 
  inner sep=0pt,%
  outer sep=2pt,%
  mark coordinate/.style={inner sep=1pt,outer sep=1pt,minimum size=3pt,
    fill=black}
}
\pgfplotsset{compat=1.18}
\newcommand{\comment}[1]{}
\newcommand{\order}[1]{\mathcal{O}\left({#1}\right)}
\newcommand{\fid}[1]{\operatorname{F}_\text{FLO}\left({#1}\right)}
\newcommand{\ext}[1]{\operatorname{\xi}_\text{FLO}\left({#1}\right)}
\newcommand{\extno}{\operatorname{\xi}_\text{FLO}}
\newcommand{\com}[2]{\left[#1,#2\right]} 
\newcommand{\acom}[2]{{\{#1, #2\}}} 
\newcommand{\abs}[1]{{\left\lvert{#1}\right\rvert}}
\newcommand{\norm}[1]{{\left\lVert{#1}\right\rVert}}
\newcommand{\re}[1]{{\operatorname{Re}\left({#1}\right)}}
\newcommand{\im}[1]{{\operatorname{Im}\left({#1}\right)}}
\newcommand{\ketbra}[2]{\ensuremath{\left | {#1}  \middle\rangle\!\middle\langle {#2}\right |}}
\DeclareDocumentCommand{\cyc}{ O{n} }{
  {\mathbb{Z}_n}
}
\DeclareDocumentCommand{\pnorm}{ m O{p} }{
  \norm{{#1}}_{#2}
}
\newcommand{\maj}[1]{{c_{#1}}}
\newcommand{\flohm}{\operatorname{\phi}}
\newcommand{\flohom}[1]{\flohm\left({#1}\right)}
\newcommand{\opI}{\operatorname{I}}
\DeclareMathOperator{\tr}{tr}
\newcommand{\changed}[1]{{#1}} 
\newtheorem{thm}{Theorem}
\newtheorem{lem}{Lemma}
\newtheorem{cor}{Corollary}
\theoremstyle{definition}
\newtheorem{defi}{Definition}
\newtheorem{updaterule}{Update rule}
\begin{document}

\date{30/04/2024}
\title{Improved simulation of quantum circuits dominated by free fermionic operations}

\author{Oliver Reardon-Smith}
\orcid{0000-0002-0124-1389}
\email{oliverreardonsmith@gmail.com}
\affiliation{Center for Theoretical Physics, Polish Academy of Sciences, Al. Lotnikow 32/46, 02-668 Warszawa, Poland}

\author{Micha{\l} Oszmaniec}
\affiliation{Center for Theoretical Physics, Polish Academy of Sciences, Al. Lotnikow 32/46, 02-668 Warszawa, Poland.}
\affiliation{NASK National Research Institute, ul. Kolska 12
01-045 Warszawa, Poland}
\orcid{0000-0002-4946-6835}

\author{Kamil Korzekwa}
\orcid{0000-0002-0683-5469}
\affiliation{Faculty of Physics, Astronomy and Applied Computer Science, Jagiellonian University, 30-348 Kraków, Poland}

\begin{abstract}

  We present a classical algorithm for simulating universal quantum circuits composed of ``free'' nearest-neighbour matchgates or equivalently fermionic-linear-optical (FLO) gates, and ``resourceful'' non-Gaussian gates. We achieve the promotion of the efficiently simulable FLO subtheory to universal quantum computation by gadgetizing controlled phase gates with arbitrary phases employing non-Gaussian resource states. Our key contribution is the development of a novel phase-sensitive algorithm for simulating FLO circuits. This allows us to decompose the resource states arising from gadgetization into free states at the level of statevectors rather than density matrices. The runtime cost of our algorithm for estimating the Born-rule probability of a given quantum circuit scales polynomially in all circuit parameters, except for a linear dependence on the newly introduced \emph{FLO extent}, which scales exponentially with the number of controlled-phase gates. More precisely, as a result of the decompostions we find for relevant resource states, the runtime doubles for every maximally resourceful (e.g., swap or CZ) gate added. Crucially, this cost compares very favourably with the best known prior algorithm, where each swap gate increases the simulation cost by a factor of approximately 9. For a quantum circuit containing arbitrary FLO unitaries and $k$ controlled-Z gates, we obtain an exponential improvement $\order{4.5^k}$ over the prior state-of-the-art.
\end{abstract}

\maketitle


\section{Introduction}

As our capabilities in the control and manipulation of quantum systems extend, so do the demands we make of the simulation methods we use to understand them. Despite the interest in the potential use of quantum simulators to simulate quantum systems~\cite{lloyd1996universal,altman2021quantum}, for now the availability, cost and reliability of classical computers make them by far the predominant tool for the simulation of quantum mechanics. Although the most obvious methods to classically simulate quantum computations have runtime and memory requirements that scale linearly with the dimension of the Hilbert space in question (i.e., exponentially in the number of qubits involved), this scaling is \emph{not} necessary. It is obvious that quantum circuits containing no entangling gates may be classically simulated in polynomial time, while other seminal examples have been known since the turn of the millennium, including the celebrated examples of Clifford/stabilizer circuits~\cite{gottesman-knill-1999,aaronson-gottesman-2004} and matchgate/fermionic linear optical (FLO) circuits~\cite{valiant-2001,terhal2002Classical}. 

Although these examples play a central role in randomized benchmarking \cite{Dekert2009,Helsen2022matchgate}, error correction \cite{ErrCorrTutorial2019}, and protocols such as classical shadows \cite{huang20} and their fermionic generalization \cite{zhao21}, the very restrictions which make them efficiently classically simulable cause them to fail to be computationally universal. Therefore, they form ``gilded cages'' within which we can simulate anything we wish to, but outside which lie most of the things we wish to simulate. This fact has been used to propose schemes for quantum advantage based on sampling for random Clifford~\cite{YJS2019} or free-fermion circuits~\cite{PRXQuantum.3.020328} initialized by product of suitable  magic states, promoting the respective sub-theories to computational universality. It is then natural that significant attention has been paid to generalizing these simulation methods to as broad a class of circuits as possible. 

On one hand a number of simulation methods covering restricted families of circuits extended by noisy variants of extra gates or states, have been developed for both Clifford \cite{Franca21} and FLO circuits \cite{deMelo2013,Oszmaniec2014}. On the other hand, in works such as~\cite{PRXQuantum.2.010345,PhysRevLett.116.250501,bravyi-2019,9605307,PhysRevX.6.021043,PRXQuantum.3.020361,Mitarai_2021}, a class of simulation algorithms was developed for universal quantum circuits with runtime that scales polynomially in the relevant parameters, except for an exponential dependence on the amount of \emph{quantum resources} such as magic or entanglement present in a circuit. In this way, one can interpolate smoothly between ``best-case'' efficiently simulable examples and the ``worst-case'' situation, where runtime is exponential in the number of qubits, enabling the exploration of an interesting intermediate regime.

These prior works have explored such ideas in depth in the cases of Clifford/stabilizer ``magic'' and entanglement, but less attention has been paid to classical simulation based on augmenting FLO circuits with their own ``magic''. The primary example that we are aware of has a somewhat poor runtime scaling, multiplying the required time by a factor of $9$ for each controlled-Z gate added to the circuit~\cite{PhysRevResearch.4.043100}. In this paper, we partially remedy this situation by adapting methods developed for efficient simulation of Clifford+T circuits~\cite{bravyi-2019,PRXQuantum.3.020361} to the matchgate+magic scenario. While the algorithms of Ref.~\cite{PhysRevResearch.4.043100} work at the level of density operators, ours work directly at the level of statevectors, leading to a significant improvement in performance. More precisely, we obtain a runtime which doubles for each controlled-Z gate in the circuit, while maintaining a polynomial dependence on the number of qubits and the number of ``free'' fermionic linear optical gates in the circuit. 

This dramatic improvement in runtime comes at the cost of requiring  more complicated simulation subroutines for the fermionic linear optics, since we require phase-sensitive FLO simulation, and the standard methods, known since the early 2000s~\cite{bravyi-kitaev-2000,terhal2002Classical} are \emph{not} phase-sensitive. We note that the authors of Ref.~\cite{bravyi-gosset-17-impurity} developed phase-sensitive FLO simulation methods, however their approach was rather different to ours, being targeted at problems related to quantum impurity models. In particular, their phase-sensitivity is a result of keeping an explicit phase-reference state. In the context of impurity models, where one expects the state at all times to be not too far from the ground state of the initial Hamiltonian, this unperturbed ground state serves as a natural phase-reference. In our context, with a focus on quantum circuits, this approach would be more troublesome, since in a quantum circuit it is natural to allow the state to evolve to a state orthogonal to any fixed reference. Indeed, it is usually not possible to efficiently decide whether a particular quantum circuit will evolve a state to one orthogonal to some reference state. In practical non-idealised computing, the problem is even worse, as significant numerical errors can arise when the state evolves into one close to, but not exactly orthogonal to a fixed phase reference. To address these issues, we have developed novel phase-sensitive FLO simulation routines, more suitable for application to quantum circuits, inspired by the ``CH-form'' developed for phase-sensitive Clifford/stabilizer simulation~\cite{bravyi-2019}. We anticipate that these phase-sensitive FLO simulation subroutines may also be applied more broadly. 

Classical simulation of quantum circuits can be understood in a variety of ways, but our main contribution is an algorithm for additive-error Born-rule probability estimation. Although perhaps the problem of \emph{sampling} from the Born-rule probability distribution is more obvious, probability estimation has natural applications in verification and validation of quantum computers, e.g., by cross-entropy benchmarking~\cite{Boixo-xeb-2018}, and has been employed to estimate the energies of Hamiltonians which may be written as sums of polynomially-many binary observables, e.g., in the Pauli-operator basis~\cite{PRXQuantum.3.020361}. The runtime of our algorithm is polynomial in all relevant parameters (the error, failure probability, qubit count, gate count, etc.) other than a factor exponential in the number of non-FLO gates in the circuit, which appears in the form of a \emph{FLO-extent}. We define this quantity in analogy to the stabilizer extent known from Ref.~\cite{bravyi-2019} and related works. The FLO-extent may be viewed as an upper bound on the closely related (approximate)-FLO rank, although we do not focus on this connection here.

Our results are formulated for quantum circuits, directly extending the class of quantum computations that can be simulated using current classical computers. However, extensions beyond the FLO subtheory also naturally represents the evolution of fermions evolving under  non-Gaussian transformations and, in the specific case of particle preserving FLO gates, the introduction of non-Gaussian gates can induce interactions between fermions~\cite{oz2017}. Therefore, we expect that our results will be applicable to classical simulations of weakly interacting fermions. Indeed the operations we consider are directly relevant for quantum simulations of chemistry and many-body systems~\cite{doi:10.1126/science.abb9811,PhysRevLett.120.110501,PhysRevApplied.9.044036,Dallaire-Demers_2019,Phasecraft2022,UnbiasingMonteCarlo2022,FollowUpMatchShadowGoogle2022} 

The paper is organized as follows. First, in Sec.~\ref{sec:preliminaries}, we present the background material and known results on fermionic linear optical circuits, the corresponding fermionic Gaussian states, and the matchgate magic states. We also introduce the notion of FLO extent and describe some of its properties. Then, in Sec.~\ref{sec:phase-sensitive-flo-sim}, we present our phase-sensitive simulator for FLO circuits. Our main result, the algorithm for estimating Born-rule probabilities of universal quantum circuits composed of FLO unitaries and controlled-phase gates, is described in Sec.~\ref{sec:simulating-universal-circuits}. We discuss its performance in Sec.~\ref{sec:discussion}, while Sec.~\ref{sec:outlook} contains outlook for future research. Most of the proofs of technical lemmas can be found in Appendices~\ref{app:lie}-\ref{lem:app-bounding-norm-alpha-y}. An implementation of the phase-sensitive fermionic linear optics simulation routines may be found at Ref.~\cite{ors-flo-github}.


\section{Preliminaries}
\label{sec:preliminaries}


\subsection{Fermionic linear optics}

Our work can be understood from the two points of view: we either investigate a quantum circuit consisting of matchgates acting on nearest neighbour qubits (in a 1D architecture) and controlled-phase gates applied to a system of $n$-qubits; or, equivalently, we study a system consisting of $n$ fermionic modes with fermionic linear optical (FLO) operations augmented by the addition of some non-fermionic linear operations. These viewpoints are linked by the Jordan-Wigner transformation \cite{jordan-wigner-1928}, which maps operators on the fermionic system to those of the qubit system and vice-versa.

On the fermionic side, we will adopt the notation of Refs.~\cite{terhal2002Classical,bravyi-kitaev-2000}. A system of $n$ fermionic modes, labelled with indices from $0$ to $n-1$, may be described by $n$ annihilation and $n$ creation operators satisfying the canonical anti-commutation relations:
\begin{align}
  \acom{a_j}{a_k} &= 0, & \acom{a_j^\dagger}{a_k^\dagger} &= 0, &  \acom{a_j}{a_k^\dagger} &= \delta_{jk} I.\label{eqn:creation-annihilation-anticomm-relations}
\end{align}
Using the above, it is then useful to define $2n$ Majorana fermion operators~\cite{bravyi-kitaev-2000}, which satisfy their own anti-commutation relations: 
\begin{align}
   c_{2j} &= (a_j + a_j^\dagger), & c_{2j+1} &= i(a_j^\dagger - a_j), & \acom{c_j}{c_k} &= 2\delta_{jk} I. \label{eqn:majorana-anticomm-relations}
\end{align}
The Majorana fermion operators are easily seen to be both Hermitian and unitary, square to the identity, and anti-commute with each other. These properties are strongly reminiscent of the Pauli operators, defined for a system qubits, and as we shall see, this similarity is not a coincidence. For a system of $n$ qubits, a Pauli string is an integer power of $i$ multiplied by a tensor product of single qubit Pauli matrices, which consist of
\begin{align}
  I &= \begin{pmatrix}1 & 0 \\0 & 1\end{pmatrix}, &\changed{\sigma_x} &= \begin{pmatrix}0 & 1 \\1 & 0\end{pmatrix}, &\changed{\sigma_y} &= \begin{pmatrix}0 & -i \\i & 0\end{pmatrix}, &\changed{\sigma_z} &= \begin{pmatrix}1 & 0 \\0 & -1\end{pmatrix}.
\end{align}
\changed{We will occasionally employ the standard notations $X_j$, $Y_j$ and $Z_j$ to indicate not an individual Pauli matrix, but a $2^n\times 2^n$ matrix formed of a single non-trivial Pauli matrix acting on the $j^\text{th}$ qubit and a tensor product of $n-1$ identity operators on the other qubits.} Examples of Pauli strings include the operator which is an identity on each qubit, the three additional non-trivial single qubit Paulis for each qubit, and products such as $i^2 X_1 Z_3 = -X \otimes I \otimes Z\otimes\ldots$, where the ellipsis indicates an identity operator on each of the remaining qubits. The set of $n$-qubit Pauli strings forms a group with the group operation being given by the standard matrix multiplication. This group plays a central role in quantum information, with Abelian subgroups being the stabilizers of the so-called stabilizer states, while the unitary operators mapping Pauli strings to Pauli strings form the Clifford group, which is ubiquitous in quantum information and error correction. Of particular significance is the fact that the Pauli strings form a basis for the space of $n$ qubit operators as a complex vector space, while the subset of Hermitian Pauli strings form a basis for the $n$ qubit Hermitian operators as a real vector space.

Since our focus is on simulating quantum circuits, we will view the $n$ qubit system as primary, and consider the Jordan-Wigner transformation as a relabelling, associating fermionic operators to real underlying qubit operators. Of course this viewpoint is a matter of notational convenience, and our work applies equally well to fundamentally fermionic systems. With this convention, we obtain the following expressions for the Majorana fermion operators in terms of $n$ qubit Pauli strings:
\begin{align}
  c_{2j} &= \left(\prod_{k=0}^{j-1} Z_k\right)X_j, & c_{2j+1} &= \left(\prod_{k=0}^{j-1} Z_j\right)Y_j,
\end{align}
where $\prod_{k=0}^{-1} Z_k = \opI$. It is easily verified that these $2n$ operators satisfy the anti-commutation relations from Eq.~\eqref{eqn:majorana-anticomm-relations}, required of Majorana fermion operators. 

A self-adjoint Hamiltonian $H$ on fermionic system with $n$ modes is called fermionic linear optical if it can be written in the form 
\begin{align}
  H &= \frac{i}{4}\sum_{j,k=0}^{2n-1} \alpha_{jk} c_j c_k \label{eqn:flo-hamiltonian-definition}
\end{align}
where the factor of $1/4$ is conventional and $\alpha$ is a real and anti-symmetric $2n\times 2n$ matrix \footnote{Note that due to the anti-commutation relations between Majorana operators, Eq.~\eqref{eqn:flo-hamiltonian-definition} constitutes the most general form of a Hamiltonian that is quadratic in Majorana operators (or equivalently in creation and annihilation operators), up to physically irrelevant term proportional to identity operator $I$.}.
It can be checked that for Hamiltonians $H,H'$ as in Eq. \eqref{eqn:flo-hamiltonian-definition}, their commutator $[H,H']$ also belongs to this class (up to multiplication by factors of $i$).  We therefore define the Lie algebra 
\begin{align}
    \mathfrak{flo}(n) &= \left\{\frac{1}{4} \sum_{j,k=0}^{2n-1} \alpha_{jk}c_j c_k\middle|\, \alpha\in\mathbb{R}^{2n\times 2n},\,\alpha^T + \alpha = 0\right\},
\end{align}
and note that the set of FLO Hamiltonians is just $i\mathfrak{flo}(n)$. In Appendix~\ref{app:lie}, we review the properties of this Lie algebra and the corresponding Lie group $\operatorname{FLO}(n)$ that it generates.

A unitary operator $U$ is called fermionic linear optical (or fermionic Gaussian), $U\in\operatorname{FLO}(n)$, if it is generated by a FLO Hamiltonian,
\begin{align}
    \label{eq:FLO_unitary}
  U = \exp\left(i H\right) = \exp\left(-\frac{1}{4} \sum_{j,k=0}^{2n-1} \alpha_{jk} c_j c_k\right),
\end{align}
where $\alpha$ will be referred to as the \emph{generating matrix} of $U$. An important fact is that conjugation by FLO unitaries maps Majorana operators to sums of Majorana operators~\cite{bravyi-kitaev-2000,terhal2002Classical}, e.g.,
\begin{align}
  U c_j U^\dagger &= \sum_{k=0}^{2n-1} R_{jk} c_k, \label{eqn:notation-FLO-conjugation}
\end{align}
for some matrix $R$. In fact, the matrix $R$ is a real and special orthogonal matrix. Here, we note a minor disagreement of notation in the literature: Refs.~\cite{terhal2002Classical,bravyi-kitaev-2000,josaMatchgate2008,bravyi-2005-lagrangian-rep} define FLO unitaries in the same we do (as those unitaries generated by Hamiltonians quadratic in the Majorana fermion operators), however other papers, such as Ref.~\cite{bravyi-gosset-17-impurity}, add the individual Majorana fermion operators $c_j$ to the group of FLO unitaries. These operators are \emph{odd}, and have the effect of adding $1$ ($\operatorname{mod} 2$) to the number of fermions in the relevant fermionic mode. The effect at the level of Eq.~\eqref{eqn:notation-FLO-conjugation} is to add \emph{reflections} to what was previously the special orthogonal group, in this way the full orthogonal group is generated. 

The map $\flohm$ which takes a FLO unitary $U$ and outputs the special orthogonal matrix satisfying
\begin{align}
    \label{eq:phi}
  U c_j U^\dagger &= \sum_{k=0}^{2n-1} \flohom{U}_{jk} c_k,
\end{align}
is a Lie group \changed{anti-}homomorphism. \changed{Our indexing in equations~\eqref{eqn:notation-FLO-conjugation} and~\eqref{eq:phi} is chosen to match the existing literature, however if instead $\phi(U)$ is transposed in~\eqref{eq:phi} it becomes a true homomorphism}. In Appendix~\ref{app:lie}, we show that the kernel of this homomorphism consists of the two unitaries $\pm I$. The \changed{anti-}homomorphism $\flohm$ may be made more explicit by writing the FLO unitary in terms of its generating Hamiltonian, then
\begin{align}
  \flohom{\exp\left(-\frac{1}{4} \sum_{j,k=0}^{2n-1} \alpha_{jk} c_j c_k\right)} = \exp\left(\alpha\right).
\end{align}
In particular, note that for any real special orthogonal matrix $R$ (of appropriate dimension), there exists some anti-symmetric $\alpha$ such that $\exp(\alpha) = R$, and so a FLO unitary exists that applies the rotation $R$ to the Majorana fermion operators.

Since under the Jordan-Wigner transformation the group of products of Majorana operators is exactly the group of Pauli strings, it is interesting to note that the Clifford group may be equivalently defined as the unitary operators mapping products of Majorana operators to products of Majorana operators under conjugation, while, due to equation~\eqref{eqn:notation-FLO-conjugation}, $\operatorname{FLO}(n)$ is eactly the group of unitaries mapping \emph{sums} of Majorana operators to sums of Majorana operators. Informally then, the Clifford group is to multiplication of Majorana operators as the FLO group is to addition. An important fact is that under the correspondence given by the Jordan-Wigner transformation, the group of FLO unitaries we have defined is exactly the group generated by nearest neighbour \emph{matchgates}~\cite{valiant-2001}, i.e., unitaries  of the form 
\begin{align}
    \label{eq:matchgate}
    G(A,B) &= \begin{pmatrix}
        A_{11} & 0 & 0 & A_{12}\\0 &B_{11} & B_{12} & 0\\ 0 & B_{21} & B_{22} & 0\\ A_{21} & 0 & 0 & A_{22}
    \end{pmatrix},
\end{align}
acting on adjacent qubits (arranged in 1D architecture), where $\det{A} = \det{B}$.

A quantum state is called a fermionic Gaussian state, if it may be obtained by the action of a FLO unitary on the initial, all zero vacuum state. Every fermionic Gaussian statevector $\ket{\psi}$ is conveniently described (up to a phase) by its covariance matrix 
\begin{align}
    M_{jk} = -\frac{i}{2}\bra{\psi} c_jc_k -c_k c_j\ket{\psi}\ .
\end{align}
which is anti-symmetric and real (due to the anti-commutation of the Majorana operators). The vacuum state $\ket{0}$ has a covariance matrix
\begin{align}
    M = \bigoplus_{j=0}^{n-1} \begin{pmatrix} 0 & 1 \\ -1 & 0 \end{pmatrix} .
\end{align}
Also, if the correlation matrix of $\ket{\psi}$ is $M$, and $U$ is a FLO unitary, then the correlation matrix of $U\ket{\psi}$ is 
\begin{align}
    \tilde{M} = \phi(U) M \phi(U)^T.
\end{align}
A real matrix $M$ of appropriate dimension is the correlation matrix of a pure fermionic Gaussian state if and only if $M + M^T = 0$ and $MM^T = I$. If one, in addition, considers mixed fermionic Gaussian states, then the second condition is weakened to $MM^T \leq I$.


\subsection{Passive and anti-passive FLO unitaries}

Our phase-sensitive FLO simulator will be based on a particular decomposition of every $U\in\operatorname{FLO}(n)$ into elements belonging to one of the two subsets of $\operatorname{FLO}(n)$. In order to define them, and also for future convenience, we need some notation. First, let
$N=\sum_{j=0}^{n-1} (I - i c_{2j}c_{2j+1})$  be the number operator counting how many fermions there are in our system. Let $h$ be the $2n\times 2n$ complex matrix defined by $h_{kj} = \bra{0}\maj{j}\maj{k}\ket{0}$ and note that $N = \sum_{jk} h_{jk}\maj{j}\maj{k}$. By expanding the Majorana operators in terms of creation and annihilation operators, one can demonstrate that
\begin{align}
  h = I_{2n} + I_n \otimes\sigma_y,\label{eqn:h-splits-into-ident-and-omega}
\end{align}
implying that the standard symplectic form
\begin{align}
  \Omega = I_n \otimes i\sigma_y, \label{eqn:standard-symplectic-form}
\end{align}
is a complex linear combination of $h$ and the $2n\times 2n$ identity matrix. Finally note that $i\sigma_y$ and $I_n \otimes i\sigma_y$ are real antisymmetric matrices.  

We then have the following definitions.
\begin{defi}[Passive FLO unitaries $\mathcal{K}$]
   The subgroup of \emph{passive FLO unitaries} $\mathcal{K}$ consists of all FLO unitaries with a generating matrix $\alpha$ satisfying $\com{\alpha}{\Omega} = 0$.
\end{defi}
\begin{defi}[Anti-passive FLO unitaries $\mathcal{P}$]
   The subset of \emph{anti-passive FLO unitaries} $\mathcal{P}$ consists of all FLO unitaries with a generating matrix $\alpha$ satisfying $\acom{\alpha}{\Omega} = 0$.
\end{defi}
Note that the above definition of passive fermionic linear optics is equivalent to the usual one, in which one defines this group as the unitaries of the form $U=\exp(i\sum_{k,l} X_{kl} a_k^\dagger a_l)$, for a Hermitian $n\times n$ matrix $X$. We summarise some properties of the subgroup $\mathcal{K}$ and subset $\mathcal{P}$ in the following lemmas, the proofs of which can be found in Appendix~\ref{app:properties}.

\begin{lem}[Properties of $\mathcal{K}$]\label{lem:props-of-k-type-flo}
  If $U_{t} = \exp{\left(\frac{t}{4}\sum_{jk}\alpha_{jk} \maj{j}\maj{k}\right)}$ is a one parameter group generated by some real antisymmetric matrix $\alpha$, then the following statements are equivalent:
  \begin{enumerate}
  \item $U_t \in \mathcal{K}$, $\forall t \in \mathbb{R}$.
  \item $\Omega\exp\left(t\alpha\right) \Omega^T = \exp{\left(t\alpha\right)} $, $\forall t \in \mathbb{R}$.
  \item $U_t N U_t^\dagger = N$, $\forall t \in \mathbb{R}$.
  \item $U_t \ket{0} \propto\ket{0}$, $\forall t \in \mathbb{R}$.
  \item $\alpha = A\otimes I + B\otimes i\sigma_y$ and the complex matrix $A + iB$ is anti-Hermitian, $(A + iB)^\dagger = -A - iB$.
  \item $\exp\left(t\alpha\right) = c_t\otimes I + s_t\otimes i\sigma_y$ and the $n\times n$ complex matrices $c_t + i s_t$ are unitary.
  \end{enumerate}
\end{lem}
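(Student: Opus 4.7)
The plan is to organise the six equivalences around statement (2) as a hub, proving three chains: $(1)\!\Leftrightarrow\!(2)$ directly from the definition of $\mathcal{K}$; a structural chain $(2)\!\Leftrightarrow\!(5)\!\Leftrightarrow\!(6)$; and a physical chain comprising $(2)\!\Leftrightarrow\!(4)$ and $(2)\!\Leftrightarrow\!(3)$.

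For $(1)\!\Leftrightarrow\!(2)$: by definition of $\mathcal{K}$, (1) is the condition $[\alpha,\Omega]=0$. Since $i\sigma_y$, and thus $\Omega$, is orthogonal, $\Omega e^{t\alpha}\Omega^T=e^{t\Omega\alpha\Omega^T}$, which equals $e^{t\alpha}$ for all $t$ iff $\Omega\alpha\Omega^T=\alpha$.

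For $(2)\!\Leftrightarrow\!(5)$ I will decompose $\alpha$ in the real basis $\{I_2,i\sigma_y,\sigma_x,\sigma_z\}$ of $M_2(\mathbb{R})$ on the second tensor factor,
\begin{equation*}
\alpha = A\otimes I + B\otimes i\sigma_y + C\otimes\sigma_x + D\otimes\sigma_z.
\end{equation*}
Using $[\sigma_x,\sigma_y]=2i\sigma_z$ and $[\sigma_z,\sigma_y]=-2i\sigma_x$, I obtain $[\alpha,\Omega]=-2C\otimes\sigma_z+2D\otimes\sigma_x$, which vanishes iff $C=D=0$. Antisymmetry of $\alpha$ then forces $A$ antisymmetric and $B$ symmetric, which is precisely the anti-Hermiticity of $A+iB$. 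For $(5)\!\Leftrightarrow\!(6)$ I will use the $\mathbb{R}$-algebra isomorphism $A\otimes I + B\otimes i\sigma_y \mapsto A+iB$ between $\{A\otimes I+B\otimes i\sigma_y:A,B\in M_n(\mathbb{R})\}$ and $M_n(\mathbb{C})$, verified by $(i\sigma_y)^2=-I$. Under it, $\exp(t\alpha)$ corresponds to $\exp(t(A+iB))$, and anti-Hermitian generators produce unitary one-parameter groups $c_t+is_t$, yielding (6).

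For $(2)\!\Leftrightarrow\!(4)$ I will use that $\ket{0}$ is the pure fermionic Gaussian state with covariance matrix $\Omega$, that FLO evolution transforms covariance by $M\mapsto \phi(U_t)M\phi(U_t)^T = e^{t\alpha}Me^{-t\alpha}$ (using $\alpha^T=-\alpha$), and that pure Gaussian states are determined by their covariance matrices up to a phase; hence $U_t\ket{0}\propto\ket{0}$ for all $t$ iff $e^{t\alpha}\Omega e^{-t\alpha}=\Omega$ iff $[\alpha,\Omega]=0$. For $(2)\!\Leftrightarrow\!(3)$ I will identify $N$ modulo an additive constant with a quadratic Majorana Hamiltonian whose generating matrix is a nonzero real multiple of $\Omega$. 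Concretely, since the antisymmetric part of $h=I+I_n\otimes\sigma_y$ equals $I_n\otimes\sigma_y=-i\Omega$, recasting $N-\tr(h)I$ in the FLO form $\tfrac{i}{4}\sum\alpha_{ij}^{(N)}c_ic_j$ yields $\alpha^{(N)}=-4\Omega$, so $\phi(e^{itN})=e^{-4t\Omega}$. Commutation $[U_t,e^{isN}]=0$ then translates at the $\phi$ level to $[\alpha,\Omega]=0$; this lifts to unitaries via continuity in $(s,t)$ and discreteness of $\ker\phi=\{\pm I\}$, and (3) follows by differentiating in $s$.

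The main obstacle will be the bookkeeping in $(2)\!\Leftrightarrow\!(3)$: carefully matching the $i/4$ prefactor of the FLO convention against the definition of $N$ via $h$, and lifting commutation of one-parameter $\phi$-images to commutation of the corresponding unitaries using $\ker\phi=\{\pm I\}$. The other equivalences reduce to elementary algebra with $2\times 2$ Pauli matrices together with the Gaussian-state covariance formalism already developed in the preliminaries.
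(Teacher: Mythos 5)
Your proposal is correct, and for the algebraic equivalences it essentially coincides with the paper's (terse) argument: $(1)\Leftrightarrow(2)$ via orthogonality of $\Omega$, and $(2)\Leftrightarrow(5)$ via the Pauli decomposition of the second tensor factor, are exactly the paper's route; for $(5)\Leftrightarrow(6)$ the paper checks the special-orthogonality conditions on $c_t,s_t$ directly rather than invoking the algebra isomorphism onto $M_n(\mathbb{C})$, but this is the same computation (just make sure you also state the easy converse $(6)\Rightarrow(5)$ by differentiating at $t=0$). Where you genuinely diverge is in statements $(3)$ and $(4)$. The paper chains $(1)\Rightarrow(3)\Rightarrow(4)$: it obtains $(3)$ by writing $N$, up to an additive multiple of $I$, as the quadratic form associated with $-i\Omega$ and conjugating the Majorana operators directly, so that
\begin{align*}
U_t\Bigl(\sum_{ij}\Omega_{ij}c_ic_j\Bigr)U_t^\dagger=\sum_{kl}\bigl(\phi(U_t)^T\Omega\,\phi(U_t)\bigr)_{kl}\,c_kc_l
\end{align*}
reduces $U_tNU_t^\dagger=N$ to the matrix identity of statement $(2)$ with no need to exponentiate $N$; and it deduces $(4)$ from $(3)$ by noting that $\ket{0}$ spans a one-dimensional eigenspace of $N$, which any unitary commuting with $N$ must preserve. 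You instead prove $(2)\Leftrightarrow(4)$ through covariance matrices (legitimate, since the preliminaries assert that a pure fermionic Gaussian state is determined up to phase by its covariance matrix; the sign ambiguity in whether $\phi(U_t)$ equals $e^{t\alpha}$ or $e^{-t\alpha}$ under the paper's conventions is immaterial to the equivalence), and $(2)\Leftrightarrow(3)$ through the one-parameter group $e^{isN}$ combined with continuity in $(s,t)$ and $\ker\phi=\{\pm I\}$. Both alternatives are sound, and your bookkeeping $\alpha^{(N)}=-4\Omega$ is consistent with the paper's normalisations; however, the direct conjugation displayed above makes the lifting machinery in your $(2)\Leftrightarrow(3)$ unnecessary and is closer to what the paper means by ``writing $N$ in terms of Majorana operators''.
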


\begin{lem}[Properties of $\mathcal{P}$]\label{lem:props-of-p-type}
  If $U_{t} = \exp{\left(\frac{t}{4}\sum_{jk}\beta_{jk} \maj{j}\maj{k}\right)}$ is a one parameter group generated by some real antisymmetric matrix $\beta$, then the following statements are equivalent:
  \begin{enumerate}
  \item $U_t \in \mathcal{P}$, $\forall t \in \mathbb{R}$.
  \item $\Omega \exp\left(t\beta \right)\Omega^{\dagger} = \exp\left(-t\beta \right)$, $\forall t\in\mathbb{R}$.
  \item $\beta = X\otimes\sigma_x + Z \otimes\sigma_z$ for some real, antisymmetric matrices $X$ and $Z$.
  \end{enumerate}
\end{lem}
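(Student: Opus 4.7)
The plan is to prove $(1) \Leftrightarrow (2)$ and $(1) \Leftrightarrow (3)$ separately. By definition of $\mathcal{P}$, condition (1) simply says that $\acom{\beta}{\Omega} = 0$, since rescaling $\beta \mapsto t\beta$ preserves the anticommutation. I would also record two elementary facts about $\Omega = I_n \otimes i\sigma_y$ that will be used throughout: it is real antisymmetric, so $\Omega^T = \Omega^\dagger = -\Omega$, and orthogonal, $\Omega \Omega^T = I_{2n}$.

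For $(1) \Leftrightarrow (2)$: Differentiating the identity in (2) at $t = 0$ gives $\Omega \beta \Omega^\dagger = -\beta$. Multiplying on the right by $\Omega$ and using $\Omega^\dagger \Omega = I$ immediately produces $\acom{\beta}{\Omega} = 0$. For the converse, if $\acom{\beta}{\Omega} = 0$ then a short induction gives $\Omega \beta^k \Omega^\dagger = (-\beta)^k$ for every $k \geq 0$, so the relation lifts through the power series of the exponential and yields $\Omega e^{t\beta} \Omega^\dagger = e^{-t\beta}$ for all $t$.

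For $(1) \Leftrightarrow (3)$: The key observation is that $\{I, \sigma_x, i\sigma_y, \sigma_z\}$ is a basis for the space of real $2 \times 2$ matrices, so I would expand
\begin{align}
\beta = A \otimes I + B \otimes \sigma_x + C \otimes i\sigma_y + D \otimes \sigma_z,
\end{align}
with $A, B, C, D$ real $n \times n$ matrices. Since $I, \sigma_x, \sigma_z$ are symmetric while $i\sigma_y$ is antisymmetric, the constraint $\beta^T = -\beta$ decouples into $A, B, D$ antisymmetric and $C$ symmetric. I would then compute $\acom{\beta}{\Omega}$ directly using $\acom{I}{i\sigma_y} = 2 i\sigma_y$, $\acom{i\sigma_y}{i\sigma_y} = -2I$, and $\acom{\sigma_x}{i\sigma_y} = \acom{\sigma_z}{i\sigma_y} = 0$. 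The cross terms from $B$ and $D$ cancel (because $\sigma_x$ and $\sigma_z$ anticommute with $\sigma_y$), leaving $\acom{\beta}{\Omega} = 2 A \otimes i\sigma_y - 2 C \otimes I$. Since $I$ and $i\sigma_y$ are linearly independent, the anticommutation condition forces $A = 0$ and $C = 0$, whence $\beta = B \otimes \sigma_x + D \otimes \sigma_z$ with $B, D$ real antisymmetric; setting $X = B$ and $Z = D$ gives (3). The reverse implication is obtained by reading the same calculation backwards.

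There is no real obstacle here; the only place requiring care is the bookkeeping of which basis elements are symmetric versus antisymmetric, since this is what ensures that the antisymmetry of $\beta$ and the anticommutation with $\Omega$ reduce to independent constraints on the four blocks. Once the basis $\{I, \sigma_x, i\sigma_y, \sigma_z\}$ is chosen, everything else is a short Pauli-algebra computation.
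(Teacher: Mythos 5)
Your proposal is correct and follows essentially the same route as the paper: conjugation by the orthogonal $\Omega$ converts the anticommutation relation into $\Omega\beta\Omega^\dagger=-\beta$, which exponentiates to statement (2), and statement (3) follows from expanding $\beta$ in the Pauli basis of $2\times 2$ matrices and matching the symmetry and anticommutation constraints. The only (cosmetic) difference is that you use the real basis $\{I,\sigma_x,i\sigma_y,\sigma_z\}$ rather than the complex Pauli decomposition the paper invokes, which keeps the coefficient matrices manifestly real; the detailed bookkeeping you supply is exactly what the paper's terse proof leaves implicit.
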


Although anti-passive FLO unitaries do not form a group, using the following result (the proof of which can be found in Appendix~\ref{app:properties}), one can focus on a subset of $\mathcal{P}$ with a group structure.

\begin{lem}[Decomposition of $\mathcal{P}$]\label{lem:decomp-p-type}
  If $n$ is even, and $\beta$ is a $2n\times 2n$ real, antisymmetric matrix satisfying $\acom{\beta}{\Omega} = 0$, then there exists a real antisymmetric matrix $\alpha$ such that $\com{\alpha}{\Omega} = 0$ and
  \begin{align}
    e^{\alpha} \beta  e^{-\alpha}  =A
  \end{align}
  with
  \begin{equation}
    \label{eq:A_generator}
      A = \Lambda\otimes i\sigma_y\otimes \sigma_z
  \end{equation}
  for some real diagonal matrix $\Lambda$ of size $n/2$.
\end{lem}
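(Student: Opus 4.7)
My plan is to first apply Lemma~\ref{lem:props-of-p-type} to write $\beta = X\otimes\sigma_x + Z\otimes\sigma_z$ for real antisymmetric $n\times n$ matrices $X,Z$, and to parametrise every candidate conjugating element via Lemma~\ref{lem:props-of-k-type-flo}: any $e^{\alpha}\in\mathcal{K}$ has the form $K = c\otimes I + s\otimes i\sigma_y$, where $M := c + is$ ranges over the full unitary group $\mathrm{U}(n)$ as $\alpha$ ranges over the allowed generators (using surjectivity of $\exp:\mathfrak{u}(n)\to\mathrm{U}(n)$). The problem thus reduces to choosing $M$ so that conjugation by $K$ kills the $\sigma_x$ component of $\beta$ and normalises the $\sigma_z$ component.

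The next step is a direct computation of $K\beta K^{-1}$. Using $K^{-1}=c^{T}\otimes I - s^{T}\otimes i\sigma_y$ together with the Pauli identities $i\sigma_y\sigma_x=\sigma_z$ and $i\sigma_y\sigma_z=-\sigma_x$, the $\sigma_y$ terms all cancel and one is left with
\[ K\beta K^{-1} = X'\otimes\sigma_x + Z'\otimes\sigma_z, \]
where
\[ X' = (cX-sZ)c^{T}-(cZ+sX)s^{T},\qquad Z' = (cZ+sX)c^{T}+(cX-sZ)s^{T}. \]
The key observation, found by expanding and regrouping, is that these two real formulas collapse into the single compact complex identity
\[ Z' + iX' = \bar M\,(Z+iX)\,\bar M^{T}, \]
so that the complex antisymmetric matrix $W := Z+iX$ transforms by congruence under the unitary $\bar M$. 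Verifying this identity is the algebraic heart of the proof and the main place where signs have to be tracked carefully; everything else flows from it.

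The conclusion then follows from the Autonne--Takagi (Youla) normal form for complex antisymmetric matrices: any such $W$ admits a unitary $V$ with $V W V^{T} = \Lambda\otimes i\sigma_y$, where $\Lambda = \operatorname{diag}(\lambda_1,\dots,\lambda_{n/2})$ is real diagonal. This is exactly the point at which the hypothesis that $n$ is even is used --- for odd $n$ the Youla form would leave an unpaired zero row and column, spoiling the tensor factorisation. Setting $M = \bar V$ (so that $\bar M = V$) yields $X'=0$ and $Z'=\Lambda\otimes i\sigma_y$, and hence $K\beta K^{-1} = \Lambda\otimes i\sigma_y\otimes\sigma_z = A$. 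To finish, lift $M$ to an anti-Hermitian $A_0+iB_0$ with $\exp(A_0+iB_0)=M$ and set $\alpha = A_0\otimes I + B_0\otimes i\sigma_y$; by Lemma~\ref{lem:props-of-k-type-flo} this $\alpha$ is real antisymmetric, commutes with $\Omega$, and satisfies $e^{\alpha}=K$, completing the proof.
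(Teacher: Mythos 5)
Your proposal is correct and follows essentially the same route as the paper: both reduce $\beta$ to the form $X\otimes\sigma_x+Z\otimes\sigma_z$ via Lemma~\ref{lem:props-of-p-type}, conjugate by a passive element $c\otimes I+s\otimes i\sigma_y$ so that the complex antisymmetric matrix $Z+iX$ transforms by unitary congruence, and then invoke the Hua--Youla normal form to produce $\Lambda\otimes i\sigma_y$. Your write-up is in fact slightly more complete than the paper's, since you make explicit where the evenness of $n$ enters and how the unitary is lifted back to a generator $\alpha$ with $[\alpha,\Omega]=0$.
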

If $n$ is odd, then an analogue of Lemma~\ref{lem:decomp-p-type} still holds but $A$ is padded by an additional row and column where every element is zero. Lemma~\ref{lem:decomp-p-type} suggests then a useful subset $\mathcal{A}$ of $\mathcal{P}$.
\begin{defi}[Commuting anti-passive FLO unitaries $\mathcal{A}$]
   The subgroup of \emph{commuting anti-passive FLO unitaries} $\mathcal{A}$ consists of all FLO unitaries with a generating matrix \changed{of the form of} $A$ from Eq.~\eqref{eq:A_generator}.
\end{defi}
All elements of $\mathcal{A}$ commute with each other, so it is easy to verify that $\mathcal{A}$ forms a subgroup of $\operatorname{FLO}(n)$ in contrast to $\mathcal{P}$, which is just a subset.


\section{Matchgate magic states and FLO extent}
In addition to the fermionic Gaussian states, we distinguish a class of states that we will call \emph{matchgate magic states}~\cite{PhysRevA.73.042313,deMelo2013,Oszmaniec2014}. Such states, through the use of gadgetization scheme, allow one to promote quantum circuits composed of FLO unitaries to universal quantum circuits. We note that, in order to be used in gadgetization schemes, a magic state must be \emph{fermionic} according to the definition of Ref.~\cite{Yoganathan-2019-magic-for-matchgates} i.e., be an eigenstate of the parity operator. Otherwise, it is not possible to move it to the correct position in the circuit and be consumed in a gadget~\cite{Yoganathan-2019-magic-for-matchgates}. In addition, all pure fermionic states of 1,2 or 3 qubits are Gaussian, as shown in Ref.~\cite{bravyi-2005-fermionic-product}. Thus, the simplest example of a magic state is a $4$ qubit entangled state. In particular, in this paper we will employ the following class of magic states: 
\begin{align}
  \ket{M_\theta} &= \frac{1}{2}\left(\ket{0000} + \ket{1100} + \ket{0011} + e^{i\theta}\ket{1111}\right).\label{eqn:Yoganathan-magic-states}
\end{align}
We call such states controlled-phase magic states, since they are consumed by the gadget shown in Fig.~\ref{fig:gadget1} to implement controlled-phase gates,
\begin{equation}
    C(\theta)=\ketbra{00}{00}+\ketbra{01}{01}+\ketbra{10}{10}+e^{i\theta}\ketbra{11}{11}.
\end{equation}

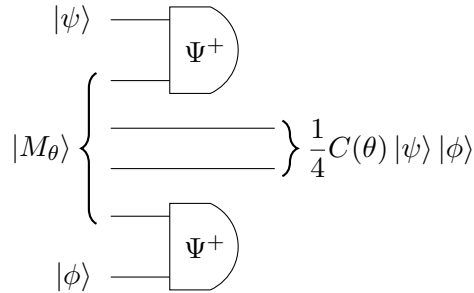
\begin{figure}
    \centering
    \begin{tikzpicture} [thick]
    \pgfdeclarelayer{background}
    \pgfdeclarelayer{foreground}
    \pgfsetlayers{background,main,foreground}
    \tikzstyle{operator} = [draw,fill=white,minimum size=1.5em] 
    \tikzstyle{phase} = [draw,fill,shape=circle,minimum size=0.2cm,inner sep=0pt]
    \tikzstyle{surround} = [fill=blue!10,draw=black,rounded corners=2mm]
    \tikzstyle{multiQubitMeasurement} = [fill=white,draw=black,shape=rounded rectangle,rounded rectangle left arc=none,rounded rectangle arc length=130]
    \matrix[row sep=0.5cm, column sep=1cm] (circuit) {
    \node[label=left:{$\ket{\psi}$}] (q1) {}; & 
    \node[phase] (M11) {}; &
    &[-0.3cm]
    \coordinate (end1); \\
    \node (q2) {}; & 
    \node[phase] (M12) {}; &
    &
    \coordinate (end2);\\
    \node (q3) {}; & &
    \coordinate (end3);\\
    \node (q4) {}; & &
    \coordinate (end4);\\
    \node (q5) {}; & 
    \node[phase] (M21) {}; &
    &[-0.3cm]
    \coordinate (end5); \\
    \node[label=left:{$\ket{\phi}$}] (q6) {};  &
    \node[phase] (M22) {}; &
    &[-0.3cm]
    \coordinate (end6); \\
    };
    
    \draw[decorate,decoration={brace,amplitude=0.2cm}]
        ($(end3)+(0.1cm,0.1cm)$)
        to node[midway,right] (bracket) {$\,\,\,\,\displaystyle \frac{1}{4}C(\theta)\ket{\psi}\ket{\phi}$}
        ($(end4)+(0.1cm,-0.1cm)$);
        
    \draw[decorate,decoration={brace,amplitude=0.2cm,mirror}]
        ($(q2)+(-0.1cm,0.1cm)$)
        to node[midway,left] (bracket) {$\displaystyle\ket{M_\theta}\,\,\,\,$}
        ($(q5)+(-0.1cm,-0.1cm)$);
        
    \begin{pgfonlayer}{background}
        \draw[] (q1) -- (M11)  (q2) -- (M12) (q3) -- (end3) (q4) -- (end4) (q5) -- (M21) (q6) -- (M22); 
    \end{pgfonlayer}
    \begin{pgfonlayer}{foreground}
        \node[multiQubitMeasurement, fit = (M11) (M12)] (M1) {};
        \node[right=-0.15cm of M1.base] {$\Psi^+$};
        \node[multiQubitMeasurement, fit = (M21) (M22)] (M2) {};
        \node[right=-0.15cm of M2.base] {$\Psi^+$};
    \end{pgfonlayer}
    \end{tikzpicture}
    \caption{\label{fig:gadget1} \textbf{Gadget for $C(\theta)$.} The gadget described in Ref.~\cite{Yoganathan-2019-magic-for-matchgates} to implement a controlled-phase gate. Each gadget is adapted for the use of classical simulation, where correction gates are uneccessary, and one can instead simply project onto the ``success'' outcome of the measurement with $\ket{\Psi^+} = (\ket{00} + \ket{11})/\sqrt{2}$. There are $16$ possible combined outcomes of the two $4$ outcome measurements for each gadget. Since each outcome occurs with equal probability projecting onto the the all zero outcome divides the norm of the state by $4 = \sqrt{16}$ in addition to implementing the unitary $C(\theta)$.}
\end{figure}

The following notion will be used to quantify how far from a fermionic Gaussian state a given state is.

\begin{defi}[FLO extent]
\label{def:extent}
  For any pure state $\ket{\psi}$, the \emph{FLO extent} $\xi(\ket{\psi})$ is the infimum of $\norm{a}_1^2$ over all finite dimensional complex vectors $a$ such that
  \begin{align}
    \ket{\psi} = \sum_j a_j \ket{s_j},\label{eqn:def-extent-decompostion}
  \end{align}
  for some set of normalized fermionic Gaussian states labelled $\ket{s_j}$.
\end{defi}
The notion of extent was introduced in Ref.~\cite{bravyi-2019} in the context of simulation algorithms based on Clifford/stabilizer subtheory of quantum mechanics, and our definition~\ref{def:extent} is a simple generalisation to the case of fermionic linear optics. In papers in the Clifford/stabilizer literature, such as Refs.~\cite{bravyi-2019,PRXQuantum.3.020361,PRXQuantum.2.010345}, the extent has emerged as a key quantity bounding the runtime of simulation algorithms in terms of the ``nonclassicality'' of the computation being simulated. In Ref.~\cite{bravyi-2019}, the exponential component of the runtime is controlled by a quantity known as the approximate stabilizer rank, which in turn is upper-bounded by the extent; while the runtime of the key algorithm of Ref.~\cite{bravyi-2019,PRXQuantum.3.020361} is found in terms of the extent directly. We emphasise that the quantity we use here is the fermionic linear optical or \emph{FLO extent}, while that studied in the literature we cite above is the \emph{stabilizer extent}. The difference lies in the set of states forming the decomposition in Eq.~\eqref{eqn:def-extent-decompostion}. While some properties (such as sub-multiplicativity) are identical, it is not obvious how the two differ.

Since a key step of our simulation algorithm is based on the decomposition of tensor products of matchgate magic states from Eq.~\eqref{eqn:Yoganathan-magic-states} as superpositions of fermionic Gaussian states, we will now present some basic properties of the FLO extent, together with a calculation of the extent of $\ket{M_\theta}$. First, there is the easy to show \emph{sub-multiplicative} bound
\begin{align}
  \ext{\bigotimes_j \ket{\psi_j}} \leq \prod_j \ext{\ket{\psi_j}}.
\end{align}
Next, using standard methods of convex optimization (also employed in Ref.~\cite{bravyi-2019}), we obtain the following result.
\begin{lem}[Dual problem for the extent]\label{lem:dual-problem-for-extent}
  For any pure state $\ket{\psi}$ of $n$ qubits, we have
  \begin{align}
    \ext{\ket{\psi}} = \max_\omega \frac{\abs{\braket{\omega}{\psi}}^2}{\fid{\ket{\omega}}},\label{eqn:extent-dual-problem}
  \end{align}
  where the max is over all normalized $n$-qubit pure states $\ket{\omega}$, and
  \begin{align}
    \fid{\ket{\omega}} &= \max_{U\in\text{FLO}(n)} \abs{\bra{\omega}U\ket{0}}^2
  \end{align}
  is the fermionic linear optical \emph{fidelity}, i.e., the greatest fidelity between $\ket{\omega}$ and any fermionic Gaussian state.
\end{lem}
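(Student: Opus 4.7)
The plan is to identify the FLO extent with the square of an atomic norm on the $n$-qubit Hilbert space and then invoke finite-dimensional biduality, closely following the treatment of stabilizer extent in Ref.~\cite{bravyi-2019}. Let $S$ denote the set of normalized fermionic Gaussian states and define the atomic norm
\begin{align*}
  \norm{\psi}_A := \inf\sbuild{\norm{a}_1}{\ket{\psi} = \sum_j a_j \ket{s_j},\ \ket{s_j}\in S},
\end{align*}
so that $\xi(\ket{\psi}) = \norm{\psi}_A^2$ by Definition~\ref{def:extent}.

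The easy direction proceeds by an $L^1$--$L^\infty$ pairing: for any admissible decomposition of $\ket{\psi}$ and any normalized $\ket{\omega}$, the triangle inequality yields
\begin{align*}
  \abs{\braket{\omega}{\psi}} \leq \sum_j \abs{a_j}\abs{\braket{\omega}{s_j}} \leq \norm{a}_1 \sqrt{F_{\operatorname{FLO}}(\ket{\omega})},
\end{align*}
since each $\ket{s_j}\in S$. Taking the infimum over decompositions and then the supremum over normalized $\ket{\omega}$ gives $\sup_\omega \abs{\braket{\omega}{\psi}}^2/F_{\operatorname{FLO}}(\ket{\omega}) \leq \xi(\ket{\psi})$.

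For the reverse inequality I would appeal to convex duality. The closed unit ball of $\norm{\cdot}_A$ is the balanced convex hull
\begin{align*}
  K = \overline{\operatorname{conv}}\left\{e^{i\phi}\ket{s} : \ket{s}\in S,\ \phi\in[0,2\pi)\right\},
\end{align*}
which is compact and absolutely convex because $S$ is the continuous image of the compact Lie group $\operatorname{FLO}(n)$ under $U\mapsto U\ket{0}$. Hence $\norm{\cdot}_A$ is a bona fide norm on $\operatorname{span}(S)$, whose dual norm satisfies
\begin{align*}
  \norm{\omega}_A^* = \max_{\ket{v}\in K}\abs{\braket{\omega}{v}} = \max_{\ket{s}\in S}\abs{\braket{\omega}{s}} = \sqrt{F_{\operatorname{FLO}}(\ket{\omega})},
\end{align*}
where the second equality uses convexity of $\ket{v}\mapsto\abs{\braket{\omega}{v}}$ to push the supremum onto the extreme points. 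Biduality in finite dimensions then delivers $\norm{\psi}_A = \max_{\ket{\omega}\neq 0}\abs{\braket{\omega}{\psi}}/\norm{\omega}_A^*$; squaring, and using scale invariance of the ratio to restrict to normalized $\ket{\omega}$, produces Eq.~\eqref{eqn:extent-dual-problem}.

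The main technical nuisance I anticipate is handling $\operatorname{span}(S)$ cleanly: since every $\ket{s}\in S$ has the form $U\ket{0}$ with $U\in\operatorname{FLO}(n)$, and FLO unitaries preserve fermionic parity, Gaussian states only span the even-parity subspace. For the states of interest here—in particular $\ket{M_\theta}$ from Eq.~\eqref{eqn:Yoganathan-magic-states}, which is manifestly of even parity—this restriction is benign and I would simply note it. Equivalently, if $\ket{\psi}\notin\operatorname{span}(S)$ then both sides are naturally $+\infty$: the primal because no decomposition exists, and the dual by picking $\ket{\omega}\perp\operatorname{span}(S)$ so that $F_{\operatorname{FLO}}(\ket{\omega}) = 0$ while $\braket{\omega}{\psi}\neq 0$.
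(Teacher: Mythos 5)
Your proposal is correct and is exactly the ``standard convex optimization'' argument the paper gestures at: identify $\xi$ with the square of the atomic norm generated by the compact set of Gaussian states, compute its dual norm as $\sqrt{F_{\operatorname{FLO}}}$, and invoke finite-dimensional biduality, precisely as in Ref.~\cite{bravyi-2019} for the stabilizer extent. In fact the paper never writes this argument out (Appendix~\ref{app:extent} only proves Lemma~\ref{lem:extent-for-F3}), so your write-up -- including the correct observation that Gaussian states span only the even-parity sector, which the lemma's statement silently glosses over -- supplies details the paper omits.
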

Also, by adapting the method used in proving Proposition~$2$ of Ref.~\cite{bravyi-2019}, we obtain the following lemma, the proof of which can also be found in Appendix~\ref{app:extent}.
\begin{lem}\label{lem:extent-for-F3}
  If $\ket{\psi}$ is a pure state of $n$ qubits such that 
  \begin{align}
    \ketbra{\psi}{\psi} &= \frac{1}{N}\sum_{j= 1}^{N}  V_j,
  \end{align}
  where each $V_j$ belongs to $\operatorname{FLO}(n)$, then
  \begin{align}
    \ext{\ket{\psi}} &= \frac{1}{\fid{\ket{\psi}}}.
  \end{align}
\end{lem}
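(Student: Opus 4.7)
My plan is to prove matching upper and lower bounds on $\xi(\ket{\psi})$, both equal to $1/F(\ket{\psi})$, and conclude by squeeze. The lower bound is immediate from Lemma~\ref{lem:dual-problem-for-extent}: taking the test state $\ket{\omega} = \ket{\psi}$ in the dual problem~\eqref{eqn:extent-dual-problem} gives
\begin{align}
    \xi(\ket{\psi}) \;\geq\; \frac{\abs{\braket{\psi}{\psi}}^2}{F_{\operatorname{FLO}}(\ket{\psi})} \;=\; \frac{1}{F(\ket{\psi})}.
\end{align}

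The upper bound comes from turning the operator identity $\ketbra{\psi}{\psi} = \frac{1}{N}\sum_i V_i$ directly into a Gaussian decomposition of $\ket{\psi}$. I would let $\tilde U \in \operatorname{FLO}(n)$ attain the fidelity maximum, so $F(\ket{\psi}) = \abs{\bra{\psi}\tilde U\ket{0}}^2$, and set $c := \bra{\psi}\tilde U\ket{0}$. Evaluating both sides of the hypothesis on the Gaussian reference $\tilde U\ket{0}$ yields
\begin{align}
    c\,\ket{\psi} \;=\; \frac{1}{N}\sum_{i=1}^N V_i \tilde U\ket{0}.
\end{align}
Each vector $V_i\tilde U\ket{0}$ is a normalized fermionic Gaussian state since $V_i \tilde U \in \operatorname{FLO}(n)$, so dividing through by $c$ exhibits $\ket{\psi}$ as a length-$N$ Gaussian decomposition with equal coefficients $a_i = 1/(Nc)$. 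The corresponding $\ell_1$-norm is $\norm{a}_1 = 1/\abs{c}$, and Definition~\ref{def:extent} gives
\begin{align}
    \xi(\ket{\psi}) \;\leq\; \norm{a}_1^{\,2} \;=\; \frac{1}{\abs{c}^2} \;=\; \frac{1}{F(\ket{\psi})},
\end{align}
matching the lower bound.

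The only delicate point is ensuring $c \neq 0$, equivalently $F(\ket{\psi}) > 0$. For the states of interest, such as the matchgate magic states $\ket{M_\theta}$ of Eq.~\eqref{eqn:Yoganathan-magic-states}, this holds automatically because they lie in the even-parity sector, which is spanned by fermionic Gaussian states (the two-qubit states $\ket{00}$ and $\ket{11}$ are both Gaussian, and tensor products of Gaussians remain Gaussian). In the degenerate case $F(\ket{\psi}) = 0$, the claimed identity should be read on the extended real line with both sides equal to $+\infty$, which is exactly what the dual-problem lower bound produces. The main conceptual step is therefore the observation that the assumed identity can be collapsed onto \emph{any} Gaussian reference to yield a decomposition, and that the extent-optimal reference is the fidelity maximiser $\tilde U\ket{0}$; once Lemma~\ref{lem:dual-problem-for-extent} is available, everything else is routine algebra with no further optimisation required.
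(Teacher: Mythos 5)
Your proof is correct and takes essentially the same route as the paper's: both obtain the lower bound by substituting $\ket{\omega}=\ket{\psi}$ into the dual problem of Lemma~\ref{lem:dual-problem-for-extent}, and the upper bound by applying the operator identity to the fidelity-maximising Gaussian state to read off an equal-coefficient decomposition with $\norm{a}_1^2 = 1/F(\ket{\psi})$. Your explicit attention to the degenerate case $F(\ket{\psi})=0$ is a minor refinement the paper leaves implicit.
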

Although the assumptions of Lemma~\ref{lem:extent-for-F3} may seem unmotivated, in fact it covers a very important special case, i.e., augmenting FLO operations with \emph{stabilizer state} inputs. While neither fermionic linear optics nor Clifford/stabilizer quantum mechanics is universal for quantum computation, their combination is. In particular, it is well-known~\cite{PhysRevA.73.042313} that augmenting fermionic linear optics with either the SWAP or controlled-Z gate suffices for universal quantum computation. We note that the magic state shown in Eq.~\eqref{eqn:Yoganathan-magic-states} falls into this category exactly when $\theta$ is a multiple of $\pi$.

Concerning the extent of the particularly important family of magic states $\ket{M_\theta}$, we have the following result. 

\begin{thm}[Decomposition of magic states for controlled-phase gates]\label{thm:decomposition-of-magic-states}
The magic state $\ket{M_\theta}$ from Eq.~\eqref{eqn:Yoganathan-magic-states} may be decomposed as a sum of two fermionic Gaussian states
\begin{align}
  \ket{M_\theta} = &\cos\left(\frac{\theta}{4}\right)\ket{A(\theta)} + i\sin\left(\frac{\theta}{4}\right)\ket{B(\theta)},\label{eqn:magic-state-orthogonal-decomposition}
\end{align}
where 
\begin{align}
    \ket{A(\theta)} &= \frac{1}{2}\left(e^{-i\frac{\theta}{4}}\ket{0000} + e^{i\frac{\theta}{4}}\ket{0011} + e^{i\frac{\theta}{4}}\ket{1100} + e^{i\frac{3\theta}{4}}\ket{1111}\right),\\
    \ket{B(\theta)} &=\frac{1}{2}\left(e^{-i\frac{\theta}{4}}\ket{0000} - e^{i\frac{\theta}{4}}\ket{0011} -e^{i\frac{\theta}{4}}\ket{1100} + e^{i\frac{3\theta}{4}}\ket{1111}\right).
\end{align}
Explicit calculation using this decomposition provides a bound on the extent
\begin{align}
  \ext{\ket{M_\theta}} &\leq \left(\abs{\cos\left(\frac{\theta}{4}\right)} + \abs{\sin\left(\frac{\theta}{4}\right)}\right)^2 = 1 + \abs{\sin\left(\frac{\theta}{2}\right)}.
\end{align}
In fact this bound is tight, and for all our magic states 
\begin{align}
  \ext{\ket{M_\theta}} =  1 + \abs{\sin\left(\frac{\theta}{2}\right)}.
\end{align}
\end{thm}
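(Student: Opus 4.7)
My plan splits into two halves: verify the stated decomposition (which immediately yields the upper bound) and match it from below via the dual formulation in Lemma~\ref{lem:dual-problem-for-extent}. For the decomposition, I would first show that $\ket{A(\theta)}$ and $\ket{B(\theta)}$ are fermionic Gaussian by factoring each into a tensor product of two 2-qubit states of the form $\alpha\ket{00}+\beta\ket{11}$: for $\ket{A(\theta)}$ the factors are $(e^{-i\theta/4}\ket{00}+e^{i\theta/4}\ket{11})/\sqrt{2}$ and $(\ket{00}+e^{i\theta/2}\ket{11})/\sqrt{2}$, with both inner phases flipping sign in $\ket{B(\theta)}$. Any such 2-qubit state is Gaussian because $X_0X_1$ is, up to a factor of $-i$, the quadratic Majorana monomial $c_1c_2$, so $e^{i\lambda X_0X_1/2}\ket{00}=\cos(\lambda/2)\ket{00}+i\sin(\lambda/2)\ket{11}$ is FLO-generated from the vacuum, and auxiliary single-qubit $Z$-rotations (themselves FLO) dial in the remaining phase. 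Expanding $\cos(\theta/4)\ket{A(\theta)}+i\sin(\theta/4)\ket{B(\theta)}$ in the computational basis, the coefficients of $\ket{0000},\ket{1111}$ collapse via $\cos(\theta/4)+i\sin(\theta/4)=e^{i\theta/4}$ and those of $\ket{0011},\ket{1100}$ via $\cos(\theta/4)-i\sin(\theta/4)=e^{-i\theta/4}$, recovering $\ket{M_\theta}$; a parallel computation gives $\braket{A(\theta)}{B(\theta)}=0$. Feeding the decomposition into Definition~\ref{def:extent} yields $\xi(\ket{M_\theta})\leq(|\cos(\theta/4)|+|\sin(\theta/4)|)^2=1+|\sin(\theta/2)|$.

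For the matching lower bound I would invoke Lemma~\ref{lem:dual-problem-for-extent} with the explicit witness $\ket{\omega}=(\ket{A(\theta)}+i\ket{B(\theta)})/\sqrt{2}$, which is normalized by the orthogonality just checked (for $\theta$ outside $[0,2\pi]$ one flips the sign in front of $i$ according to the sign of $\sin(\theta/2)$). Using $\braket{A(\theta)}{M_\theta}=\cos(\theta/4)$ and $\braket{B(\theta)}{M_\theta}=i\sin(\theta/4)$ one finds $\braket{\omega}{M_\theta}=(\cos(\theta/4)+\sin(\theta/4))/\sqrt{2}$, so $|\braket{\omega}{M_\theta}|^2=\tfrac{1}{2}(1+\sin(\theta/2))$. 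Matching the upper bound therefore reduces to proving $F_{\operatorname{FLO}}(\ket{\omega})=1/2$. The inequality $F_{\operatorname{FLO}}(\ket{\omega})\geq 1/2$ is immediate because $\ket{A(\theta)}$ is itself Gaussian and $|\braket{A(\theta)}{\omega}|^2=1/2$.

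The main obstacle is the reverse inequality $F_{\operatorname{FLO}}(\ket{\omega})\leq 1/2$, asserting that no Gaussian state overlaps $\ket{\omega}$ better than $\ket{A(\theta)}$ already does. My approach would be to parameterize pure Gaussian states by their covariance matrices $M$ subject to $M=-M^T$ and $MM^T=I$, then express $|\braket{\omega}{G(M)}|^2$ in closed form---for instance by decomposing $\omega$ along the two orthogonal Gaussian directions $\ket{A(\theta)}, \ket{B(\theta)}$ and applying the standard Pfaffian formula for the fidelity of a pair of pure Gaussian states. To keep the optimization tractable, I would quotient out the symmetries of $\ket{\omega}$: the passive-FLO subgroup from Lemma~\ref{lem:props-of-k-type-flo} that stabilizes $\ket{\omega}$ up to a global phase, together with the discrete involution exchanging $\ket{A(\theta)}\leftrightarrow\ket{B(\theta)}$. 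After fixing these gauges, the maximizer is forced into a low-dimensional normal form in which the inequality reduces to an elementary trigonometric bound, with equality attained precisely at $\ket{G}=\ket{A(\theta)}$ (and $\ket{B(\theta)}$). This yields $F_{\operatorname{FLO}}(\ket{\omega})=1/2$ and hence $\xi(\ket{M_\theta})\geq 1+|\sin(\theta/2)|$, closing the gap.
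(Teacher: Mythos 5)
Your decomposition half is correct and essentially matches the paper: the factorization of $\ket{A(\theta)}$ and $\ket{B(\theta)}$ into two-qubit states of the form $\alpha\ket{00}+\beta\ket{11}$ on adjacent pairs, the recombination via $\cos(\theta/4)\pm i\sin(\theta/4)=e^{\pm i\theta/4}$, the orthogonality check, and the resulting upper bound on $\xi(\ket{M_\theta})$ all go through (the paper instead certifies Gaussianity by citing a criterion of Bravyi, but your explicit FLO construction is a perfectly good substitute). Your dual witness $\ket{\omega}=(\ket{A(\theta)}+i\ket{B(\theta)})/\sqrt{2}$ is, up to a global phase, exactly the witness the paper uses, and your computation of $\abs{\braket{\omega}{M_\theta}}^2$ is right.

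The genuine gap is the step you yourself identify as the main obstacle: showing $F_{\operatorname{FLO}}(\ket{\omega})\leq 1/2$. You only sketch a direct optimization over covariance matrices, and that sketch has a real obstruction: writing $\braket{\omega}{G}=\tfrac{1}{\sqrt{2}}\left(\braket{A(\theta)}{G}-i\braket{B(\theta)}{G}\right)$, the ``standard Pfaffian formula'' gives you only the moduli $\abs{\braket{A(\theta)}{G}}$ and $\abs{\braket{B(\theta)}{G}}$, not their relative phase, and that phase enters the cross term of $\abs{\braket{\omega}{G}}^2$. (Phase-insensitivity of covariance-matrix methods is precisely the difficulty the rest of this paper is built to circumvent.) The paper closes this step structurally rather than by optimization: $\ket{\omega}$ lies in the FLO orbit of $\ket{a_8}=(\ket{0000}+\ket{1111})/\sqrt{2}$, there is an anti-unitary $T$ commuting with every FLO unitary, every state in that orbit has real coefficients in a fixed basis of $T$-invariant vectors $\ket{\eta_j}$, and every pure fermionic Gaussian state of four modes has the form $(\ket{\psi_1}+i\ket{\psi_2})/\sqrt{2}$ with $\ket{\psi_1},\ket{\psi_2}$ real and orthonormal in that same basis; hence
\begin{align}
  \abs{\braket{\omega}{G}}^2 = \tfrac{1}{2}\left(\braket{\omega}{\psi_1}^2+\braket{\omega}{\psi_2}^2\right)\leq \tfrac{1}{2}.
\end{align}
Without this (or some completed replacement for your optimization sketch) you have only the upper bound on the extent, not the claimed equality.
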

\begin{proof}
    For technical reasons, we prove the result not for $\ket{M_\theta}$ directly, but for the statevector
    \begin{align}
        \ket{\tilde{M}_\theta} &= e^{-i\frac{\pi}{4}} e^{-i\frac{\theta}{4}} \ket{M_\theta}.
    \end{align}
    Multiplying both sides of Eq.~\eqref{eqn:magic-state-orthogonal-decomposition} by the phase $e^{-i\frac{\pi}{4}} e^{-i\frac{\theta}{4}}$ we obtain 
    \begin{align}
        \ket{\tilde{M}_\theta} = &\cos\left(\frac{\theta}{4}\right)\ket{\tilde{A}(\theta)} + \sin\left(\frac{\theta}{4}\right)\ket{\tilde{B}(\theta)},\label{eqn:magic-state-orthogonal-decomposition-2}
    \end{align}
    where
    \begin{align}
    \ket{\tilde{A}(\theta)} &= \frac{1}{2}e^{-i\frac{\pi}{4}} \left(e^{-i\frac{\theta}{2}}\ket{0000} + \ket{0011} + \ket{1100} + e^{i\frac{\theta}{2}}\ket{1111}\right),\\
    \ket{\tilde{B}(\theta)} &=\frac{1}{2}e^{i\frac{\pi}{4}} \left(e^{-i\frac{\theta}{2}}\ket{0000} - \ket{0011} - \ket{1100} + e^{i\frac{\theta}{2}}\ket{1111}\right),
\end{align}
    are easily shown to be fermionic Gaussian states via the criteron shown in Lemma 2 of Ref.~\cite{bravyi-2005-lagrangian-rep}.
    Let $T$ be the anti-unitary operator\footnote{Note that anti-unitary operators are real-linear but \emph{not} complex-linear and in particular they apply complex conjugation to non-real coefficients.} acting the Hilbert space of $4$ qubits, satisfying $kTk^\dagger = T$ for all $k\in\mathrm{FLO}$. The existence of this anti-unitary is guaranteed by the results of Refs.~\cite{PhysRevA.80.022319,Oszmaniec2014,Oszmaniec_2012}. It is called $\theta$ in those references but we rename it here to avoid confusion with the angle $\theta$. Since $T k = k T$ for all FLO unitaries, in particular $Tc_j c_k = c_j c_k T$ for all Majorana fermion operators $c_j$ and $c_k$. Then, expressing the creation and anihilation operators in terms of the Majorana operators we obtain $a_j a_k T = T a_j^\dagger a_k^\dagger$ and $a_j^\dagger a_k^\dagger T = T a_j a_k$, which implies $T\ket{0000} \propto \ket{1111}$. Setting $T\ket{0000} = \ket{1111}$ to fix the phase-ambiguity in $T$, we obtain
    \begin{subequations}
    \begin{align}
        T \ket{0000} &= \ket{1111} & T\ket{1111} &=\ket{0000} \\
        T\ket{0011} &= -\ket{1100} &T\ket{1100} &=-\ket{0011} \\
        T\ket{0101} &= \ket{1010}&T\ket{1010} &=\ket{0101} \\
        T\ket{1001} &= -\ket{0110} &T\ket{0110} &=-\ket{1001}.
    \end{align}
    \end{subequations}
    Note that this differs from the corresponding expression in Ref.~\cite{Oszmaniec2014} by a sign flip in the expressions for $T\ket{0101}$ and $T\ket{0101}$. With this notation, Eq.~\eqref{eqn:magic-state-orthogonal-decomposition-2} may be rewritten as
    \begin{align}
        \ket{\tilde{M}_\theta} = &\cos\left(\frac{\theta}{4}\right)\ket{\tilde{A}(\theta)} + \sin\left(\frac{\theta}{4}\right)T\ket{\tilde{A}(\theta)},
    \end{align}
    and, in particular, has exactly the same form as equation~(12) of Ref.~\cite{Oszmaniec2014}. The following basis of vectors invariant under $T$ is relevant for our proof
    \begin{subequations}
    \begin{align}
        \ket{\eta_1} &= \frac{1}{\sqrt{2}}\left(\ket{0000} + \ket{1111}\right), &
        \ket{\eta_2} &= \frac{i}{\sqrt{2}}\left(\ket{0000} - \ket{1111}\right),\\
        \ket{\eta_3} &= \frac{1}{\sqrt{2}}\left(\ket{0011} - \ket{1100}\right),&
        \ket{\eta_4} &= \frac{i}{\sqrt{2}}\left(\ket{0011} + \ket{1100}\right),\\
        \ket{\eta_5} &= \frac{i}{\sqrt{2}}\left(\ket{0101} - \ket{1010}\right), &
        \ket{\eta_6} &= \frac{1}{\sqrt{2}}\left(\ket{0101} + \ket{1010}\right),\\
        \ket{\eta_7} &= \frac{1}{\sqrt{2}}\left(\ket{1001} - \ket{0110}\right),&
        \ket{\eta_8} &= \frac{i}{\sqrt{2}}\left(\ket{1001} + \ket{0110}\right).
    \end{align}
    \end{subequations}
    We will now employ Lemma~\ref{lem:dual-problem-for-extent} with the witness state 
    \begin{align}
        \ket{\omega_\theta} &= \frac{1}{\sqrt{2}}\left(\ket{\tilde{A}(\theta)} + \ket{\tilde{B}(\theta)}\right).
    \end{align}
    Using the results of Ref.~\cite{Oszmaniec2014}, the FLO-fidelity of $\ket{\omega_\theta}$ is easily evaluated. First, note that $\ket{\omega_\theta}$ is a \emph{real} statevector, in the sense of having real coefficients when expressed in the $\ket{\eta}$ basis. At the same time, due to Eq.~(11) of Ref.~\cite{Oszmaniec2014}, every fermionic-Gaussian state on 4 fermionic modes has the form
    \begin{align}
        \ket{\psi} &= \frac{1}{\sqrt{2}}\left(\ket{\psi_1} + i \ket{\psi_2}\right),
    \end{align}
    for two orthonormal statevectors $\ket{\psi_1}$ and $\ket{\psi_2}$, which \emph{also} each have real coefficients when expressed in the $\ket{\eta}$ basis. We can then evaluate
    \begin{align}
        \left|\braket{\omega_\theta}{\psi}\right|^2 &= \frac{1}{2}\left(\braket{\omega_\theta}{\psi_1}^2 +  \braket{\omega_\theta}{\psi_2}^2 \right),
    \end{align}
    since all three statevectors are real in the same basis. Further, since the states are normalized and the $\ket{\psi_j}$ are orthogonal, it follows that for any $\ket{\psi}$ the overlap with $\ket{\omega_\theta}$ is bounded by $1/2$. In order to show the FLO-Fidelity of $\ket{\omega_\theta}$ is exactly $1/2$, it remains only to show that there is some FLO state $\ket{\psi}$ such that
    \begin{align}
        \left|\braket{\omega_\theta}{\psi}\right|^2 = \frac{1}{2}.
    \end{align}
    However, it is easy to see that either of $\ket{\tilde{A}_\theta}$ or $\ket{\tilde{B}_\theta}$ suffices. 
    
    Now, we have
    \begin{align}
        \fid{\ket{\omega_\theta}} &= \frac{1}{2},
    \end{align}
    and we employ the dual problem from Lemma~\ref{lem:dual-problem-for-extent} to obtain
    \begin{align}
        \ext{\ket{\tilde{M}_\theta}} &= \max_\omega \frac{\abs{\braket{\omega}{\tilde{M}_\theta}}^2}
        {\fid{\ket{\omega}}}\\ &\geq \frac{\abs{\braket{\omega_\theta}{\tilde{M}_\theta}}^2}{\fid{\ket{\omega_\theta}}}\\
        &= 2 \abs{\braket{\omega_\theta}{\tilde{M}_\theta}}^2\\
        &= \left(\cos\left(\frac{\theta}{4}\right) + \sin\left(\frac{\theta}{4}\right)\right)^2
    \end{align}
\end{proof}

It is interesting that the structures examined in Ref.~\cite{Oszmaniec2014} force the extent-optimal decomposition of our magic states to be an orthogonal decomposition. This in stark contrast to the case of optimal magic state decompositions in the Clifford/stabilizer case, where single-qubit $\ket{T}$ states are optimally expressed as a sum of two states coming from a pair of mutually unbiased bases. 

Finally, note that due to the results of Ref.~\cite{Oszmaniec2014}, the witness states $\ket{\omega_\theta}$ appearing in the proof of Theorem~\ref{thm:decomposition-of-magic-states} are all in the FLO-orbit of the state \changed{$\ket{a_8} = \ket{\eta_1}$} defined in Ref.~\cite{PhysRevA.73.042313}. That is, they are all of the form $U\ket{a_8}$ for some FLO unitary $U$. Applying the same reasoning used above to a tensor-product of magic states with the obvious product witness provides the bound
\begin{align}
    \ext{\prod_{j=1}^m \ket{M_{\theta_j}}} \geq \frac{2^{-m}}{\fid{\ket{a_8}^{\otimes m}}}\prod_{j=1}^m \left(\cos\left(\frac{\theta_j}{4}\right) + \sin\left(\frac{\theta_j}{4}\right)\right),
\end{align}
which, under the plausible conjecture that the FLO-fidelity is multiplicative for products of $\ket{a_8}$ states, would imply that the FLO extent is multiplicative for the class of magic states we consider
\begin{align}
    \ext{\prod_{j=1}^m \ket{M_{\theta_j}}} = \prod_{j=1}^m \left(\cos\left(\frac{\theta_j}{4}\right) + \sin\left(\frac{\theta_j}{4}\right)\right).
\end{align}
In Appendix~\ref{app:fidelity-of-a8-pair} we show that $\fid{\ket{a_8}^{\otimes 2}} = \fid{\ket{a_8}}^2 = 1/4$, proving this conjecture for $m=2$ $4$-qubit magic states. The conjecture was subsequently proved in full by one of the authors in Ref.~\cite{reardonsmith2024fermioniclinearopticalextent}, where it was shown that the fermionic linear optical fidelity is multiplicative for arbitrary tensor-products of $\ket{a_8}$, and the extent is multiplicative for arbitrary tensor products of $4$ qubit parity eigenstates.

\section{Phase-sensitive FLO simulation}
\label{sec:phase-sensitive-flo-sim}

In this section, we will describe our phase-sensitive simulation algorithm for fermionic linear optics. We will first give a classical efficient description of every pure fermionic Gaussian state. Then, we will explain how to efficiently update this description under the evolution induced by FLO unitaries, in a way that allows one to preserve the global phase information.


\subsection{Representation of fermionic Gaussian states}

A key result we will employ is the following lemma, which is an example of a Cartan's (or ``$KAK$'' type) decomposition.
\begin{lem}\label{lem:kak-FLO-unitaries}
  For any FLO unitary $U$ there exist two passive FLO unitaries $K_1, K_2\in\mathcal{K}$ and one anti-passive FLO unitary $A\in\mathcal{A}$ such that
  \begin{align}
    U = K_1 A K_2.
  \end{align}
\end{lem}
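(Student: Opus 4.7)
The plan is to reduce the statement to a classical Cartan (KAK) decomposition on $SO(2n)$ via the homomorphism $\phi$, and then lift the decomposition back to $\operatorname{FLO}(n)$. The key observation is that the splitting of $\mathfrak{flo}(n)$ induced by the defining conditions of $\mathcal{K}$ and $\mathcal{P}$ is exactly the Cartan decomposition of the symmetric pair underlying $SO(2n)/U(n)$: the involution $\theta(X) = \Omega X \Omega^{-1}$ is an order-two automorphism of $\mathfrak{so}(2n)$ whose $+1$ and $-1$ eigenspaces are precisely the subalgebra $\mathfrak{k}$ of generators commuting with $\Omega$ and the complement $\mathfrak{p}$ of generators anti-commuting with $\Omega$.

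First I would set $R = \phi(U) \in SO(2n)$ and invoke the KAK theorem for compact symmetric pairs: every $R$ decomposes as $R = k_1\, e^{A_0}\, k_2$ with $k_1, k_2$ in the subgroup $K \cong U(n)$ exponentiating $\mathfrak{k}$, and $A_0$ in a fixed maximal abelian subspace $\mathfrak{a} \subset \mathfrak{p}$. Lemma~\ref{lem:decomp-p-type} is exactly what is needed to identify $\mathfrak{a}$: it pins down the maximal abelian subspace as the span of matrices $\Lambda \otimes i\sigma_y \otimes \sigma_z$ (the generators of $\mathcal{A}$) and, crucially, it supplies the Cartan normal form by showing that every $\mathrm{Ad}(K)$-orbit in $\mathfrak{p}$ meets $\mathfrak{a}$.

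Next I would lift the $SO(2n)$-level decomposition back to $\operatorname{FLO}(n)$. Surjectivity of $\phi$ (recalled in Appendix~\ref{app:lie}) gives preimages $K_1', K_2' \in \mathcal{K}$ of $k_1, k_2$ and $A \in \mathcal{A}$ of $e^{A_0}$. Then $\phi(K_1' A K_2') = R = \phi(U)$, so $U = \varepsilon K_1' A K_2'$ for some sign $\varepsilon \in \{\pm 1\}$ since $\ker\phi = \{\pm I\}$. The sign is harmless because $-I \in \mathcal{K}$: it commutes with $N$ and therefore satisfies item~3 of Lemma~\ref{lem:props-of-k-type-flo}; absorbing $\varepsilon$ into $K_1 := \varepsilon K_1'$ and setting $K_2 := K_2'$ gives the desired factorization.

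The main obstacle is justifying the KAK decomposition at the level of $SO(2n)$ itself, which is a classical but non-trivial result from the theory of compact symmetric spaces. A self-contained derivation would combine a polar-type decomposition $G = K\,\exp(\mathfrak{p})$ with the group-level strengthening of Lemma~\ref{lem:decomp-p-type} asserting that $\mathrm{Ad}(K)$ conjugates every element of $\exp(\mathfrak{p})$ into $\exp(\mathfrak{a})$. Everything else (verifying that the identified $\mathfrak{a}$ is indeed maximal abelian in $\mathfrak{p}$, lifting through $\phi$, and the sign bookkeeping) is essentially routine given the properties of $\mathcal{K}$, $\mathcal{A}$ and $\phi$ already recorded in this section.
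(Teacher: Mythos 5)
Your proposal is correct, and the lifting step (surjectivity of $\flohm$, $\ker\flohm=\{\pm I\}$, absorbing the sign into $K_1$ because $-I\in\mathcal{K}$) coincides with the paper's. Where you genuinely diverge is in how the $SO(2n)$-level decomposition is obtained. You identify the pair $(\mathfrak{k},\mathfrak{p})$ as the Cartan decomposition of the compact symmetric pair $SO(2n)/U(n)$ induced by the involution $X\mapsto\Omega X\Omega^{-1}$, invoke the classical $KAK$ theorem, and use Lemma~\ref{lem:decomp-p-type} to pin down the maximal abelian subspace $\mathfrak{a}$ and the conjugation of $\exp(\mathfrak{p})$ into $\exp(\mathfrak{a})$. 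The paper deliberately avoids this route (it says so explicitly after stating the lemma): instead of the general theory it proves Lemma~\ref{lem:kak-orthogonal-matrices} by explicit matrix manipulations built on the condensed-form reduction of Theorem~\ref{thm:general-2n-condensed-form} and Corollary~\ref{cor:reshuffled-2n-condensed-form}, which produces the symplectic-orthogonal factors $K_1,K_2$ and the diagonal $\Lambda$ by a concrete $\order{n^3}$ procedure. The trade-off is real: your argument is shorter and conceptually transparent, but it rests on the non-trivial classical facts $G=K\exp(\mathfrak{p})$ and the conjugation of $\exp(\mathfrak{p})$ into the torus, and as stated it is non-constructive, whereas the paper needs the decomposition to be \emph{computable} in polynomial time (cf.\ Lemma~\ref{lem:app-recovering-phases} and the update rules of Sec.~\ref{sec:phase-sensitive-flo-sim}), which is exactly what its explicit derivation delivers. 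Two minor points: your justification that $-I\in\mathcal{K}$ via item~3 of Lemma~\ref{lem:props-of-k-type-flo} is loose --- the clean argument is that $-I=\exp(\pi c_0c_1)$ with generating matrix proportional to $E_{01}-E_{10}$, which commutes with $\Omega$; and one should note (as the paper does after Lemma~\ref{lem:decomp-p-type}) the padding needed when $n$ is odd.
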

The proof of the above decomposition can be found in Appendix~\ref{app:Cartans-decomp} and a practical (runtime $\order{n^3}$) method for obtaining the decomposition in Lemma~\ref{lem:app-recovering-phases}. There is a deep and powerful theory of such decompositions underlying much of the structure of Lie groups and symmetric spaces~\cite{HelgasonGifGeomLieGroupAndSym,KnappLieGroupsBeyond}. Here, however, we will use almost none of this theory, and instead rely on explicit matrix computations. 

Note that $A\in\mathcal{A}$ is a product of $\order{n}$ commuting operators of the form
\begin{align}
    \exp(\lambda c_j c_k)\label{eqn:A-type-FLO-unitary},
\end{align}
which we will refer to as \emph{elementary FLO unitaries}. Such gates, in combination with passive FLO unitaries, therefore generate the full group $\operatorname{FLO}(n)$. We make this argument explicit in Appendix~\ref{app:recovering-phases}, where in Lemma~\ref{lem:app-recovering-phases} we provide a polynomial-time algorithm to compute the $KAK$ decomposition from a generating matrix of a FLO unitary $U$. The division into these two categories is relevant because the phase-sensitive FLO simulation algorithms we will develop in this section have a similar form to the ``CH-form'', introduced in Ref.~\cite{bravyi-2019} for phase-sensitive simulation of stabilizer quantum mechanics. There, the elementary Clifford unitaries split into $CX$, $S$ and $CZ$ which preserve the $\ket{0}$ state, and the Hadamard gate $H$ which does not. The main result upon which our simulation algorithms depend is given by the following lemma.
\begin{lem}\label{lem:FLO-BM-form}
  Any fermionic Gaussian state $\ket{\psi}$ of $n$ qubits may be written as
  \begin{align}
    \ket{\psi} &= \omega K \prod_{j=0}^{\changed{\frac{n}{2}-1}} \exp\left(\lambda_j c_{4j} c_{4j+2}\right)\ket{0},
  \end{align}
  where $\omega\in\mathbb{C}$ is a phase factor, $K\in\mathcal K$ is a passive FLO unitary (so that $K\ket{0} \propto \ket{0}$), and $\lambda_j\in\mathbb{R}$.
\end{lem}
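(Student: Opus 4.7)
The plan is to reduce the problem to a single-block calculation using the $KAK$ decomposition of Lemma~\ref{lem:kak-FLO-unitaries}. By definition, any fermionic Gaussian state satisfies $\ket{\psi} = U\ket{0}$ for some $U \in \operatorname{FLO}(n)$. Decomposing $U = K_1 A K_2$ with $K_1, K_2 \in \mathcal{K}$ and $A \in \mathcal{A}$, and applying point~$4$ of Lemma~\ref{lem:props-of-k-type-flo} which gives $K_2\ket{0} = \omega\ket{0}$ for some phase $\omega \in \mathbb{C}$, we obtain $\ket{\psi} = \omega K_1 A\ket{0}$. Setting $K = K_1$, it remains only to show that $A\ket{0}$ coincides with $\prod_j \exp(\lambda_j c_{4j}c_{4j+2})\ket{0}$ for a suitable choice of real parameters $\lambda_j$.

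To analyse $A \in \mathcal{A}$, recall from Lemma~\ref{lem:decomp-p-type} that its generating matrix has the block form $\Lambda \otimes i\sigma_y \otimes \sigma_z$ with $\Lambda = \operatorname{diag}(\lambda_0,\lambda_1,\ldots)$ real. Substituting this back into $\tfrac{1}{4}\sum_{ij} A_{ij}\, c_i c_j$ and using the antisymmetry arising from the Majorana anti-commutation relations, the $j$-th block contributes the quadratic term $\tfrac{\lambda_j}{2}(c_{4j}c_{4j+2} - c_{4j+1}c_{4j+3})$. Different blocks involve disjoint sets of Majoranas and therefore commute, so
\begin{equation*}
  A \;=\; \prod_j \exp\!\Bigl(\tfrac{\lambda_j}{2}\bigl(c_{4j}c_{4j+2} - c_{4j+1}c_{4j+3}\bigr)\Bigr).
\end{equation*}

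The crux is then to verify that, on the vacuum, the $j$-th block unitary can be replaced by $\exp(\lambda_j c_{4j}c_{4j+2})$. A direct Jordan-Wigner computation yields $c_{4j}c_{4j+2} = -iY_{2j}X_{2j+1}$ and $c_{4j+1}c_{4j+3} = iX_{2j}Y_{2j+1}$, from which one checks that on the two-dimensional subspace $\operatorname{span}\{\ket{00},\ket{11}\}_{2j,2j+1}$ the operator $c_{4j+1}c_{4j+3}$ equals $-c_{4j}c_{4j+2}$. Hence $\tfrac{1}{2}(c_{4j}c_{4j+2} - c_{4j+1}c_{4j+3})$ and $c_{4j}c_{4j+2}$ agree on this subspace, which contains the vacuum $\ket{00}_{2j,2j+1}$ and is invariant under $c_{4j}c_{4j+2}$. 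Consequently the two exponentials applied to the vacuum give the same state, and combining this block-wise identification across $j$ produces the desired form. The only real obstacle is this vacuum-sector calculation; once that identification is in place, the rest follows immediately from the $KAK$ decomposition.
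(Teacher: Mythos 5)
Your proposal is correct and follows essentially the same route as the paper's proof: apply the $KAK$ decomposition of Lemma~\ref{lem:kak-FLO-unitaries}, absorb $K_2$ into a phase via $K_2\ket{0}=\omega\ket{0}$, and use the vacuum-sector identity $\exp\bigl(\lambda_j(c_{4j}c_{4j+2}-c_{4j+1}c_{4j+3})\bigr)\ket{0}=\exp\bigl(2\lambda_j c_{4j}c_{4j+2}\bigr)\ket{0}$. The only difference is that you explicitly verify this identity by a Jordan--Wigner computation on the invariant subspace $\operatorname{span}\{\ket{00},\ket{11}\}$, a detail the paper merely asserts, and your verification is correct.
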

\changed{Note that as stated and proved here this lemma applies only to systems with an even number of qubits, so $\frac{n}{2}$ is an integer. This does not lose any generality as an extra unused qubit may always be added to a circuit. In fact adding an extra qubit is unnecessary and our results generalise to odd qubit numbers with some minor but notationally cumbersome adjustments.} 
\begin{proof} By definition, any fermionic Gaussian state is obtained by the application of some FLO unitary $U$ to the vacuum state. Applying Lemma~\ref{lem:kak-FLO-unitaries} to that unitary, we obtain
\begin{align}
  \ket{\psi} &= U \ket{0}\\
             &= K_1 \exp\left(\sum_{j=0}^{\changed{\frac{n}{2}-1}} \lambda_j \left(c_{4j} c_{4j+2} - c_{4j+1} c_{4j+3}\right)\right) K_2 \ket{0}\\
             &= \omega K_1 \prod_{j=0}^{\changed{\frac{n}{2}-1}} \exp\left(2 \lambda_j c_{4j} c_{4j+2}\right)\ket{0},
\end{align}
where we have used the fact that $K_2\ket{0} = \omega\ket{0}$ for some complex $\omega$, and employed the relation 
\begin{equation}
    \exp\left(\lambda_j \left(c_{4j} c_{4j+2} - c_{4j+1} c_{4j+3}\right)\right)\ket{0} = \exp\left(2\lambda_j c_{4j} c_{4j+2}\right)\ket{0}.    
\end{equation}

\end{proof}

Since we will be interested in classical simulation algorithms, we must be rather explicit about the data we use to describe a fermionic Gaussian state. We summarise this data in the following definition.

\begin{defi}[Classical description of a fermionic Gaussian state]

\label{def:fermionic-gaussian-classical-data}
  A fermionic Gaussian statevector $\ket{\psi}$ on $n$ qubits is specified by its \emph{classical description} $\Gamma = \left(\omega, R, a, \lambda\right)$, where $R$ is a real $2n\times 2n$ special orthogonal matrix, $a$ and $\omega$ are complex numbers, and $\lambda$ is a length $\frac{n}{2}$ vector of real numbers. Explicitly,
  \begin{align}
    \ket{\psi} &= \omega K \exp\left(\sum_{j=0}^{\frac{n}{2}-1} \lambda_j c_{4j} c_{4j+2}\right) \ket{0},
  \end{align}
  where
  \begin{align}
    K\ket{0} &= a\ket{0}\label{eqn:def-flo-chform-a},\\
    K c_j K^\dagger &= \sum_{k=0}^{2n-1} R_{jk} c_k.\label{eqn:def-flo-chform-r-matrix}
  \end{align}
\end{defi}




\subsection{Update rules}

For classical simulation, it is essential for us to have efficient subroutines to update the data defining a state $\ket{\psi}$ upon application of a FLO unitary $K\in\mathcal{K}$ or $\exp(\lambda c_j c_k)$. For passive FLO unitaries these update rules are rather simple.
\begin{updaterule}[Application of a passive FLO unitary]
  ~\\If $\Gamma = \left(\omega, R, a, \lambda\right)$ is a classical description of a fermionic Gaussian state $\ket{\psi}$, and $U$ is a passive FLO unitary satisfying
  \begin{subequations}
  \begin{align}
    U\ket{0} &= b\ket{0},\\
    U c_j U^\dagger &= \sum_{k=0}^{2n-1} S_{jk} c_k,
  \end{align}
  \end{subequations}
  then the data describing the state $U\ket{\psi}$ is
  \begin{align}
    S \cdot \Gamma = \left(\omega, RS, ab, \lambda \right).
  \end{align}
\end{updaterule}
The proof is essentially by inspection, upon recalling Eqs.~\eqref{eqn:def-flo-chform-a} and~\eqref{eqn:def-flo-chform-r-matrix}. Unfortunately, the update rule for FLO unitaries of the form $U = \exp\left(\lambda c_j c_k\right)$ is somewhat more involved. We summarise the steps below. 
\changed{
\begin{updaterule}[Application of a commuting anti-passive FLO unitary]
  ~\\If $\Gamma = \left(\omega, R, a, \lambda\right)$ is a classical description of a fermionic Gaussian state $\ket{\psi}$, $U = \prod_{j=0}^{\frac{n}{2}-1}\exp\left(\frac{\mu_j}{2} (c_{4j} c_{4j+2} -c_{4j+1} c_{4j+3})\right)$ is an anti-passive FLO unitary and we define $A = \prod_{j=0}^{\frac{n}{2}-1}\exp\left(\frac{\lambda_j}{2} (c_{4j} c_{4j+2} -c_{4j+1} c_{4j+3})\right)$, then the data describing the state $U\ket{\psi}$ may be computed using the following steps:
  \begin{enumerate}
  \item Compute the special orthogonal matrices $Q_1= \phi(U)$ and $Q_2 = \phi(A)$.
  \item Perform the multiplication in the special orthogonal group to obtain a matrix $Q_2 R Q_1 = \phi(UKA)$
  \item Compute  the real anti-symmetric matrix $\beta = \log(Q_2 R Q_1 )$. For example by using the real Schur decomposition to block-diagonalize $Q_2 R Q_1 $.
  \item Now, since the kernel of the covering map $\phi$ from Eq.~\eqref{eq:phi} is $\pm I$, we have
    \begin{align}
      U K A= \pm \exp\left(\frac{1}{4}\sum_{j,k=0}^{2n-1}\beta_{jk} c_j c_k\right).
    \end{align}
  \item We correct the unknown phase $\pm 1$ employing the methods described in Appendix~\ref{app:recovering-phases}, Lemma~\ref{lem:phase-sensitive-anti-passive}. In summary, we provide an efficient classical algorithm which finds a fermionic Gaussian state $\ket{\xi}$ such that the matrix element $m = \bra{\xi} \exp\left(\frac{1}{4}\sum_{j,k=0}^{2n-1}\beta_{jk} c_j c_k\right)\ket{\xi}$ has absolute value $1$. Then, we compute $m$ and $\bra{\xi}U K A\ket{\xi}$, and compare them to see if they differ by a factor of minus one or are equal.
  \item Apply the KAK decomposition method described in Lemma~\ref{lem:app-recovering-phases} to $\exp\left(\frac{1}{4}\sum_{j,k=0}^{2n-1}\beta_{jk} c_j c_k\right)$ to obtain passive FLO unitaries $\tilde{K}$ and $\hat{K}$ and an anti-passive $\tilde{A}$ such that $\exp\left(\frac{1}{4}\sum_{j,k=0}^{2n-1}\beta_{jk} c_j c_k\right) = \tilde{K}\tilde{A}\hat{K}$.
  \item Apply $\tilde{K}\tilde{A}\hat{K}$ to the vacuum state $\ket{0}$ to obtain $U\ket{\psi} = \tilde{\omega}\tilde{K}\tilde{A}\ket{0}$, if necessary combining the power of $-1$ found in step 5 with $\tilde{\omega}$.
  \end{enumerate}
\end{updaterule}
}
Of course, knowing the update rule for passive and commuting anti-passive FLO unitaries directly yields the following update rule for general FLO unitaries.
\changed{
\begin{updaterule}[Application of general FLO unitaries]
  ~\\Since every FLO unitary may be KAK-decomposed into a product of two passive FLO unitaries and a commuting anti-passive FLO unitary, applying a general FLO unitary may be acomplished by applying rule 1 twice and rule 2 once.
\end{updaterule}
}
Finally, we also provide a rule for calculating computational basis inner products.

\begin{updaterule}[Computational basis inner products]
  ~\\If $\Gamma = \left(\omega, R, a, \lambda\right)$ is a classical description of a fermionic Gaussian state $\ket{\psi}$, $x$ is a length $2n$ binary string, and $c(x) = \prod_j c_j^{x_j}$, then the inner product $\bra{0}c(x)\ket{\psi}$ may be computed using the steps:
  \begin{enumerate}
  \item Write
    \begin{align}
      \bra{0}c(x)\ket{\psi} &= \omega \tr\left( \ketbra{0}{0} c(x) K A \right) = \omega \bra{0}K\ket{0} \tr\left( (c(x) \ketbra{0}{0})\cdot(c(x) K A K^\dagger c(x)^\dagger)\right).
    \end{align}
  \item Rewrite the trace as the Pfaffian of an $\order{n} \times \order{n}$ matrix using the methods shown in Appendix~\ref{app:recovering-phases}.
  \item Compute the Pfaffian using standard methods.
  \end{enumerate}
\end{updaterule}

Based on the above update rules and the proofs and methods presented in Appendix~\ref{app:recovering-phases}, we can summarise the performance of our phase-sensitive classical simulation algorithms as follows. Given a classical description of a fermionic Gaussian state $\ket{\psi}$, we provide algorithms to compute a new classical definition of:
  \begin{itemize}
  \item The state $K\ket{\psi}$ for a passive FLO unitary $K$ -- in time $\order{n^3}$.
  \item The state \changed{$\exp\left(\sum_j \mu_j (c_{4j} c_{4j+2} -c_{4j+1} c_{4j+3}) \right) \ket{\psi}$} -- in time $ \order{n^3}$.
  \item The (subnormalized) state  $a_j a_j^\dagger\ket{\psi}$ -- in time $\order{n^3}$.
  \item The inner product $\braket{0}{\psi}$ -- time $\order{n}$.
  \item The inner product $\bra{0}\left(\prod_j c_j^{x_j}\right)\ket{\psi}$ for some binary vector $x$ -- in time $\order{n^3}$.
  \end{itemize}


\section{Simulating universal quantum circuits}
\label{sec:simulating-universal-circuits}

In this section, we will present our simulation algorithm that allows one for the estimation of Born-rule probabilities for universal quantum circuits composed of FLO unitaries (matchgates) and controlled phase gates. \changed{Note that restricting to circuits consisting of these gates implies no loss of generality. It was shown in Ref.~\cite{josaMatchgate2008} that circuits consisting of matchgates and swap gates are universal, indeed any
BQP algorithm can be simulated by a poly-sized circuit of matchgates and swap gates. Since a swap gate is equal to a controlled phase-gate with phase equal to $\pi$ multiplied by the matchgate $G(\sigma_Z,\sigma_X)$, this result directly applies to our gateset.} The main idea behind the algorithm is as follows. First, we will use gadgetization in order to replace the original circuit by a FLO circuit with non-FLO measurements post-selecting on the magic state $\ket{M_\theta}$ at the end. Then, we will employ our phase-sensitive FLO simulator to evolve the initial state to the final pre-measurement state. Next, we will decompose each $\ket{M_\theta}$ into a superposition of fermionic Gaussian states. Finally, using sampling methods based on the Hoeffding's inequality, we will estimate the desired probability.

In what follows, we will adapt the gadget described in Ref.~\cite{Yoganathan-2019-magic-for-matchgates}, although the method is similar to the one given earlier in Ref.~\cite{bravyi-kitaev-2000}. 
More precisely, 
we will reverse the gadget of Ref.~\cite{Yoganathan-2019-magic-for-matchgates} (presented in Fig.~\ref{fig:gadget1}) so that an input fermionic Gaussian state enters each gadget, and a non-FLO measurement is applied to implement each controlled-phase gate (see Fig.~\ref{fig:gadget2}). Experience from classical simulators based on Clifford/stabilizer quantum mechanics, e.g., Ref.~\cite{PRXQuantum.3.020361}, shows that reversing the gadgetization can provide significant improvements to the polynomial components of the runtime. This formulation is more efficient, as it allows us to apply the full gadgetized circuit to an initial (FLO) vacuum state, and then post-select on the non-FLO states at the end. Practically this means that the exponential component of the runtime does not have any prefactor proportional to the number of gates in the circuit.

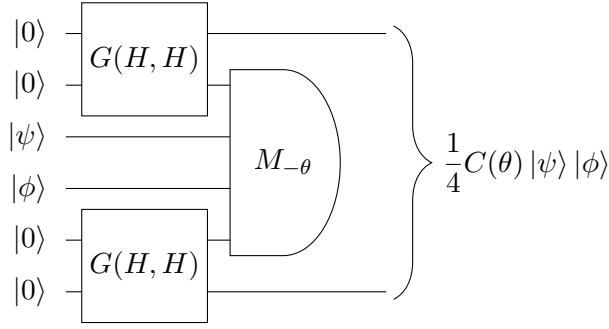
\begin{figure}
  \centering
\pgfdeclarelayer{background}
\pgfdeclarelayer{foreground}
\pgfsetlayers{background,main,foreground}
        \begin{tikzpicture}
    \tikzstyle{operator} = [draw,fill=white,minimum size=1.5em] 
    \tikzstyle{phase} = [draw,fill,shape=circle,minimum size=0.2cm,inner sep=0pt]
    \tikzstyle{multiQubitGate} = [fill=white,draw=black,shape=rectangle, minimum height=1.5cm, inner sep=1mm]
    \tikzstyle{singleQubitMeasurement} = [fill=white,draw=black,shape=rounded rectangle,rounded rectangle left arc=none,rounded rectangle arc length=130,inner sep=0.1cm, outer sep=0cm]
    %
    \matrix[row sep=0.3cm, column sep=1cm] (circuit) {
    \node[label=left:{$\ket{0}$}] (q1) {}; & 
    \node[phase] (G11) {}; & &
    \node (skip) {}; &
    \coordinate (end1); \\
    \node[label=left:{$\ket{0}$}] (q2) {}; & 
    \node[phase] (G12) {}; &  
    \node (M2) {}; &
    \coordinate (end2);\\
    \node[label=left:{$\ket{\psi}$}] (q3) {}; & 
    &
    \coordinate (end3);\\
    \node[label=left:{$\ket{\phi}$}] (q4){}; & &
    &
    \coordinate (end4);\\
    \node[label=left:{$\ket{0}$}] (q5) {}; & 
    \node[phase] (G21) {}; & 
    \node(M3) {}; &
    \coordinate (end5); \\
    \node[label=left:{$\ket{0}$}] (q6) {};  &
    \node[phase] (G22) {}; &
    \node (end7) {}; & &
    \coordinate (end6); \\
    };
    
     \begin{pgfonlayer}{foreground}
       \coordinate (rectbottomleft) at ($(M2)!1.1!(M3)$);
       \coordinate (recttopleft) at ($(M3)!1.1!(M2)$);
       \coordinate (recttopright) at ($(recttopleft) + (0.7cm, 0)$);
       \coordinate (rectbottomright) at ($(rectbottomleft) + (0.7cm, 0)$);
       \newdimen\mydim
       \pgfextracty{\mydim}{\pgfpointscale{0.5}{\pgfpointdiff{\pgfpointanchor{rectbottomright}{center}}{\pgfpointanchor{recttopright}{center}}}}
       \draw[fill=white, draw=black] (recttopleft) -- (rectbottomleft) -- (rectbottomright) arc[start angle=270, delta angle=180, x radius=.75cm, y radius=\mydim] -- cycle;
       \coordinate (rectmidleft) at ($(M2)!0.5!(M3)$);       
       \node (bigmeasurementlabel) at ($(rectmidleft) + (0.7cm, 0)$) {$\displaystyle{M_{-\theta}}$};
    \end{pgfonlayer}
    \draw[decorate,decoration={brace,amplitude=0.5cm}]
        ($(end1)+(0.1cm,0.1cm)$)
        to node[midway,right] (bracket) {$\,\,\,\,\,\,\,\,\,\displaystyle \frac{1}{4}C(\theta)\ket{\psi}\ket{\phi}$}
        ($(end6)+(0.1cm,-0.1cm)$);
        
        
    \begin{pgfonlayer}{background}
        \draw[] (q1) -- (end1)  (q2) -- (end2) (q3) -- (end3) (q4) -- (end4) (q5) -- (end5) (q6) -- (end6); 
    \end{pgfonlayer}
    \begin{pgfonlayer}{foreground}
        \node[multiQubitGate, at = ($(G11)!0.5!(G12)$)] (G1) {$G(H,H)$};
        \node[multiQubitGate, at = ($(G21)!0.5!(G22)$)] (G2) {$G(H,H)$};
    \end{pgfonlayer}
    \end{tikzpicture}

  \caption{\label{fig:gadget2}\textbf{Reverse gadget for $C(\theta)$.} In the ``reverse gadget'', the input state is now a FLO state, and magic is injected by the projection onto the non-FLO $M_{-\theta}$ state at the end. Recall that $G(H,H)$ is defined in Eq.~\eqref{eq:matchgate} and $H$ here denotes the Hadamard matrix (not to be confused with the Hamiltonian).}
\end{figure}


\subsection{Probability estimation for measurements on all qubits}
\label{subsec:measuring-all-qubits}

The main result of this subsection is given by the algorithm captured by the following theorem, the proof of which comprises the remainder of this section.

\begin{thm}[Probability estimation algorithm 1]\label{thm:runtime-1}
    Consider an $n$-qubit quantum circuit of the form
    \begin{align}
        U &= V_{k+1} \prod_{j=1}^k  C(\theta_j)_{n_j,n_j+1} V_j,
    \end{align}
    where the $V_j$ are arbitrary FLO unitaries and each $C(\theta_j)$ is a controlled phase gate applied to two adjacent qubits $n_j$ and $n_{j+1}$. Let $\ket{a}$, $\ket{b}$ be two, even parity, computational-basis vectors, and let
    \begin{align}
        p = \abs{\bra{b} U \ket{a}}^2,
    \end{align}
    be the associated Born-rule probability. \changed{Finally define
    \begin{align}
        \xi^* &= \prod_{j=1}^k\ext{\ket{M_{\theta_j}}}\\
            &= \prod_{j=1}^k\left(\cos\left(\frac{\theta_j}{4}\right) + \sin\left(\frac{\theta_j}{4}\right)\right)^2.
    \end{align}
    }Then, given parameters $\epsilon,\delta> 0$, there exists an algorithm that outputs an estimate $p^*$ of $p$ such that
    \begin{align}
        \mathrm{Pr}\left(\abs{p^* - p} > \epsilon \right) < \delta,
    \end{align}
    with runtime
    \begin{align}
        t = \order{\changed{(k+n)^3} \frac{\xi^*}{(\sqrt{p+\epsilon} - \sqrt{p})^2}\log\left(\frac{1}{\delta}\right)},
        \end{align}
        or, in the natural case where $\epsilon$ is small compared to $p$,
        \begin{align}
           t = \order{\changed{(k+n)^3} \xi^* \frac{p}{\epsilon^2}\log\left(\frac{1}{\delta}\right)}.
    \end{align}
\end{thm}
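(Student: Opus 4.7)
The plan is a three-step reduction. First, gadgetise: insert the reverse gadget of Fig.~\ref{fig:gadget2} at each controlled-phase location, adding four ancillas in $\ket{0}$, two $G(H,H)$ matchgates, and a post-selection onto $\ket{M_{-\phi_j}}$. Since the ancillas of different gadgets are disjoint, all post-selections can be commuted to the end of the circuit, and the gadget identity yields
\begin{equation}
\bra{b}U\ket{a} \;=\; 4^{k}\,\bigl(\bra{b}\otimes\bra{M}\bigr)\,\tilde U\,\bigl(\ket{a}\otimes\ket{0}^{\otimes 4k}\bigr),
\end{equation}
with $\tilde U \in \operatorname{FLO}(n+4k)$ and $\ket{M} := \bigotimes_{j=1}^{k}\ket{M_{-\phi_j}}$. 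The state $\ket{\Psi} := \tilde U\bigl(\ket{a}\otimes\ket{0}^{\otimes 4k}\bigr)$ is fermionic Gaussian; feeding it through the phase-sensitive simulator of Sec.~\ref{sec:phase-sensitive-flo-sim} (applying the $k+1$ FLO blocks $V_j$ and the $2k$ gadget matchgates $G(H,H)$) produces its classical description in the sense of Definition~\ref{def:fermionic-gaussian-classical-data} in time $O(k(n+k)^{3})$.

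Second, decompose: Theorem~\ref{thm:decomposition-of-magic-states} applied gadget-by-gadget and tensored gives
\begin{equation}
\ket{M} \;=\; \sum_{x\in\{A,B\}^{k}}c_x\,\ket{s_x},\qquad \ket{s_x} := \bigotimes_{j=1}^{k}\ket{x_j(-\phi_j)},
\end{equation}
where each $\ket{s_x}$ is fermionic Gaussian on the $4k$ ancillas and $\sum_x|c_x|^{2}=\xi^{*}$ by Theorem~\ref{thm:flo-extent-multiplicativity}. Because the per-gadget decomposition is orthogonal ($\braket{A(\theta)}{B(\theta)}=0$), $\{\ket{s_x}\}$ is an orthonormal family, so the target amplitude $A := (\bra{b}\otimes\bra{M})\ket{\Psi} = \sum_x \bar c_x\,\alpha_x$ (with $\alpha_x := (\bra{b}\otimes\bra{s_x})\ket{\Psi}$) satisfies $\sum_x|\alpha_x|^{2}\leq 1$, and each individual overlap $\alpha_x$ is a Gaussian-Gaussian inner product computable in $O((n+k)^{3})$ time via Update rule 4.

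Third, sample: estimate $p = |4^{k}A|^{2}$ by a Bravyi--Gosset-style Monte Carlo procedure~\cite{bravyi-2019}. Draw $x$ from the product distribution $q_x := |c_x|^{2}/\xi^{*}$ (one two-outcome coin per gadget) and form the unbiased complex estimator $Y := (\xi^{*}\bar c_x/|c_x|^{2})\,\alpha_x$ of $A$. The orthonormality bound gives $\mathrm{E}[|Y|^{2}] = \xi^{*}\sum_x|\alpha_x|^{2}\leq\xi^{*}$, so a median-of-means boost of $O(\log(1/\delta))$ Hoeffding batches of $O(\xi^{*}/\epsilon'^{2})$ samples each produces $\tilde A$ with $|\tilde A - A|\leq\epsilon'$ with probability $\geq 1-\delta$. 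Choosing $\epsilon'$ so that $(|A|+\epsilon')^{2}-|A|^{2}\leq\epsilon$ (equivalently, after the $4^{k}$ rescaling between $A$ and $\bra{b}U\ket{a}$ is absorbed into the estimator, $\epsilon' = \sqrt{p+\epsilon}-\sqrt{p}$) converts amplitude precision into probability precision, and multiplying the per-sample cost $O((n+k)^{3})$ by the sample count together with the one-off preparation cost $O(k(n+k)^{3})$ reproduces the claimed total runtime.

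The main obstacle is the linear-in-$\xi^{*}$ variance bound in step three: a direct estimator of $|A|^{2}$ naively incurs variance scaling as $(\xi^{*})^{2}$, and only the orthogonality of the per-gadget magic-state decomposition from Theorem~\ref{thm:decomposition-of-magic-states} permits the tight linear scaling. With that bound in hand, the remaining bookkeeping --- the conversion between amplitude and probability accuracy, the median-of-means boost, and the gate-level cost accounting --- is routine.
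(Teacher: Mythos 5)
Your overall architecture (gadgetize, decompose the magic states via Theorem~\ref{thm:decomposition-of-magic-states}, importance-sample) matches the paper's, but the step you flag as ``the main obstacle'' is exactly where your argument breaks. First, a bookkeeping error that matters: $\sum_x |c_x|^2 = \prod_j\bigl(\cos^2(\phi_j/4)+\sin^2(\phi_j/4)\bigr)=1$, not $\xi^*$; the extent is the \emph{squared $\ell_1$-norm} $\xi^*=\bigl(\sum_x|c_x|\bigr)^2$. Hence your distribution $q_x=|c_x|^2/\xi^*$ does not normalize, and once you fix it to $q_x=|c_x|^2$ the $\xi^*$ in your variance computation evaporates. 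The paper instead samples from the $\ell_1$-weighted distribution $P(y)\propto|c_y|$, for which the estimator $\sqrt{\xi^*}\,(\bar c_y/|c_y|)\,\alpha_y$ is \emph{bounded} by $\sqrt{\xi^*}$ whenever $|\alpha_y|\le 1$, and then applies a Hoeffding-type bound (Lemma~7 of the cited reference) rather than median-of-means.

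Second, and more seriously, orthonormality of $\{\ket{s_x}\}$ only yields $\sum_x|\tilde\alpha_x|^2\le 1$ for the \emph{unrescaled} overlaps $\tilde\alpha_x=(\bra{b}\otimes\bra{s_x})\ket{\Psi}$, hence the per-term bound $|\tilde\alpha_x|\le 1$ and $|4^k\tilde\alpha_x|\le 4^k$. Whichever way you route the $4^k$ gadget-normalization factor, your second moment then carries a residual $16^k$: e.g.\ the estimator of $\bra{b}U\ket{a}=4^kA$ has second moment $16^k\sum_x|\tilde\alpha_x|^2\le 16^k$, and the required amplitude precision is $\sqrt{p+\epsilon}-\sqrt{p}$, giving a sample count with an extra $16^k$ over the claim. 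What is actually needed is the per-term bound $|\tilde\alpha_x|\le 4^{-k}$, so that $|\alpha_x|=|4^k\tilde\alpha_x|\le 1$. This is the content of Lemma~\ref{lem:bounding-the-norm2}, proved in Appendix~\ref{lem:app-bounding-norm-alpha-y}: each state $\ket{A(\theta_j)}$ or $\ket{B(\theta_j)}$ is a product of two generalized Bell states, and projecting the gadget of Fig.~\ref{fig:gadget2} onto such a product shrinks the norm of \emph{any} input by exactly a factor of $1/4$, independently of $y_j$ and of the rest of the state. That gadget-by-gadget norm-reduction argument is the missing idea; it does not follow from orthogonality of the decomposition, and without it the claimed $\order{\xi^*}$ sample complexity is not established.
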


\begin{proof}
First, express the probability $p$ as
\begin{align}
    p &= \abs{\bra{b} V_{k+1} \prod_{j=1}^k C(\theta_j)_{n_j n_{j}+1}V_j  \ket{a}}^2\label{eqn:initial-prob-as-braket}.
\end{align}
Applying the gadgetization from Fig.~\ref{fig:gadget2} gives the following equation, valid for all qubit states $\ket{i}$, $\ket{j}$, $\ket{\phi}$, $\ket{\psi}$,
\begin{align}
   \bra{\phi}\bra{\psi} C(\theta) \ket{i}\ket{j} &= 4 \bra{\phi}\bra{M_{-\theta}}\bra{\psi}\left(G(H,H)\otimes I^{\otimes 2} \otimes G(H,H)\right)\ket{0}\ket{0}\ket{i}\ket{j}\ket{0}\ket{0}.
\end{align}
We therefore express the probability we seek to estimate as 
\begin{align}
    p &= 16^k \abs{\bra{0}^n\bra{M_{-\theta_1}}\cdots \bra{M_{-\theta_k}} \prod_{j=1}^m \tilde{V}_j \ket{0}^{\otimes(n+4k)}}^2,\label{eqn:initial-prob-as-braket2}
\end{align}
where the $\tilde{V}_j$ are FLO unitaries expressing the gadgetized circuit, and $m = \operatorname{poly}(n,k)$. Note that although swap gates are \emph{not} FLO operations, we can swap specific states, including computational basis states and the magic states $\ket{M_\theta}$ through the circuit, so we can assume the magic states are the last $4k$ qubits~\cite{Yoganathan-2019-magic-for-matchgates}. We have also added FLO operations at the start and end of the circuit to replace the input and output states $a$ and $b$ with vacuum states. Using the methods developed in Sec.~\ref{sec:phase-sensitive-flo-sim}, we can now apply the $\tilde{V}_j$ circuit to the initial vacuum state $\ket{0}^{n+4k}$ in polynomial time,  to obtain the state 
\begin{align}
    \ket{\psi} &= \prod_{j=1}^m \tilde{V}_j \ket{0}^{\otimes(n+4k)}.
\end{align}


Next, using the decomposition of Theorem~\ref{thm:decomposition-of-magic-states}, we now expand
\begin{align}
    \bigotimes_{j=1}^k\ket{M_{-\theta_j}} &= \bigotimes_{j=1}^k\left(\vphantom{A_j^j}\cos(-\theta_j/4)\ket{A(-\theta_j)}  + i \sin(-\theta_j/4)\ket{B(-\theta_j)}\right)\\
    &= \sum_{y\in\{0,1\}^k} i^{\abs{y}}\left(\prod_j \cos(-\theta_j/4)^{1-y_j} \sin(-\theta_j/4)^{y_j}\right)\ket{\tilde{y}, -\vec{\theta}},
\end{align}
where the $4k$-qubit state $\ket{\tilde{y}, \theta}$ is a tensor product of $k$ $4$-qubit states and the $j^\text{th}$ $4$-qubit state in $\ket{\tilde{y}, \theta}$ is $\ket{A(\theta_j)}$ if $y_j = 0$ and $\ket{B(\theta_j)}$ if $y_j = 1$. Substituting this into the expression for the probability, Eq.~\eqref{eqn:initial-prob-as-braket2}, we have
\begin{align}
    p &= \abs{\sum_{y\in\{0,1\}^k}4^{k} i^{-\abs{y}} \left(\prod_j \cos(-\theta_j/4)^{1-y_j} \sin(-\theta_j/4)^{y_j}\right)\bra{0}^n \bra{\tilde{y}, -\vec{\theta}} \prod_{j=1}^m \tilde{V}_j \ket{0}^{\otimes(n+4k)}}^2.
\end{align}
Now, let
\begin{align}
    \xi^* &= \left(\sum_y \abs{\left(\prod_j \cos(-\theta_j/4)^{1-y_j} \sin(-\theta_j/4)^{y_j}\right)}\right)^2\\
    &\geq \ext{\prod_j \ket{M_{-\theta_j}}},
\end{align}
where $\extno$ is the FLO-extent from Definition~\ref{def:extent}, and the upper bound holds by definition, so that the map
\begin{align}
    P:y\mapsto \frac{1}{\sqrt{\xi^*}}\abs{\left(\prod_j \cos(-\theta_j/4)^{1-y_j} \sin(-\theta_j/4)^{y_j}\right)},
\end{align}
is a probability distribution. \changed{Note that $P$ is a product probability distribution and so is easy to sample from by separately sampling from $\{0,1\}$ for each bit of $y$}. Defining the complex numbers
\begin{align}
    \alpha_y &= 4^k i^{-\abs{y}}\bra{0}^n \bra{\tilde{y}, -\vec{\theta}} \prod_{j=1}^m \tilde{V}_j \ket{0}^{\otimes(n+4k)}\operatorname{sign}\left(\prod_j \cos(-\theta_j/4)^{1-y_j} \sin(-\theta_j/4)^{y_j}\right),
\end{align}
each of which can be computed in polynomial time using the methods described in Sec.~\ref{sec:phase-sensitive-flo-sim}, we can now re-express the probability $p$ as
\begin{align}
    p &= \xi^*\abs{\sum_{y\in\{0,1\}^k}  P(y)\alpha_y}^2.
\end{align}
At this point, we need to bound the absolute value of each of the complex numbers in this sum. This bound is provided by Lemma~\ref{lem:bounding-the-norm2}, the proof of which we leave to Appendix~\ref{lem:app-bounding-norm-alpha-y}.
\begin{lem}[Bounding the norm of $\alpha_y$]\label{lem:bounding-the-norm2}
  For all $y\in\{0,1\}^k$ one has $\abs{\alpha_y} \leq 1$.
\end{lem}

We now seek to use a sampling algorithm to approximate $p$. For that, we will employ Lemma~7 of Ref.~\cite{PRXQuantum.3.020361} with the special case that here $d=1$ (complex numbers are one dimensional complex vectors). Let 
\begin{align}
    \mu = \sum_{y\in\{0,1\}^k}  P(y) \sqrt{\xi^*}\alpha_y,
\end{align}
and let $(y_j)_{j=1}^s$ be a list of $s$ samples independently sampled from the probability distribution $P$. Consider also the sample mean
\begin{align}
    m &= \frac{1}{s}\sum_j\sqrt{\xi^*} \alpha_{y_j}.
\end{align}
Then, for all $\epsilon > 0$, Lemma~7 of Ref.~\cite{PRXQuantum.3.020361} tells us that
\begin{align}
    \operatorname{Pr}(\abs{|\mu|^2 - |m|^2} \geq \epsilon) \leq 2e^2\exp\left(\frac{-s (\sqrt{p+\epsilon} - \sqrt{p})^2}{2\left(\sqrt{\xi^*} + \sqrt{p}\right)^2}\right).
\end{align}
Now, if we demand that
\begin{align}
    \operatorname{Pr}(\abs{|\mu|^2 - |m|^2} \geq \epsilon) < \delta,
\end{align}
for some $\delta > 0$, then we require $s$ to be such that
\begin{align}
  2e^2\exp\left(\frac{-s (\sqrt{p+\epsilon} - \sqrt{p})^2}{2\left(\sqrt{\xi^*} + \sqrt{p}\right)^2}\right) &< \delta.
\end{align}
Therefore,
\begin{align}
    s  &> 2\frac{\left(\sqrt{\xi^*} + \sqrt{p}\right)^2}{(\sqrt{p+\epsilon} - \sqrt{p})^2}\log\left(\frac{2e^2}{\delta}\right)
    .\label{eqn:how-many-samples-for-algorithm-1}
\end{align}

For each sample we require the application of \changed{a single FLO operation, with runtime $\order{(k+n)^3}$}, followed by the computation of inner product between the state and the computational basis vector $\ket{0}$, which also requires time $\order{(k+n)^3}$. We therefore obtain the statement of Theorem~\ref{thm:runtime-1}.
\end{proof}

Note that the runtime in Theorem~\ref{thm:runtime-1} depends on the unknown probability $p$ which we are attempting to estimate. This may be addressed in two ways. One may note that the denominator $(\sqrt{p+\epsilon} - \sqrt{p})^2$ appearing in the runtime is monotonic in $p$, and so one can take the most pessimistic choice $p = 1$. Alternatively, a significant improvement in runtime may be obtained by employing \textsc{Estimate} algorithm of Ref.~\cite{PRXQuantum.3.020361} to bound the required number of samples without knowing~$p$.


\subsection{Probability estimation for measurements of a subset of the qubits}
\label{subsec:measuring-some-qubits}

In a quantum computer, it is not generally necessary to measure every qubit. Thus, it is natural to consider scenarios where some qubits are measured and others are not. Indeed, the standard complete problem for promise-BQP, is formulated as the application of a quantum circuit followed by a single qubit computational basis-measurement~\cite{Yao-1993,Janzing-2007}. Due to Lemma~1 of Ref.~\cite{Yoganathan-2019-magic-for-matchgates}, computational basis states can be swapped through a FLO circuit using only free operations (FLO unitaries). Therefore, since we are restricting our attention to computational-basis measurements, we can assume without loss of generality that the first $w\leq n$ qubits are measured. 

The methods we will employ are very similar to those presented in the previous subsection. However, as we are not measuring every qubit we do not obtain the Born-rule probability as a simple sum of complex numbers, instead we will express it as the norm of a vector expressed as a sum of fermionic-Gaussian statevectors. An additional step will then be required to compute this norm in an efficient manner. 

\begin{thm}[Probability estimation algorithm 2]\label{thm:runtime-2}
    Consider an $n$-qubit quantum circuit of the form
    \begin{align}
        U &= V_{k+1} \prod_{j=1}^k  C(\theta_j)_{n_j,n_j+1} V_j,
    \end{align}
    where the $V_j$ are arbitrary FLO unitaries and each $C(\theta_j)$ is a controlled phase gate applied to two adjacent qubits $n_j$ and $n_{j+1}$. Let $\ket{a}$ be an even parity computational basis state on $n$ qubits, $\ket{b}$ be an even parity computational basis state on $w < n$ qubits, and
    \begin{align}
        p = \abs{\ketbra{b}{b} \otimes I^{\otimes (n-w)} U \ket{a}}^2,
    \end{align}
    be the associated Born-rule probability. \changed{Finally define
    \begin{align}
        \xi^* &= \prod_{j=1}^k\ext{\ket{M_{\theta_j}}}\\
            &= \prod_{j=1}^k\left(\cos\left(\frac{\theta_j}{4}\right) + \sin\left(\frac{\theta_j}{4}\right)\right)^2.
    \end{align}
    }Then, given parameters $\epsilon,\delta> 0$, there exists an algorithm that outputs an estimate $p^*$ of $p$ such that
    \begin{align}
        \mathrm{Pr}\left(\abs{p^* - p} > \epsilon \right) < \delta,
    \end{align}
    with runtime
    \begin{align}
        \tau &= \order{\frac{\left(\sqrt{\xi^*} + \sqrt{p}\right)^2}{\left(\sqrt{p + \frac{\epsilon}{2}} - \sqrt{p}\right)^2} \changed{\left(\frac{p}{\epsilon} + \frac{1}{2}\right)^2}\log\left(\delta\right)^2 \changed{\sqrt{n}}  (n+4k)^3}
    \end{align}
    Similarly to Theorem~\ref{thm:runtime-1}, we can simplify this expression in the case that $\epsilon$ is small compared to $p$:
    \begin{align}
        \tau &= \order{\frac{\xi^* \changed{p^3}}{\epsilon^4}\log\left(\delta\right)^2 \changed{\sqrt{n}} (n+4k)^3}.
    \end{align}
\end{thm}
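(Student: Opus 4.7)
The plan is to extend the Monte Carlo argument of Theorem~\ref{thm:runtime-1} to the partial-measurement case, where the target $p=\norm{MU\ket{a}}^2$ is the squared norm of a partially-projected vector, with $M=\ketbra{b}{b}\otimes I^{\otimes(n-w)}$. Since Lemma~1 of Ref.~\cite{Yoganathan-2019-magic-for-matchgates} allows computational-basis states to be swapped through FLO circuits for free, I would assume without loss of generality that the measured qubits are the first $w$.

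First I would perform the same preparation as in Theorem~\ref{thm:runtime-1}: reverse-gadgetize each controlled-phase gate using Fig.~\ref{fig:gadget2}, propagate the vacuum through the FLO part of the enlarged circuit with the phase-sensitive simulator of Section~\ref{sec:phase-sensitive-flo-sim}, and expand every magic-state projection $\bra{M_{-\varphi_j}}$ using Theorem~\ref{thm:decomposition-of-magic-states}. The resulting decomposition of the subnormalised output reads $MU\ket{a}=\sum_y b_y M\ket{\phi_y}$ with fermionic Gaussian states $\ket{\phi_y}$ on $n+4k$ qubits, and coefficients $b_y$ whose associated distribution $P(y)=\abs{b_y}^2/\xi^{*}$ and phases $\theta_y=\arg(b_y)$ are exactly those appearing in Theorem~\ref{thm:runtime-1}. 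Taking the squared norm gives
\begin{equation}
    p \;=\; \xi^{*}\!\!\sum_{y,y'\in\{0,1\}^k}\!\!\sqrt{P(y)P(y')}\,e^{i(\theta_{y'}-\theta_{y})}\bra{\phi_y}M\ket{\phi_{y'}}.
\end{equation}

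The central new ingredient is an efficient subroutine for evaluating $\bra{\phi_y}M\ket{\phi_{y'}}$. Using $\ketbra{0}{0}_j=(I-ic_{2j}c_{2j+1})/2$ and $\ketbra{1}{1}_j=(I+ic_{2j}c_{2j+1})/2$, the projector $M$ is a commuting product of $w\le n$ quadratic Majorana operators. Although a naive expansion of $M$ contains $2^w$ Majorana-string contributions, the entire matrix element can be collapsed into a single Pfaffian of a matrix of size $\order{n+k}$ via the Wick/Grassmann calculation that already underlies the computational-basis inner-product update rule of Section~\ref{sec:phase-sensitive-flo-sim} and Appendix~\ref{app:recovering-phases}. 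Each evaluation then takes $\order{(n+4k)^3}$ time. Establishing this compact Pfaffian representation, together with a polynomial-time bookkeeping procedure for the overall Pfaffian sign across the partial projection, is the \emph{main obstacle} of the argument: a failure to collapse the $2^{w}$-term expansion would force an exponential per-sample cost and defeat the whole scheme.

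With the subroutine in place, I would form an unbiased complex estimator $X=\xi^{*}e^{i(\theta_{y'}-\theta_{y})}\bra{\phi_y}M\ket{\phi_{y'}}$ by drawing $y,y'$ independently from $P$. Cauchy--Schwarz together with $\norm{M}\le 1$ gives $\abs{X}\le\xi^{*}$, so Hoeffding alone already produces a $p$-estimate with the crude scaling $\order{\xi^{*2}/\epsilon^{2}}$. To recover the sharper factor $(\sqrt{\xi^{*}}+\sqrt{p})^{2}/(\sqrt{p+\epsilon/2}-\sqrt{p})^{2}$ actually appearing in the runtime I would instead estimate $\sqrt{p}$ via the Bennett-type concentration of Lemma~7 of Ref.~\cite{PRXQuantum.3.020361} (the same bound used in Theorem~\ref{thm:runtime-1}) at accuracy $\epsilon/2$, and then square the estimate; the squaring inflates the error to at most $\epsilon$, delivering $\epsilon$-accuracy in $p$. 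A median-of-means amplification from constant to $\delta$ failure probability, combined by union bound with the inner Bennett stage, accounts for the $\log(1/\delta)^{2}$ factor, while the remaining $n^{3/2}\epsilon^{-2}$ is absorbed by a polynomial overhead in the outer sample count and the per-sample Pfaffian sign-tracking. Combined with the per-sample cost of $\order{(n+4k)^3}$, this yields the runtime claimed in Theorem~\ref{thm:runtime-2}.
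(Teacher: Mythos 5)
Your preparation (reverse gadgetization, phase-sensitive propagation, decomposition of each $\bra{M_{-\varphi_j}}$ via Theorem~\ref{thm:decomposition-of-magic-states}, and the importance distribution built from the coefficients) matches the paper. But the core of the paper's argument for Theorem~\ref{thm:runtime-2} is a \emph{two-stage} procedure that your proposal is missing. Stage one is sparsification: draw $s$ samples from $P$ to form $\ket{\Psi}=\frac{1}{s}\sum_{j=1}^{s}\ket{\psi_j}$, a superposition of $s$ fermionic Gaussian states whose squared norm concentrates around $p$ by the vector-valued version of Lemma~7 of Ref.~\cite{PRXQuantum.3.020361}; this is the sole source of the factor $\left(\sqrt{\xi^*}+\sqrt{p}\right)^2/\left(\sqrt{p+\frac{\epsilon}{2}}-\sqrt{p}\right)^2$. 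Stage two is norm estimation: $\norm{\Psi}_2^2$ is estimated using the random variable $X=2^n\abs{\braket{\theta}{\Psi}}^2$ with $\ket{\theta}=U\ket{x}$ for a random permutation $\phi(U)$ and random bit string $x$ (Appendix~B of Ref.~\cite{bravyi-gosset-17-impurity}), whose second moment is bounded by $2\sqrt{n}\norm{\Psi}_2^4$, amplified by median-of-means. That is where the $\sqrt{n}\,\epsilon^{-2}$ sample count, the second $\log(1/\delta)$ factor, and hence the $n^{3/2}$ in the runtime come from. Your proposal never introduces this norm estimator, and your claim that the $n^{3/2}$ is ``absorbed by a polynomial overhead'' misattributes it.

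Your alternative route --- direct Monte Carlo over pairs $(y,y')$ of the double sum $\sum_{y,y'}\bar{b}_y b_{y'}\bra{\phi_y}M\ket{\phi_{y'}}$ --- does yield an unbiased estimator bounded by $\xi^*$, so Hoeffding gives a correct algorithm with $\order{\xi^{*2}\epsilon^{-2}\log(1/\delta)}$ samples; but that is a different (and generally weaker) runtime than the one claimed. The patch you offer to recover the stated bound, ``estimate $\sqrt{p}$ via the Bennett-type concentration of Lemma~7 and then square,'' is not a well-defined procedure: $\sqrt{p}$ is the norm of a vector, not the modulus of a scalar mean, and there is no efficiently sampled scalar random variable with mean $\sqrt{p}$ in your setup. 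Lemma~7 only guarantees that the \emph{empirical vector} $\ket{\Psi}$ has norm close to $\sqrt{p}$; one still needs a subroutine to evaluate or estimate $\norm{\Psi}_2^2$, which is exactly the missing second stage. (Your flagged ``main obstacle'' --- collapsing the $2^w$-term expansion of $M$ --- is real but solvable, e.g.\ by applying the single-qubit projectors sequentially via Lemma~\ref{lem:app-applying-projectors}; the paper sidesteps it entirely because it only ever computes overlaps of projected Gaussian states with random Gaussian reference states, never pairwise projected overlaps.)
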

\begin{proof}
We follow similar reasoning to Sec.~\ref{subsec:measuring-all-qubits} (or Appendix~C of Ref.~\cite{PRXQuantum.3.020361}), to obtain a vector $\ket{\Psi}$ of the form
\begin{align}
    \ket{\Psi} &= \frac{1}{s}\sum_{j=1}^{s} \ket{\psi_j},
\end{align}
where each $\ket{\psi_j}$ is a fermionic Gaussian state. In the above, we choose
\begin{align}
    s \geq \frac{2\left(\sqrt{\xi^*} + \sqrt{p}\right)^2}{\left(\sqrt{p + \frac{\epsilon}{2}} - \sqrt{p}\right)^2}\log\left(\frac{4 e}{\delta}\right),
\end{align}
so that, due to the vector-valued Hoeffding's inequality, given in Lemma~7 of Ref.~\cite{PRXQuantum.3.020361} 
\begin{align}
    \mathrm{Pr}\left(\abs{p - \norm{\Psi}^2_2} > \frac{\epsilon}{2}\right) < \frac{\delta}{2}.\label{eqn:vector-sampling-fail-prob}
\end{align}
We now restate a result from Appendix~B of Ref.~\cite{bravyi-gosset-17-impurity} which leads to an efficient norm-estimation algorithm for vectors expressed as sums of fermionic Gaussian statevectors.
\begin{lem}
    If $\ket{\psi}$ is any $n$-qubit statevector, $U$ is a FLO unitary chosen such that $\phi(U)$ is a random permutation matrix, $x$ is a random binary vector, and $\ket{\varphi} = U \ket{x}$, then the random variable 
    \begin{align}
        X &= 2^n \abs{\braket{\varphi}{\psi}}^2,
    \end{align}
    has mean 
    \begin{align}
        \mathbb{E}(X) &= \norm{\psi}^2_2,
    \end{align}
    and second moment
    \begin{align}
        \mathbb{E}(X^2) &\leq 2\sqrt{n} \norm{\psi}^4_2.
    \end{align}
\end{lem}
If we apply this to $\Psi$ and take $l$ samples of $X$, the sample mean has variance
\begin{align}
    \sigma_\mu^2 &= \frac{2\sqrt{n}}{l}\norm{\Psi}^4_2 \leq \frac{2\sqrt{n}}{l}\left(p+\frac{\epsilon}{2}\right)^2, \label{eqn:variance-sample-mean}
\end{align}
where the upper bound holds with probability greater than $1-\frac{\delta}{2}$ because of our choice of $\ket{\Psi}$. Rather than using the sample mean directly, we get a significant improvement in the number of samples required by employing a median-of-means type argument. We now apply Lemma~7.1 of Ref.~\cite{Devroye-2016}, which we reproduce here for convenience.
\begin{lem}\label{lem:restated-median-of-means}
    Let $Y^L = (Y_1, Y_2\hdots Y_L) \in \mathbb{R}^L$ be random variables with the same mean $\mu$ and variance bounded by~$\sigma^2$. Let $\eta > 1$ be given and let $M$ be the median of the $Y_j$. Then,
    \begin{align}
        \mathrm{Pr}\left(\abs{\mu - M} > 2 \eta \sigma \right) < \eta^{-L}. \label{eqn:median-of-means-from-reference}
    \end{align}
\end{lem}
For our purposes, we apply Lemma~\ref{lem:restated-median-of-means} with $\eta = e$, and choose $l = 32\sqrt{n}e^2 \left(p+\frac{\epsilon}{2}\right)^2 \epsilon^{-2}$, so that Eq.~\eqref{eqn:variance-sample-mean} becomes $\sigma_\mu \leq \epsilon (2e)^{-1}$. With these choices, Eq.~\eqref{eqn:median-of-means-from-reference} becomes
\begin{align}
        \mathrm{Pr}\left(\abs{\mu - M} > \frac{\epsilon}{2} \right) < e^{-L}
        \label{eqn:median-of-means-adapted}.
\end{align}
In order to obtain an estimate $\tilde{\mu}$ of $\norm{\Psi}^2_2$ satisfying
\begin{align}
    \mathrm{Pr}(\abs{\tilde{\mu} - \norm{\Psi}^2_2} > \frac{\epsilon}{2}) < \frac{\delta}{2}, \label{eqn:norm-est-fail-prob}
\end{align}
we therefore require a total of $lL$ samples satisfying
\begin{align}
    lL = 32e^2\sqrt{n} \left(\frac{p}{\epsilon} + \frac{1}{2}\right)^2\log\left(\frac{2}{\delta}\right).
\end{align}
Combining equations~\eqref{eqn:vector-sampling-fail-prob} and~\eqref{eqn:norm-est-fail-prob}, we have 
\begin{align}
    \mathrm{Pr}\left(\abs{\tilde{\mu} - p} > \epsilon \right) &\leq \mathrm{Pr}\left(\abs{\tilde{\mu} - \norm{\Psi}_2^2} +  \abs{\norm{\Psi}_2^2 - p} > \epsilon \right) \\ 
    & \leq \mathrm{Pr}\left(\abs{\tilde{\mu} - \norm{\Psi}_2^2} > \frac{\epsilon}{2} \text{ or } \abs{\norm{\Psi}_2^2 - p} > \frac{\epsilon}{2} \right)\\
    & \leq \mathrm{Pr}\left(\abs{\tilde{\mu} - \norm{\Psi}_2^2} > \frac{\epsilon}{2} \right) +\mathrm{Pr}\left(\abs{\norm{\Psi}_2^2 - p} > \frac{\epsilon}{2} \right)\\
    &< \delta,
\end{align}
where the second upper bound is just the union bound. Obtaining this performance requires computing $s lL $ inner products, each of which can be computed using the methods developed in Sec.~\ref{sec:phase-sensitive-flo-sim} in time 
\begin{align}
    \tau_{\text{sample}} &= \changed{\order{(n+4k)^3}}.
\end{align}\
Therefore, we obtain  the statement of Theorem~\ref{thm:runtime-2}.
\end{proof}


\section{Discussion of performance}
\label{sec:discussion}

Let us now provide a brief discussion of the performance of our algorithm in comparison to the prior methods. We start by comparing
Eq.~\eqref{eqn:how-many-samples-for-algorithm-1} (that captures the runtime complexity of our algorithm) to the corresponding expression of Ref.~\cite{PhysRevResearch.4.043100},
\begin{align}
    N \geq 2 W(\mathcal{E})^2 \frac{1}{\epsilon^2}\log\left(\frac{1}{\delta}\right),
\end{align}
with $W(\mathcal{E})$ denoting the fermionic non-linearity. Analytic forms for the fermionic non-linearity bounds arising from the decompositions of Ref.~\cite{PhysRevResearch.4.043100} are not given, however the fermionic nonlinearlity of a swap gate $W(\operatorname{SWAP})$ is approximately $3$ according to Fig.~1 of the same reference. Comparing their methods to Ref.~\cite{Mitarai_2021}, one can verify that in fact 
\begin{align}
    W(\mathcal{E}_\text{rot}) &= 1 + 4 \abs{\cos(\theta)\sin(\theta)} = 1 + 2 \abs{\sin(2\theta)},
\end{align}
where
\begin{align}
    \mathcal{E}_\text{rot}(\rho) &= \exp\left(i\theta \operatorname{Z}\otimes\operatorname{Z}\right)\rho \exp\left(i\theta \operatorname{Z}\otimes\operatorname{Z}\right)^\dagger,
\end{align}
is FLO-equivelent to $\operatorname{C}(4\theta)$. For a circuit with $k$ swap gates, the algorithm of Ref.~\cite{PhysRevResearch.4.043100} therefore requires
\begin{align}
    s =  9^k\times \frac{2}{\epsilon^2}\log\left(\frac{1}{\delta}\right)
\end{align}
samples, while the one described in this paper requires
\begin{align}
    s &= 2\frac{\left(\sqrt{2^k} + \sqrt{p}\right)^2}{(\sqrt{p+\epsilon} - \sqrt{p})^2}\log\left(\frac{2e^2}{\delta}\right)\\
    &\approx 2^k \frac{2}{\changed{\epsilon^2}}\log\left(\frac{2e^2}{\delta}\right),
\end{align}
\changed{assuming that $\epsilon$ is significantly smaller than $p$}. Of course, the phase-sensitive FLO simulation routines we require are more involved (and thus slower) than the standard phase-insensitive ones used by Ref.~\cite{PhysRevResearch.4.043100}. However, as the number of non-FLO gates in the circuits grows, we expect the ratio of fermionic non-linearity to FLO extent to strongly favour our methods. A comparison of the cost of adding a single $\operatorname{C}(\theta)$ gate using our methods with those of Ref.~\cite{PhysRevResearch.4.043100} is presented in Fig.~\ref{fig:comparison}.


Another interesting work on extending matchgate/FLO simulation methods to universal circuits is the recent Ref.~\cite{Mocherla-2023}. In this work, the simulation is performed by tracking the evolution of Pauli operators through a circuit comprised of FLO unitaries and non-FLO ``universality-enabling'' gates, which are essentially the same as the $\mathcal{E}_\text{rot}$ gate of Ref.~\cite{PhysRevResearch.4.043100}. The authors of Ref.~\cite{Mocherla-2023} provide runtimes which are different across different regimes, but the exponential component in each case is essentially $\order{4^k}$, where $k$ is the number of non-matchgate gates in the circuit. There are two sources for the improvement in our runtime scaling compared to algorithm of Ref.~\cite{Mocherla-2023}. Firstly, by using statevector simulation methods over density-matrix/operator methods, the scaling of $\order{4^k}$ is reduced to $2^k$. Secondly, we obtain improved runtime for controlled-phase gates which are ``between'' the identity and the worst-case $\operatorname{C}(\pi)$, controlled-Z gate. Rather than paying the full cost of a factor of $2$ for each non-free gate, our runtime is multiplied by the extent, smoothly interpolating between $1$ and $2$.

Finally, we would like to make a note about numerical errors. In order to run our algorithm, it is necessary to perform calculations involving real numbers (and matrices) with discrete \emph{floating point} numbers. Calculations involving these numbers are susceptible to floating point errors, which can build up and invalidate calculations. The most computationally intensive part of the update rules of our phase-sensitive FLO simulator is the computation of the unknown phase in the last step of applying an elementary FLO unitary. Our algorithm has the important property that errors from this step can \emph{not} build up in the usual way, upon repeated application of the update rules, since the unknown phase is either one or minus one. The only way for a floating point error from this step to lead to an error in the result is if in a single application of the rule the phase is computed with a relative error greater than one.

\begin{figure}
    \centering
        \begin{tikzpicture}
        \pgfplotsset{
every axis legend/.append style={
at={(0,1)},
anchor=north west,
},
}            \begin{axis}[
                width=0.6\textwidth,
		        height=0.4\textwidth,
		        xlabel=$\theta$,
		        ylabel style={align=center},
		        ylabel={Runtime cost for single\\ non-FLO gate},grid=both,
		        xmin=0, xmax=3.15, ymin=0, ymax=10,
		        xtick = {0, {pi/4}, {pi/2}, {3*pi/4},{pi}},
		        xticklabels={$0$, {$\frac{\pi}{4}$}, {$\frac{\pi}{2}$}, {$\frac{3\pi}{4}$}, {$\pi$}}
		        ]
		        \addplot[color=red,thick] table [x index=1,y index=2,col sep=space] {runtime-single-gate-comparison.txt};
		        \addlegendentry{$\xi(\ket{M_\theta})$}
		        \addplot[color=blue, dashed,thick] table [x index=1,y index=4,col sep=space] {runtime-single-gate-comparison.txt};
		        \addlegendentry{$W([\operatorname{C}(\theta)])^2$}
		        \end{axis}
        \end{tikzpicture}
    \caption{\label{fig:comparison}\textbf{Runtime cost of $C(\theta)$.} A comparison of the runtime cost of adding a single non-FLO controlled-phase to a quantum circuit for our simulation method and the algorithm of Ref.~\cite{PhysRevResearch.4.043100}. The notation $W([\operatorname{C}(\theta)])$ indicates the fermionic nonlinearity defined in that paper, computed for the channel $[\operatorname{C}(\theta)] : \rho \mapsto \operatorname{C}(\theta) \rho \operatorname{C}(\theta)^\dagger$. The runtime obtained by the authors of Ref.~\cite{PhysRevResearch.4.043100} increases by a factor  $W([\operatorname{C}(\theta)])^2$ for each controlled-phase gate added while ours increases by a factor of $\xi(\ket{M_\theta})$. We note a minor difference in notation, the controlled-phase gates we consider here are equivalent to the $\mathcal{E}_\text{rot}$ gates of Ref.~\cite{PhysRevResearch.4.043100} with the phase divided by 4.}
\end{figure}
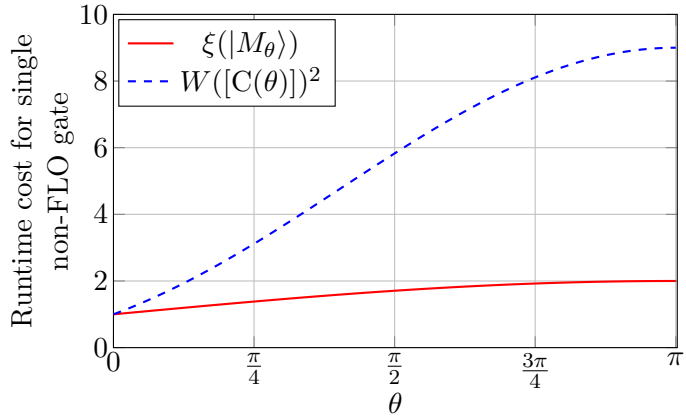


\section{Outlook}
\label{sec:outlook}

We have presented a classical algorithm for additive precision Born-rule probability estimation of universal quantum circuits composed of fermionic-linear-optical and controlled-phase gates. This way, we have extended the realm of the classically simulable quantum circuits and provided tools that can be used, e.g., for the verification of quantum devices. Moreover, Born-rule probability estimation can be used directly for estimating the expectation values of self-adjoint operators that may be expressed as a sum of polynomially many binary (e.g. $\pm 1$-valued) observables. This fact has been exploited to apply Born-rule probability estimation algorithms to the QAOA (Quantum Approximate Optimization Algorithm) algorithm in some highly specific instances~\cite{PRXQuantum.3.020361}. A natural application of the present work is the estimation of expectation-values of observables comprised of a few (polynomially many) products of Majorana operators (a natural setting in the context of quantum chemistry \cite{doi:10.1126/science.abb9811,PhysRevLett.120.110501,PhysRevApplied.9.044036,Dallaire-Demers_2019} and fermionic many-body systems \cite{Phasecraft2022,UnbiasingMonteCarlo2022}), in states which are prepared using circuits containing many FLO unitaries and few non-FLO gates. We also expect our work to find application in the increasingly important work of characterising, verifying and validating the output of quantum computers. In particular, we believe that our algorithms for the estimation of output probabilities can be directly used  for certification (via an analogue of cross-entropy benchmarking utilized in \cite{google2019supremacy}) of recently introduced quantum advantage scheme  \cite{PRXQuantum.3.020328} in which FLO circuits are initialized by a tensor product of magic states.

Another highly relevant and natural task one can consider is that of sampling from the Born-rule probability distribution or, more realistically, returning samples from a probability distribution $\epsilon$-close in total variation distance. Under plausible complexity-theoretic assumptions, \changed{this} task is strictly stronger than the Born-rule probability estimation~\cite{Pashayan2020fromestimationof}. Despite this, our methods naturally lead to a method for additive-error sampling, with the required reasoning following closely the arguments that appeared in Appendix~4.1 of Ref.~\cite{bravyi-2019} and Section~II of the Supplemental Material of Ref.~\cite{PhysRevLett.116.250501}. The essential point is that the norm-estimation procedure we sketch is, in fact, correct to \emph{multiplicative} precision, rather than additive. Thus, a procedure which estimates (conditional) probabilities for each measured qubit in turn can return samples from a probability distribution close (in $1$-norm) to the true Born-rule probability distribution. It should be emphasised that despite this, our methods do not lead to a multiplicative precision Born-rule probability estimation algorithm, since the argument from Hoeffding's inequality only provides an additive precision approximation of the statevector in question. The sampling algorithm that one obtains following this reasoning is rather impractical. The only super-polynomial component of the runtime is still given by the extent, however the necessity to estimate multiple probabilities, each to small precision, means that the polynomial factors of the runtime become significantly larger. It would be interesting, therefore, to explore dedicated sampling algorithms for the FLO+magic circuits we consider, perhaps along the lines of Ref.~\cite{PhysRevLett.128.220503} or even the Metropolis method given in Ref.~\cite{bravyi-2019}. We note, though, that the latter must be considered heuristically without understanding of the mixing time of a particular Markov chain appearing in the algorithm.

We also find it striking that the standard form we obtain for fermionic Gaussian statevectors, given in Lemma~\ref{lem:FLO-BM-form}, is, as much as is possible, identical to the CH-form of Ref.~\cite{bravyi-2019} for Clifford stabilizer states. A division of the Clifford group into essentially ``passive'' (Hadamard-free) and ``anti-passive'' (Hadamard) subgroups has recently been explored in Ref.~\cite{Bravyi-2021-hadamard-free}, resulting in a detailed exploration of the Clifford group including a decompostion that mirrors the KAK decomposition we employ here. At present, it is rather unclear why the continuous Lie-theoretic reasoning we employ, and the discrete algebraic reasoning of Ref.~\cite{Bravyi-2021-hadamard-free}, lead to what appears to be essentially the same result, albeit in different contexts. It would be interesting to explore this further, and gain understanding of whether this structure generalises to any other natural subgroups of the unitary group.



\section*{Acknowledgements}

The KK and OR-S acknowledge financial support by the Foundation for Polish Science through TEAM-NET project (contract no. POIR.04.04.00-00-17C1/18-00). O. R.-S. acknowledges funding from National Science Centre, Poland under the grant OPUS: UMO2020\\/37/B/ST2/02478. MO  acknowledges support from National Science Center, Poland within the QuantERA III Programme (No 2023/05/Y/ST2/00140 acronym Tuquan) that has received funding from the European Union’s Horizon 2020 program.

\bibliographystyle{quantum}
\bibliography{notes.bib}

\begin{thebibliography}{10}

\bibitem{lloyd1996universal}
Seth Lloyd.
\newblock ``Universal quantum simulators''.
\newblock \href{https://dx.doi.org/10.1126/science.273.5278.1073}{Science {\bf
  273}, 1073--1078}~(1996).

\bibitem{altman2021quantum}
Ehud Altman, Kenneth~R Brown, Giuseppe Carleo, Lincoln~D Carr, Eugene Demler,
  Cheng Chin, Brian DeMarco, Sophia~E Economou, Mark~A Eriksson, Kai-Mei~C Fu,
  et~al.
\newblock ``Quantum simulators: Architectures and opportunities''.
\newblock \href{https://dx.doi.org/10.1103/PRXQuantum.2.017003}{PRX Quantum
  {\bf 2}, 017003}~(2021).

\bibitem{gottesman-knill-1999}
Daniel Gottesman.
\newblock ``The heisenberg representation of quantum computers''~(1998).
\newblock
  \href{http://arxiv.org/abs/quant-ph/9807006}{arXiv:quant-ph/9807006}.

\bibitem{aaronson-gottesman-2004}
Scott Aaronson and Daniel Gottesman.
\newblock ``Improved simulation of stabilizer circuits''.
\newblock \href{https://dx.doi.org/10.1103/PhysRevA.70.052328}{Phys. Rev. A
  {\bf 70}, 052328}~(2004).

\bibitem{valiant-2001}
Leslie~G. Valiant.
\newblock ``Quantum computers that can be simulated classically in polynomial
  time''.
\newblock In Proceedings of the Thirty-Third Annual ACM Symposium on Theory of
  Computing.
\newblock \href{https://dx.doi.org/10.1145/380752.380785}{Page 114–123}.
\newblock Association for Computing Machinery~(2001).

\bibitem{terhal2002Classical}
Barbara~M. Terhal and David~P. DiVincenzo.
\newblock ``Classical simulation of noninteracting-fermion quantum circuits''.
\newblock \href{https://dx.doi.org/10.1103/PhysRevA.65.032325}{Phys. Rev. A
  {\bf 65}, 032325}~(2002).

\bibitem{Dekert2009}
Christoph Dankert, Richard Cleve, Joseph Emerson, and Etera Livine.
\newblock ``Exact and approximate unitary 2-designs and their application to
  fidelity estimation''.
\newblock \href{https://dx.doi.org/10.1103/PhysRevA.80.012304}{Phys. Rev. A
  {\bf 80}, 012304}~(2009).

\bibitem{Helsen2022matchgate}
Jonas Helsen, Sepehr Nezami, Matthew Reagor, and Michael Walter.
\newblock ``Matchgate benchmarking: {S}calable benchmarking of a continuous
  family of many-qubit gates''.
\newblock \href{https://dx.doi.org/10.22331/q-2022-02-21-657}{{Quantum} {\bf
  6}, 657}~(2022).

\bibitem{ErrCorrTutorial2019}
Joschka {Roffe}.
\newblock ``{Quantum error correction: an introductory guide}''.
\newblock \href{https://dx.doi.org/10.1080/00107514.2019.1667078}{Contemp.
  Phys. {\bf 60}, 226--245}~(2019).

\bibitem{huang20}
Hsin-Yuan Huang, Richard Kueng, and John Preskill.
\newblock ``Predicting many properties of a quantum system from very few
  measurements''.
\newblock \href{https://dx.doi.org/10.1038/s41567-020-0932-7}{Nat. Phys. {\bf
  16}, 1050--1057}~(2020).

\bibitem{zhao21}
Andrew Zhao, Nicholas~C. Rubin, and Akimasa Miyake.
\newblock ``Fermionic partial tomography via classical shadows''.
\newblock \href{https://dx.doi.org/10.1103/PhysRevLett.127.110504}{Phys. Rev.
  Lett. {\bf 127}, 110504}~(2021).

\bibitem{YJS2019}
Mithuna {Yoganathan}, Richard {Jozsa}, and Sergii {Strelchuk}.
\newblock ``{Quantum advantage of unitary Clifford circuits with magic state
  inputs}''.
\newblock \href{https://dx.doi.org/10.1098/rspa.2018.0427}{Proc. R. Soc. A.
  {\bf 475}, 20180427}~(2019).

\bibitem{PRXQuantum.3.020328}
Micha\l{} Oszmaniec, Ninnat Dangniam, Mauro~E.S. Morales, and Zolt\'an
  Zimbor\'as.
\newblock ``Fermion sampling: A robust quantum computational advantage scheme
  using fermionic linear optics and magic input states''.
\newblock \href{https://dx.doi.org/10.1103/PRXQuantum.3.020328}{PRX Quantum
  {\bf 3}, 020328}~(2022).

\bibitem{Franca21}
Daniel~Stilck Fran\ifmmode~\mbox{\c{c}}\else \c{c}\fi{}a, Sergii Strelchuk, and
  Micha\l{} Studzi\ifmmode~\acute{n}\else \'{n}\fi{}ski.
\newblock ``Efficient classical simulation and benchmarking of quantum
  processes in the {We}yl basis''.
\newblock \href{https://dx.doi.org/10.1103/PhysRevLett.126.210502}{Phys. Rev.
  Lett. {\bf 126}, 210502}~(2021).

\bibitem{deMelo2013}
Fernando {de Melo}, Piotr {{\'C}wikli{\'n}ski}, and Barbara~M. {Terhal}.
\newblock ``{The power of noisy fermionic quantum computation}''.
\newblock \href{https://dx.doi.org/10.1088/1367-2630/15/1/013015}{New J. Phys.
  {\bf 15}, 013015}~(2013).

\bibitem{Oszmaniec2014}
Micha\l{} Oszmaniec, Jan Gutt, and Marek Ku\ifmmode~\acute{s}\else \'{s}\fi{}.
\newblock ``Classical simulation of fermionic linear optics augmented with
  noisy ancillas''.
\newblock \href{https://dx.doi.org/10.1103/PhysRevA.90.020302}{Phys. Rev. A
  {\bf 90}, 020302}~(2014).

\bibitem{PRXQuantum.2.010345}
James~R. Seddon, Bartosz Regula, Hakop Pashayan, Yingkai Ouyang, and Earl~T.
  Campbell.
\newblock ``Quantifying quantum speedups: Improved classical simulation from
  tighter magic monotones''.
\newblock \href{https://dx.doi.org/10.1103/PRXQuantum.2.010345}{PRX Quantum
  {\bf 2}, 010345}~(2021).

\bibitem{PhysRevLett.116.250501}
Sergey Bravyi and David Gosset.
\newblock ``Improved classical simulation of quantum circuits dominated by
  {C}lifford gates''.
\newblock \href{https://dx.doi.org/10.1103/PhysRevLett.116.250501}{Phys. Rev.
  Lett. {\bf 116}, 250501}~(2016).

\bibitem{bravyi-2019}
S.~Bravyi, D.~Browne, P.~Calpin, E.~Campbell, D~Gosset, and M.~Howard.
\newblock ``Simulation of quantum circuits by low-rank stabilizer
  decompositions''.
\newblock \href{https://dx.doi.org/10.22331/q-2019-09-02-181}{Quantum {\bf 3},
  181}~(2019).

\bibitem{9605307}
Lukas Burgholzer, Hartwig Bauer, and Robert Wille.
\newblock ``Hybrid {S}chrödinger-{F}eynman simulation of quantum circuits with
  decision diagrams''.
\newblock In 2021 IEEE International Conference on Quantum Computing and
  Engineering (QCE).
\newblock \href{https://dx.doi.org/10.1109/QCE52317.2021.00037}{Pages
  199--206}.
\newblock ~(2021).

\bibitem{PhysRevX.6.021043}
Sergey Bravyi, Graeme Smith, and John~A. Smolin.
\newblock ``Trading classical and quantum computational resources''.
\newblock \href{https://dx.doi.org/10.1103/PhysRevX.6.021043}{Phys. Rev. X {\bf
  6}, 021043}~(2016).

\bibitem{PRXQuantum.3.020361}
Hakop Pashayan, Oliver Reardon-Smith, Kamil Korzekwa, and Stephen~D. Bartlett.
\newblock ``Fast estimation of outcome probabilities for quantum circuits''.
\newblock \href{https://dx.doi.org/10.1103/PRXQuantum.3.020361}{PRX Quantum
  {\bf 3}, 020361}~(2022).

\bibitem{Mitarai_2021}
Kosuke Mitarai and Keisuke Fujii.
\newblock ``Constructing a virtual two-qubit gate by sampling single-qubit
  operations''.
\newblock \href{https://dx.doi.org/10.1088/1367-2630/abd7bc}{New J. Phys. {\bf
  23}, 023021}~(2021).

\bibitem{PhysRevResearch.4.043100}
Shigeo Hakkaku, Yuichiro Tashima, Kosuke Mitarai, Wataru Mizukami, and Keisuke
  Fujii.
\newblock ``Quantifying fermionic nonlinearity of quantum circuits''.
\newblock \href{https://dx.doi.org/10.1103/PhysRevResearch.4.043100}{Phys. Rev.
  Res. {\bf 4}, 043100}~(2022).

\bibitem{bravyi-kitaev-2000}
Sergey Bravyi and Alexei Kitaev.
\newblock ``Fermionic quantum computation''.
\newblock \href{https://dx.doi.org/10.1006/aphy.2002.6254}{Ann. Phys. {\bf
  298}, 210--226}~(2002).

\bibitem{bravyi-gosset-17-impurity}
Sergey Bravyi and David Gosset.
\newblock ``Complexity of quantum impurity problems''.
\newblock \href{https://dx.doi.org/10.1007/s00220-017-2976-9}{Commun. Math.
  Phys.Page 451–500}~(2017).

\bibitem{Boixo-xeb-2018}
Sergio Boixo, Sergei~V. Isakov, Vadim~N. Smelyanskiy, Ryan Babbush, Nan Ding,
  Zhang Jiang, Michael~J. Bremner, John~M. Martinis, and Hartmut Neven.
\newblock ``Characterizing quantum supremacy in near-term devices''.
\newblock \href{https://dx.doi.org/10.1038/s41567-018-0124-x}{Nat. Phys. {\bf
  14}, 595--600}~(2018).

\bibitem{oz2017}
Micha\l{} Oszmaniec and Zolt\'an Zimbor\'as.
\newblock ``Universal extensions of restricted classes of quantum operations''.
\newblock \href{https://dx.doi.org/10.1103/PhysRevLett.119.220502}{Phys. Rev.
  Lett. {\bf 119}, 220502}~(2017).

\bibitem{doi:10.1126/science.abb9811}
{Google AI Quantum and Collaborators}, Frank Arute, Kunal Arya, Ryan Babbush,
  Dave Bacon, Joseph~C. Bardin, Rami Barends, Sergio Boixo, Michael Broughton,
  Bob~B. Buckley, David~A. Buell, Brian Burkett, Nicholas Bushnell, Yu~Chen,
  Zijun Chen, Benjamin Chiaro, Roberto Collins, William Courtney, Sean Demura,
  Andrew Dunsworth, Edward Farhi, Austin Fowler, Brooks Foxen, Craig Gidney,
  Marissa Giustina, Rob Graff, Steve Habegger, Matthew~P. Harrigan, Alan Ho,
  Sabrina Hong, Trent Huang, William~J. Huggins, Lev Ioffe, Sergei~V. Isakov,
  Evan Jeffrey, Zhang Jiang, Cody Jones, Dvir Kafri, Kostyantyn Kechedzhi,
  Julian Kelly, Seon Kim, Paul~V. Klimov, Alexander Korotkov, Fedor Kostritsa,
  David Landhuis, Pavel Laptev, Mike Lindmark, Erik Lucero, Orion Martin,
  John~M. Martinis, Jarrod~R. McClean, Matt McEwen, Anthony Megrant, Xiao Mi,
  Masoud Mohseni, Wojciech Mruczkiewicz, Josh Mutus, Ofer Naaman, Matthew
  Neeley, Charles Neill, Hartmut Neven, Murphy~Yuezhen Niu, Thomas~E.
  O’Brien, Eric Ostby, Andre Petukhov, Harald Putterman, Chris Quintana,
  Pedram Roushan, Nicholas~C. Rubin, Daniel Sank, Kevin~J. Satzinger, Vadim
  Smelyanskiy, Doug Strain, Kevin~J. Sung, Marco Szalay, Tyler~Y. Takeshita,
  Amit Vainsencher, Theodore White, Nathan Wiebe, Z.~Jamie Yao, Ping Yeh, and
  Adam Zalcman.
\newblock ``Hartree-{F}ock on a superconducting qubit quantum computer''.
\newblock \href{https://dx.doi.org/10.1126/science.abb9811}{Science {\bf 369},
  1084--1089}~(2020).

\bibitem{PhysRevLett.120.110501}
Ian~D. Kivlichan, Jarrod McClean, Nathan Wiebe, Craig Gidney, Al\'an
  Aspuru-Guzik, Garnet Kin-Lic Chan, and Ryan Babbush.
\newblock ``Quantum simulation of electronic structure with linear depth and
  connectivity''.
\newblock \href{https://dx.doi.org/10.1103/PhysRevLett.120.110501}{Phys. Rev.
  Lett. {\bf 120}, 110501}~(2018).

\bibitem{PhysRevApplied.9.044036}
Zhang Jiang, Kevin~J. Sung, Kostyantyn Kechedzhi, Vadim~N. Smelyanskiy, and
  Sergio Boixo.
\newblock ``Quantum algorithms to simulate many-body physics of correlated
  fermions''.
\newblock \href{https://dx.doi.org/10.1103/PhysRevApplied.9.044036}{Phys. Rev.
  Appl. {\bf 9}, 044036}~(2018).

\bibitem{Dallaire-Demers_2019}
Pierre-Luc Dallaire-Demers, Jonathan Romero, Libor Veis, Sukin Sim, and Alán
  Aspuru-Guzik.
\newblock ``Low-depth circuit ansatz for preparing correlated fermionic states
  on a quantum computer''.
\newblock \href{https://dx.doi.org/10.1088/2058-9565/ab3951}{Quantum Sci.
  Technol. {\bf 4}, 045005}~(2019).

\bibitem{Phasecraft2022}
Stasja {Stanisic}, Jan~Lukas {Bosse}, Filippo~Maria {Gambetta}, Raul~A.
  {Santos}, Wojciech {Mruczkiewicz}, Thomas~E. {O'Brien}, Eric {Ostby}, and
  Ashley {Montanaro}.
\newblock ``{Observing ground-state properties of the Fermi-Hubbard model using
  a scalable algorithm on a quantum computer}''.
\newblock \href{https://dx.doi.org/10.1038/s41467-022-33335-4}{Nat. Commun.
  {\bf 13}, 5743}~(2022).

\bibitem{UnbiasingMonteCarlo2022}
William~J. {Huggins}, Bryan~A. {O'Gorman}, Nicholas~C. {Rubin}, David~R.
  {Reichman}, Ryan {Babbush}, and Joonho {Lee}.
\newblock ``{Unbiasing fermionic quantum Monte Carlo with a quantum
  computer}''.
\newblock \href{https://dx.doi.org/10.1038/s41586-021-04351-z}{Nature {\bf
  603}, 416--420}~(2022).

\bibitem{FollowUpMatchShadowGoogle2022}
Kianna {Wan}, William~J. {Huggins}, Joonho {Lee}, and Ryan {Babbush}.
\newblock ``Matchgate shadows for fermionic quantum simulation''.
\newblock \href{https://dx.doi.org/10.1007/s00220-023-04844-0}{Commun. Math.
  Phys. {\bf 404}, 629–700}~(2023).

\bibitem{ors-flo-github}
Oliver Reardon-Smith.
\newblock ``Flo-simulator''.
\newblock \url{https://github.com/or1426/FLO-simulator}~(2024).

\bibitem{jordan-wigner-1928}
P.~Jordan and E.~Wigner.
\newblock ``{\"U}ber das paulische {\"a}quivalenzverbot''.
\newblock \href{https://dx.doi.org/10.1007/BF01331938}{Zeitschrift f{\"u}r
  Physik {\bf 47}, 631–651}~(1928).

\bibitem{josaMatchgate2008}
R.~Jozsa and A.~Miyake.
\newblock ``Matchgates and classical simulation of quantum circuits''.
\newblock \href{https://dx.doi.org/10.1098/rspa.2008.0189}{Proc. R. Soc. A.
  {\bf 464}, 3089–3106}~(2008).

\bibitem{bravyi-2005-lagrangian-rep}
Sergey Bravyi.
\newblock ``Lagrangian representation for fermionic linear optics''.
\newblock \href{https://dx.doi.org/10.48550/arXiv.quant-ph/0404180}{Quantum
  Inf. Comput. {\bf 5}, 216–238}~(2005).

\bibitem{PhysRevA.73.042313}
Sergey Bravyi.
\newblock ``Universal quantum computation with the $\ensuremath{\nu}=5/2$
  fractional quantum {H}all state''.
\newblock \href{https://dx.doi.org/10.1103/PhysRevA.73.042313}{Phys. Rev. A
  {\bf 73}, 042313}~(2006).

\bibitem{Yoganathan-2019-magic-for-matchgates}
M.~Hebenstreit, R.~Jozsa, B.~Kraus, S.~Strelchuk, and M.~Yoganathan.
\newblock ``All pure fermionic non-{G}aussian states are magic states for
  matchgate computations''.
\newblock \href{https://dx.doi.org/10.1103/PhysRevLett.123.080503}{Phys. Rev.
  Lett. {\bf 123}, 080503}~(2019).

\bibitem{bravyi-2005-fermionic-product}
Sergey Bravyi.
\newblock ``Classical capacity of fermionic product channels''~(2005).
\newblock
  \href{http://arxiv.org/abs/quant-ph/0507282}{arXiv:quant-ph/0507282}.

\bibitem{PhysRevA.80.022319}
Marek Ku\ifmmode~\acute{s}\else \'{s}\fi{} and Ingemar Bengtsson.
\newblock ````classical'' quantum states''.
\newblock \href{https://dx.doi.org/10.1103/PhysRevA.80.022319}{Phys. Rev. A
  {\bf 80}, 022319}~(2009).

\bibitem{Oszmaniec_2012}
Micha{\l} Oszmaniec and Marek Kuś.
\newblock ``On detection of quasiclassical states''.
\newblock \href{https://dx.doi.org/10.1088/1751-8113/45/24/244034}{Journal of
  Physics A: Mathematical and Theoretical {\bf 45}, 244034}~(2012).

\bibitem{reardonsmith2024fermioniclinearopticalextent}
Oliver Reardon-Smith.
\newblock ``The fermionic linear optical extent is multiplicative for 4 qubit
  parity eigenstates''~(2024).
\newblock  \href{http://arxiv.org/abs/2407.20934}{arXiv:2407.20934}.

\bibitem{HelgasonGifGeomLieGroupAndSym}
Sigurdur Helgason.
\newblock ``Differential geometry, lie groups, and symmetric spaces''.
\newblock \href{https://dx.doi.org/10.1090/gsm/034}{Academic Press}. New
  York~(1978).

\bibitem{KnappLieGroupsBeyond}
Anthony~W. Knapp.
\newblock ``Lie groups beyond an introduction''.
\newblock \href{https://dx.doi.org/10.1007/978-1-4757-2453-0}{Birkh{\"a}user
  Boston, MA}. ~(1996).

\bibitem{Yao-1993}
A.~Chi-Chih Yao.
\newblock ``Quantum circuit complexity''.
\newblock In 2013 IEEE 54th Annual Symposium on Foundations of Computer
  Science.
\newblock \href{https://dx.doi.org/10.1109/SFCS.1993.366852}{Pages 352--361}.
\newblock IEEE Computer Society~(1993).

\bibitem{Janzing-2007}
Dominik Janzing and Pawel Wocjan.
\newblock ``A simple promise {BQP}-complete matrix problem''.
\newblock \href{https://dx.doi.org/10.4086/toc.2007.v003a004}{Theory Comput.
  {\bf 3}, 61--79}~(2007).

\bibitem{Devroye-2016}
Luc Devroye, Matthieu Lerasle, Gabor Lugosi, and Roberto~I. Oliveira.
\newblock ``Sub-{G}aussian mean estimators''.
\newblock \href{https://dx.doi.org/10.1214/16-AOS1440}{Ann. Stat. {\bf 44},
  2695--2725}~(2016).

\bibitem{Mocherla-2023}
Avinash Mocherla, Lingling Lao, and Dan~E. Browne.
\newblock ``Extending matchgate simulation methods to universal quantum
  circuits''~(2023).
\newblock  \href{http://arxiv.org/abs/2302.02654}{arXiv:2302.02654}.

\bibitem{google2019supremacy}
Frank Arute, Kunal Arya, Ryan Babbush, Dave Bacon, Joseph~C. Bardin, Rami
  Barends, Rupak Biswas, Sergio Boixo, Fernando G. S.~L. Brandao, David~A.
  Buell, Brian Burkett, Yu~Chen, Zijun Chen, Ben Chiaro, Roberto Collins,
  William Courtney, Andrew Dunsworth, Edward Farhi, Brooks Foxen, Austin
  Fowler, Craig Gidney, Marissa Giustina, Rob Graff, Keith Guerin, Steve
  Habegger, Matthew~P. Harrigan, Michael~J. Hartmann, Alan Ho, Markus Hoffmann,
  Trent Huang, Travis~S. Humble, Sergei~V. Isakov, Evan Jeffrey, Zhang Jiang,
  Dvir Kafri, Kostyantyn Kechedzhi, Julian Kelly, Paul~V. Klimov, Sergey Knysh,
  Alexander Korotkov, Fedor Kostritsa, David Landhuis, Mike Lindmark, Erik
  Lucero, Dmitry Lyakh, Salvatore Mandr{\`a}, Jarrod~R. McClean, Matthew
  McEwen, Anthony Megrant, Xiao Mi, Kristel Michielsen, Masoud Mohseni, Josh
  Mutus, Ofer Naaman, Matthew Neeley, Charles Neill, Murphy~Yuezhen Niu, Eric
  Ostby, Andre Petukhov, John~C. Platt, Chris Quintana, Eleanor~G. Rieffel,
  Pedram Roushan, Nicholas~C. Rubin, Daniel Sank, Kevin~J. Satzinger, Vadim
  Smelyanskiy, Kevin~J. Sung, Matthew~D. Trevithick, Amit Vainsencher, Benjamin
  Villalonga, Theodore White, Z.~Jamie Yao, Ping Yeh, Adam Zalcman, Hartmut
  Neven, and John~M. Martinis.
\newblock ``Quantum supremacy using a programmable superconducting processor''.
\newblock \href{https://dx.doi.org/10.1038/s41586-019-1666-5}{Nature {\bf 574},
  505--510}~(2019).

\bibitem{Pashayan2020fromestimationof}
Hakop Pashayan, Stephen~D. Bartlett, and David Gross.
\newblock ``From estimation of quantum probabilities to simulation of quantum
  circuits''.
\newblock \href{https://dx.doi.org/10.22331/q-2020-01-13-223}{{Quantum} {\bf
  4}, 223}~(2020).

\bibitem{PhysRevLett.128.220503}
Sergey Bravyi, David Gosset, and Yinchen Liu.
\newblock ``How to simulate quantum measurement without computing marginals''.
\newblock \href{https://dx.doi.org/10.1103/PhysRevLett.128.220503}{Phys. Rev.
  Lett. {\bf 128}, 220503}~(2022).

\bibitem{Bravyi-2021-hadamard-free}
Sergey Bravyi and Dmitri Maslov.
\newblock ``Hadamard-free circuits expose the structure of the {C}lifford
  group''.
\newblock \href{https://dx.doi.org/10.1109/TIT.2021.3081415}{IEEE Trans. Inf.
  Theory {\bf 67}, 4546--4563}~(2021).

\bibitem{PhysRevA.63.054302}
Alexander~Yu. Vlasov.
\newblock ``Clifford algebras and universal sets of quantum gates''.
\newblock \href{https://dx.doi.org/10.1103/PhysRevA.63.054302}{Phys. Rev. A
  {\bf 63}, 054302}~(2001).

\bibitem{gantmacher-vol1}
F.~R. Gantmacher.
\newblock ``The theory of matrices''.
\newblock Volume~1.
\newblock AMS Chelsea Publishing. ~(2000).

\bibitem{hall-2015}
Brian~C. Hall.
\newblock ``Lie groups, lie algebras, and representations''.
\newblock \href{https://dx.doi.org/10.1007/978-3-319-13467-3}{Springer Cham}.
  ~(2015).
\newblock 2 edition.

\bibitem{youla-1960-normal}
D.~C. Youla.
\newblock ``A normal form for a matrix under the unitary congruence group''.
\newblock \href{https://dx.doi.org/10.4153/CJM-1961-059-8}{Can. J. Math. {\bf
  13}, 694 -- 704}~(1960).

\bibitem{hua-automorphic-1}
Loo-Keng Hua.
\newblock ``On the theory of automorphic functions of a matrix variable
  i-geometrical basis''.
\newblock \href{https://dx.doi.org/10.2307/2371910}{Am. J. Math. {\bf 66},
  470--488}~(1944).

\bibitem{symplecticPencils1996}
Peter Benner, Volker Mehrmann, and Hongguo Xu.
\newblock ``A numerically stable, structure preserving method for computing the
  eigenvalues of real {H}amiltonian or symplectic pencils''.
\newblock \href{https://dx.doi.org/10.1007/s002110050315}{Numer. Math. {\bf
  78}, 329}~(1996).

\bibitem{horn2012matrix}
Roger~A. Horn and Charles~R. Johnson.
\newblock ``Matrix analysis''.
\newblock \href{https://dx.doi.org/10.1017/CBO9780511810817}{Cambridge
  university press}. ~(2012).

\bibitem{schur-complement}
Fuzhen Zhang, editor.
\newblock ``The schur complement and its applications''.
\newblock \href{https://dx.doi.org/10.1007/b105056}{Springer New York}.
  ~(2005).

\bibitem{BOTERO200439}
Alonso Botero and Benni Reznik.
\newblock ``Bcs-like modewise entanglement of fermion gaussian states''.
\newblock \href{https://dx.doi.org/10.1016/j.physleta.2004.08.037}{Phys. Lett.
  A {\bf 331}, 39--44}~(2004).

\end{thebibliography}

\appendix


\section{Lie structure of FLO unitaries}
\label{app:lie}

In this section, we will prove that the group $\operatorname{FLO}(n)$ of fermionic linear optical unitary matrices on $n$ qubits is isomorphic to the spin group $\operatorname{Spin}(2n)$ and, in particular, that it is compact and simply connected. This is certainly not novel, and appears to be well-known in the literature (see, e.g. Refs.~\cite{bravyi-kitaev-2000,PhysRevA.63.054302}), but we were not able to find a reference with an explicit derivation, so we include one for completeness. Let $\mathfrak{flo}(n)$ be the set of $2^n\times 2^n$ complex matrices of the form
\begin{align}
  A = \frac{1}{4}\sum_{j,k=0}^{2n-1}\alpha_{jk}c_j c_k,\label{eqn:app-flo-hamiltonian-form}
\end{align}
for some real, $2n\times 2n$ antisymmetric matrix $\alpha$ (so that $\alpha$ is an element of a special orthogonal Lie algebra $\mathfrak{so}(2n)$). The elements of $\mathfrak{flo}(n)$ form a Lie algebra with the bracket being the usual commutator:
\begin{align}
  \left[\frac{1}{4}\sum_{j,k=0}^{2n-1}\alpha_{jk}c_j c_k,\frac{1}{4}\sum_{l,m=0}^{2n-1}\beta_{lm}c_l c_m \right] &= \frac{1}{4} \sum_{j,k=0}^{2n-1} [\alpha,\beta]_{jk} c_j c_k,
\end{align}
which is the reason for the factor of $1/4$ included in conventional form of FLO Hamiltonians. The elements of the Lie group $\operatorname{FLO}(n)$ are given by the exponentials of the elements of $\mathfrak{flo}(n)$. The map \mbox{$\phi:\operatorname{FLO}(n)\to \operatorname{SO}(2n)$} defined by
\begin{align}
  \phi\left(\exp\left(-\frac{1}{4}\sum_{j,k=0}^{2n-1}\alpha_{jk}c_jc_k\right)\right) := \exp(\alpha),
\end{align}
is easily seen to be a continuous Lie group \changed{anti}-homomorphism, due to the following formula of Ref.~\cite{terhal2002Classical}:
\begin{align}
 \forall U\in \operatorname{FLO}(n):\quad U c_j U^\dagger &= \sum_{j=0}^{2n-1} \phi(U)_{jk} c_k.
\end{align}
The Lie group of FLO unitaries is then closed, as it is the inverse image of the closed set $\operatorname{SO}(2n)$ under a continuous map. Since it consists of unitary matrices, it is also bounded and hence compact. 

\begin{lem}\label{lem:app-ker-of-phi}
  The kernel of $\phi$ is ${\pm I}$.
\end{lem}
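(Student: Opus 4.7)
The plan is to proceed in two stages: first to show that any $U \in \ker\phi$ must be a scalar multiple of the identity, and then to pin down exactly which scalars actually lie in $\operatorname{FLO}(n)$.

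For the first stage, the condition $\phi(U) = I$ is equivalent, via the defining relation $Uc_jU^\dagger = \sum_k\phi(U)_{jk}c_k$, to $Uc_jU^\dagger = c_j$ for every $j$, i.e., $U$ commutes with every Majorana operator. Under the Jordan--Wigner correspondence, the $4^n$ ordered products $c_{i_1}\cdots c_{i_r}$ (with the empty product being $I$) are signed Pauli strings and thus form a basis of $\operatorname{Mat}_{2^n}(\mathbb{C})$. A matrix commuting with every generator of this algebra therefore commutes with the whole matrix algebra, so $U = \lambda I$ for some $\lambda$ with $|\lambda|=1$ by unitarity.

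For the second stage, I would write $U = \exp(-\frac{1}{4}\sum_{jk}\alpha_{jk}c_jc_k)$ and invoke the real normal form of antisymmetric matrices: there exists $O \in \operatorname{SO}(2n)$ such that $O\alpha O^T$ is block-diagonal with $2\times 2$ blocks $\left(\begin{smallmatrix}0 & \theta_k\\ -\theta_k & 0\end{smallmatrix}\right)$. The rotated operators $\tilde c_l := \sum_j O_{lj} c_j$ still satisfy $\{\tilde c_l,\tilde c_m\} = 2\delta_{lm}I$ (from $O^TO = I$), and substitution in the exponent, combined with $(\tilde c_{2k}\tilde c_{2k+1})^2 = -I$ and the fact that different $\tilde c_{2k}\tilde c_{2k+1}$ commute, yields
\begin{align*}
U = \prod_{k=0}^{n-1}\left(\cos(\theta_k/2)\,I - \sin(\theta_k/2)\,\tilde c_{2k}\tilde c_{2k+1}\right).
\end{align*}
Expanding gives a linear combination of the $2^n$ operators $\prod_{k\in S}\tilde c_{2k}\tilde c_{2k+1}$ indexed by $S \subseteq \{0,\ldots,n-1\}$, which are linearly independent as a subfamily of the Majorana-product basis in the rotated frame. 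Imposing $U = \lambda I$ forces every non-constant coefficient to vanish; unitarity rules out $\lambda = \prod_k\cos(\theta_k/2) = 0$, and inspecting the coefficient of $\tilde c_{2k}\tilde c_{2k+1}$ (equal to $-\sin(\theta_k/2)\prod_{j\neq k}\cos(\theta_j/2)$) then forces $\sin(\theta_k/2)=0$, i.e., $\theta_k \in 2\pi\mathbb{Z}$. The surviving scalar $\prod_k\cos(\theta_k/2) \in \{\pm 1\}$ gives $U = \pm I$, and the converse $\pm I \subseteq \ker\phi$ is immediate.

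I expect the main obstacle to be the passage from scalarity of the product $U$ to scalarity of each individual factor: a priori a product of commuting non-scalar operators could collapse to a scalar through cancellations, so the argument must rely on the explicit linear independence of the rotated products $\prod_{k\in S}\tilde c_{2k}\tilde c_{2k+1}$. A smaller technical point worth handling carefully is that I conjugate $\alpha$ by $O \in \operatorname{SO}(2n)$ directly at the level of operators (via the rotated Majoranas) rather than lifting $O$ to a FLO unitary; this avoids any circularity, since the very existence of such lifts is part of what the kernel calculation is meant to clarify.
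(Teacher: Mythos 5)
Your proof is correct. Its computational core coincides with the paper's: both block-diagonalise the antisymmetric generator $\alpha$ into $2\times 2$ rotation blocks and evaluate the operator exponential as a product of commuting factors $\cos(\theta_k/2)I - \sin(\theta_k/2)\,c_{2k}c_{2k+1}$ in a suitable Majorana frame. The routes diverge in two places. First, the paper extracts the quantisation of the angles directly from the matrix-level condition $\exp(\Lambda)=I$ in $\operatorname{SO}(2n)$ (forcing $\lambda_j\in 2\pi\mathbb{Z}$) and then simply evaluates the product to get $(-1)^{\sum_j k_j}I$; you never use $\exp(\alpha)=I$ beyond its reformulation as $[U,c_j]=0$, instead deriving scalarity of $U$ from the fact that Majorana monomials span $\operatorname{Mat}_{2^n}(\mathbb{C})$ and then forcing $\sin(\theta_k/2)=0$ via linear independence of the rotated Clifford monomials (with the correct observation that unitarity excludes a vanishing constant coefficient, which is the one place such a cancellation argument could break). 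This is slightly longer but yields the marginally stronger statement that the only scalar elements of $\operatorname{FLO}(n)$ are $\pm I$. Second, the paper reduces to the block-diagonal case by conjugating with a FLO unitary $U$ satisfying $\phi(U)=R$, which quietly invokes surjectivity of $\phi$; your rotation of the frame $\tilde c_l=\sum_j O_{lj}c_j$ achieves the same reduction purely at the level of the Clifford algebra, which, as you note, keeps the logical dependencies cleaner. Both arguments are sound; the paper's is a little shorter, yours a little more self-contained.
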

\begin{proof}
  First, recall from, e.g. Ref.~\cite{gantmacher-vol1}, that for any $2n$ dimensional, real antisymmetric matrix $\alpha$, there exists a real special orthogonal matrix $R$ such that
  \begin{align}
    R \alpha R^T = \bigoplus_{j=0}^{n-1} \begin{pmatrix} 0 & \lambda_j \\ -\lambda_j & 0\end{pmatrix} =:\Lambda,
  \end{align}
  for some real $\lambda_j$. Next, choose an $\alpha$ such that $\exp(\alpha) = I$, and recall that the map $\phi$ is onto, so there exists a $U$ such that $\phi(U) = R$. Then, a simple calculation implies
  \begin{align}
    I = \phi\left(\exp\left(-\frac{1}{4} \sum_{j,k=0}^{2n-1} \alpha_{jk}c_j c_k\right)\right) \iff I = \phi\left(\exp\left(-\frac{1}{4} \sum_{j,k=0}^{2n-1} \Lambda_{jk}c_j c_k\right)\right).~\label{eqn:kernel-of-phi-block-diag-alpha}
  \end{align}
  In order to have $\exp(\Lambda) = I$, it is necessary that each $2\times 2$ block of $\Lambda$ is a logarithm of the identity in $\operatorname{SO}(2)$. Therefore, $\lambda_j = 2 k_j \pi$ for some $k_j\in\mathbb{Z}$. Since we can add any symmetric matrix to $\Lambda$ (leaving only one non-zero element in each block), the last equality of Eq.~\eqref{eqn:kernel-of-phi-block-diag-alpha} becomes
  \begin{align}
    I &= \phi\left(\exp\left( -\sum_{j=0}^{n-1} \pi k_j  c_{2j} c_{2j+1} \right)\right).
  \end{align}
  Since the individual two fermion products commute, this exponential is easily computed,
  \begin{align}
    \exp\left( \sum_{j=0}^{n-1} \pi k_j c_{2j}  c_{2j+1} \right) = \prod_{j=0}^{n-1} \left(\cos\left(k_j \pi\right)I -\sin\left(k_j \pi\right) c_{2j}c_{2j+1}\right) = (-1)^{\sum_{j=0}^{n-1} k_j} I.
  \end{align}
  By left and right-multiplying by $U$ and $U^\dagger$, it is then easy to verify that
  \begin{align}
    \exp\left(-\frac{1}{4} \sum_{j,k=0}^{2n-1} \Lambda_{jk}c_j c_k\right) = \pm I \implies \exp\left(-\frac{1}{4} \sum_{j,k=0}^{2n-1} \alpha_{jk}c_j c_k\right) = \pm I,
  \end{align}
  so the kernel of $\phi$ is $\{I, -I\}$.
\end{proof}

Since $\mathfrak{flo}(n)$ is isomorphic to the Lie algebra $\mathfrak{so}(2n)$, a maximal commutative subalgebra of $\mathfrak{flo}(n)$ is given by $\mathfrak{t}(n)$ consisting of matrices of the form given by Eq.~\eqref{eqn:app-flo-hamiltonian-form} with $\alpha$ being block-diagonal and consisting of $2\times 2$ blocks. Recall that due to, e.g., Proposition~11.7 of Ref.~\cite{hall-2015}, a maximal torus $\operatorname{T}$ in the group of FLO unitaries is given by the exponential of the maximal commutative subalgebra $\mathfrak{t}$. We then gave the following result.

\begin{lem}
  The group $\operatorname{FLO}(n)$ for $n\geq 2$ is simply connected.
\end{lem}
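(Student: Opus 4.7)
The plan is to exhibit $\operatorname{FLO}(n)$ as a connected double cover of $\operatorname{SO}(2n)$ and then invoke the classical fact $\pi_1(\operatorname{SO}(2n)) = \mathbb{Z}/2$ for $n\geq 2$ through the long exact sequence of homotopy groups associated with this covering.

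First, I would note that $\operatorname{FLO}(n)$ is path-connected: any $U \in \operatorname{FLO}(n)$ is a product $\exp(iH_1)\cdots\exp(iH_r)$ of exponentials of FLO Hamiltonians, and scaling the exponents continuously to zero one by one joins $U$ to the identity. Combining this with Lemma~\ref{lem:app-ker-of-phi} (that $\ker\phi = \{\pm I\}$) and the observation that the induced map $d\phi:\mathfrak{flo}(n) \to \mathfrak{so}(2n)$ is a Lie algebra isomorphism, I conclude that $\phi$ is a smooth surjective 2-to-1 covering map onto $\operatorname{SO}(2n)$. The long exact homotopy sequence of the fibration $\{\pm I\} \hookrightarrow \operatorname{FLO}(n) \xrightarrow{\phi} \operatorname{SO}(2n)$ then collapses to
\begin{equation*}
  0 \longrightarrow \pi_1(\operatorname{FLO}(n)) \longrightarrow \pi_1(\operatorname{SO}(2n)) \xrightarrow{\partial} \mathbb{Z}/2 \longrightarrow 0,
\end{equation*}
because the discrete fiber has trivial $\pi_1$ and both groups are connected. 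For $n\geq 2$ one has $\pi_1(\operatorname{SO}(2n)) = \mathbb{Z}/2$, so it suffices to show that $\partial$ is nontrivial, i.e., that a generator of $\pi_1(\operatorname{SO}(2n))$ lifts to a path in $\operatorname{FLO}(n)$ whose endpoint is $-I$ rather than $+I$.

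To make this concrete, I would take the standard generator $\gamma(t) = \exp(tA) \in \operatorname{SO}(2n)$ for $t\in[0,2\pi]$, where $A$ has $A_{01} = -A_{10} = 1$ and all other entries zero, implementing a full $2\pi$ rotation in the $(c_0, c_1)$-plane. The unique lift through $\phi$ is $\tilde\gamma(t) = \exp(-\tfrac{t}{2}\, c_0 c_1)$. Since $(c_0 c_1)^2 = -I$, expanding the exponential gives $\tilde\gamma(2\pi) = \cos(\pi) I - \sin(\pi)\, c_0 c_1 = -I$. Hence $\partial$ sends the generator to the non-trivial element of $\{\pm I\}$, is an isomorphism, and forces $\pi_1(\operatorname{FLO}(n)) = 0$.

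The only genuine subtlety is to confirm that the chosen loop $\gamma$ really does generate $\pi_1(\operatorname{SO}(2n))$ for $n\geq 2$; this follows from the standard computation that the inclusion $\operatorname{SO}(2)\hookrightarrow\operatorname{SO}(2n)$ induces the quotient $\mathbb{Z}\twoheadrightarrow\mathbb{Z}/2$. This also clarifies why the hypothesis $n\geq 2$ is essential: for $n=1$, $\pi_1(\operatorname{SO}(2)) = \mathbb{Z}$ and $\operatorname{FLO}(1)$ is only a 2-fold, not universal, cover of $\operatorname{SO}(2)$, so is not simply connected.
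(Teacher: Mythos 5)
Your proof is correct, but it takes a genuinely different route from the paper's. The paper does not invoke $\pi_1(\operatorname{SO}(2n))=\mathbb{Z}/2$ at all: it instead reduces every loop to the maximal torus $\operatorname{T}=\exp(\mathfrak{t})$ via a standard structure theorem for compact connected Lie groups, and then contracts each generating loop $t\mapsto\exp(2\pi t\, c_{2i}c_{2i+1})$ explicitly by embedding it in the subgroup generated by $c_0c_1$, $c_1c_2$, $c_0c_2$, whose elements $s_0 I+s_1c_0c_1+s_2c_1c_2+s_3c_0c_2$ with $\norm{s}_2=1$ sit on a $3$-sphere. You instead use the covering-space long exact sequence together with the classical facts that $\pi_1(\operatorname{SO}(2n))=\mathbb{Z}/2$ for $n\geq 2$ and that the $\operatorname{SO}(2)$-rotation loop generates it; your explicit lift computation $\exp(-\tfrac{t}{2}c_0c_1)\big|_{t=2\pi}=-I$ is correct under the paper's sign conventions (and is in fact redundant, since connectedness of $\operatorname{FLO}(n)$ already forces $\partial$ to be surjective, hence an isomorphism of $\mathbb{Z}/2$ with itself). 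What each approach buys: yours is the standard textbook argument for why $\operatorname{Spin}(m)$ is simply connected and is shorter, at the cost of importing the computation of $\pi_1(\operatorname{SO}(2n))$ and its generator; the paper's is more self-contained on that front (it imports the maximal-torus homotopy theorem instead) and produces an explicit null-homotopy, with the $3$-sphere picture making the quaternion-like structure of the low-dimensional Clifford algebra visible. Your remark about $n=1$ correctly explains the hypothesis; both arguments rely on the kernel computation of Lemma~\ref{lem:app-ker-of-phi} and on surjectivity of $\flohm$, which the paper establishes beforehand.
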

\begin{proof}
  Most of the work is already done for us by a standard result in the literature (see, e.g., Theorem~13.15 of Ref.~\cite{hall-2015}), which in our context states that every loop in the group of FLO unitaries is homotopic to the one in the maximal torus $\operatorname{T} = \exp(\mathfrak{t})$. Now, it suffices to show that the loops in $\operatorname{T}$ of the form
  \begin{align}
      \gamma(t) = \exp(2\pi t c_{2j} c_{2j+1}),
  \end{align}
  are contractible, as they generate all loops in~$\operatorname{T}$. Since the proof is identical for all $j$, we will directly show that the loop
  \begin{align}
    \gamma&: [0,1] \to \operatorname{FLO}(n),\\
    \gamma&:t \mapsto \exp\left(2\pi t c_0 c_1\right),\label{eqn:app-loop-contractible}
  \end{align}
  is contractible, from which the result follows. Consider the Lie subalgebra generated by the operators $c_0 c_1$, $c_1c_2$ and $c_0 c_2$, so that it consists of operators of the form
  \begin{align}
    \frac{1}{4} \sum_{j,k=0}^{2n-1} \alpha_{jk} c_j c_k = 2(r_1 c_0 c_1 + r_2 c_1 c_2 + r_3 c_0 c_2),
  \end{align}
  for some real coefficients $r_j$. By block-diagonalising $\alpha$, using the methods of Refs.~\cite{youla-1960-normal} and~\cite{hua-automorphic-1}, and directly computing the exponential, one verifies that the elements of the Lie group generated by elements of this subalgebra have the form
  \begin{align}
    U = \cos(\pi t) I + \sin(\pi t)(r_1 c_0 c_1 + r_2 c_1 c_2 + r_3 c_0 c_2),
  \end{align}
  where $r_1^2 + r_2^2 + r_3^2 = 1$.  From standard trigonometric identities, it follows that
  \begin{align}
    U = s_0 I + s_1 c_0 c_1 + s_2 c_1 c_2 + s_3 c_0 c_2,
  \end{align}
  where $\lvert\lvert{s}\rvert\rvert_2^2 := s_0^2 + s_1^2 + s_2^2 + s_3^2 = 1$. Since the loop from Eq.~\eqref{eqn:app-loop-contractible} is now embedded on the surface of the 3-sphere, it is obviously contractible.
\end{proof}

From the above result, we obtain a useful and well-known corollary: as the group $\operatorname{FLO}(n)$ is simply connected and a (double) cover of the group $\operatorname{SO}(2n)$, it is also the \emph{universal} cover of $\operatorname{SO}(2n)$, and therefore it is isomorphic to the spin group~$\operatorname{Spin}(2n)$.


\section{Proofs of the properties of passive and anti-passive FLO unitaries}
\label{app:properties}

\begin{proof}[Proof of Lemma~\ref{lem:props-of-k-type-flo}]
  Since $U_t\in\mathcal{K}$, certainly $[\alpha,\Omega] = 0$. From this, and the fact that $\Omega$ is special-orthogonal, statement $2$ follows easily from $1$. Statement $3$ is equivalent to $N U_t N^\dagger = U_t$, which follows from Eq.~\eqref{eqn:standard-symplectic-form}, and writing $N$ in terms of Majorana operators. Statement $4$ is a consequence of $3$ since the eigenvalue $0$ subspace of $N$ is one-dimensional. Statement $5$ is proved directly from Eq.~\eqref{eqn:standard-symplectic-form} and the orthogonal decomposition of the $2\times 2$ complex matrices in terms of the Pauli operators. Finally, statement $6$ is proved using the same reasoning as $5$, followed by checking conditions imposed by the special-orthogonality of $\exp(t\alpha)$.
\end{proof}

\begin{proof}[Proof of Lemma~\ref{lem:props-of-p-type}]
  From $\{\Omega,\beta\} = 0$ we see $\Omega\beta\Omega^T = -\beta$ since $\Omega$ is orthogonal. From this one easily obtains statement $2$. Now $3$ follows from applying the orthogonal decomposition of $2\times 2$ complex matrices in terms of the Pauli operators.
\end{proof}

\begin{proof}[Proof of Lemma~\ref{lem:decomp-p-type}]
  From Lemma~\ref{lem:props-of-p-type}, $\beta$ may be written in the form
  \begin{align}
    \beta = X\otimes\sigma_x + Z\otimes\sigma_z,
  \end{align}
  where $X$ and $Z$ are real, antisymmetric matrices and $\{\sigma_x, \sigma_y, \sigma_z\}$ are the $2\times 2$ Pauli matrices. Since the complex matrix $A = Z + i X$ is antisymmetric, we can apply Hua-Youla decomposition~\cite{hua-automorphic-1,youla-1960-normal}, and block diagonalize it with a unitary matrix $U$
  \begin{align}
    U (Z + iX) U^T = \Lambda\otimes i\sigma_y = \bigoplus_{j=0}^{n-1} 
    \begin{pmatrix}
      0 &\lambda_j\\
      -\lambda_j& 0
    \end{pmatrix}=  
    \begin{pmatrix}
          0 & \lambda_0 &0 & 0 \\
          -\lambda_0 & 0 & 0 & 0 \\
          0 & 0 & 0 & \lambda_1 &\\
          0 & 0 & -\lambda_1 & 0\\
            &  &  & & \ddots & &\\
            &  &  & & &  0 & \lambda_{n-1} \\
            &  &  & & &  -\lambda_{n-1} & 0\\
    \end{pmatrix},
  \end{align}
  where $\Lambda$ is a real, diagonal matrix. Now, let $c$ and $s$ be the real and imaginary parts of the $U^*$, respectively, so $U = c - is$. Set $R = c \otimes I + s\otimes i\sigma_y$ and consider the matrix
  \begin{align}
    A &= R \beta R^T\\
      &=  (c \otimes I + s\otimes i\sigma_y)(X\otimes\sigma_x + Z\otimes\sigma_z)(c^T \otimes I - s^T\otimes i\sigma_y)\\
      &= \re{U (Z+iX) U^T}\otimes\sigma_z +\im{U (Z+iX) U^T}\otimes\sigma_x\\
      &= \Lambda\otimes i\sigma_y\otimes \sigma_z.\label{eqn:symplectic-diagonalisation-p-type}
  \end{align}
\end{proof}


\section{\texorpdfstring{Proof of Lemma~\ref{lem:extent-for-F3}}{Proof of Lemma~5}}
\label{app:extent}
\begin{proof}[Proof of Lemma~\ref{lem:extent-for-F3}]
  Let $\ket{\omega}$ be any fermionic Gaussian state such that $\braket{\omega}{\psi} \neq 0$. Then
  \begin{align}
    \ket{\psi} &= \frac{\ket{\psi}\braket{\psi}{\omega}}{\braket{\psi}{\omega}}\\
               &= \frac{1}{\braket{\psi}{\omega}} \frac{1}{N}\sum_{j = 1}^{N}  V_j \ket{\omega}.\label{eqn:special-state-explicit-decomposition}
  \end{align}
  Now, Eq.~\eqref{eqn:special-state-explicit-decomposition} is a decomposition of $\ket{\psi}$ as a superposition with one norm squared of the coefficients vector given by
  \begin{align}
    \left(\sum_{j = 1}^{N}\abs{\frac{1}{N\braket{\psi}{\omega}}}\right)^2 &= \abs{\braket{\psi}{\omega}}^{-2}.
  \end{align}
  Choosing $\omega$ to be the fermionic Gaussian state with greatest overlap with $\ket{\psi}$ gives
  \begin{align}
    \xi(\psi) \leq \frac{1}{\fid{\ket{\psi}}},
  \end{align}
  while Eq.~\eqref{eqn:extent-dual-problem} with $\ket{\omega} = \ket{\psi}$ gives
  \begin{align}
    \xi(\psi) \geq \frac{1}{\fid{\ket{\psi}}}.
  \end{align}
\end{proof}


\section{Cartan's decomposition}
\label{app:Cartans-decomp}


We will first show a $KAK$-type decomposition for the special-orthogonal group, and then use the double cover property to lift this to a decomposition of the group of FLO unitaries. We require a key result from Ref.~\cite{symplecticPencils1996}, which we restate as the following theorem.

\begin{thm}[Ref.~\cite{symplecticPencils1996} - Reduction of a general real $2n\times2n$ matrix to a condensed form]\label{thm:general-2n-condensed-form}
  For any $2n\times 2n$ real matrix $A$, there exist orthogonal $2n\times 2n$ matrices $K_1$ and $K_2$ such that
  \begin{align}
    K_1 \begin{pmatrix} 0 & I \\ -I & 0 \end{pmatrix}K_1^T  = K_2 \begin{pmatrix} 0 & I \\ -I & 0 \end{pmatrix}K_2^T  = \begin{pmatrix} 0 & I \\ -I & 0 \end{pmatrix},\label{eqn:k-type-special-orthgonal-block-symplectic}
  \end{align}
  and
  \begin{align}
    K_1 A K_2 = R = \begin{pmatrix}R_{11} & R_{12}\\ R_{21} & R_{22}\end{pmatrix} = \begin{pmatrix} \begin{tikzpicture}[baseline=(current  bounding box.south), scale=0.5]
        \node[] () at (-0.2,0) {};
        \draw (1,0) -- (1,1) -- (0,1)-- (1,0);
      \end{tikzpicture} & \begin{tikzpicture}[baseline=(current  bounding box.south), scale=0.5]
        \node[] () at (1.2,0) {};
        \draw (0,0) -- (0,1) -- (1,1)-- (1,0) -- (0,0);
      \end{tikzpicture}\\  & \begin{tikzpicture}[baseline=(current  bounding box.south), scale=0.5]
        \draw (0,0) -- (0,1) -- (1,0)-- (0,0);
        \draw (0.2,1) -- (1,0.2) ;
      \end{tikzpicture}\end{pmatrix},\label{eqn:symplectic-block-factorisation}
  \end{align}
  i.e., $R_{11}$ is upper triangular, $R_{22}$ is lower Hessenberg and $R_{21}$ is zero. Moreover, the matrices $K_1$ and $K_2$ consist of $\operatorname{O}(n)$ elementary orthogonal symplectic matrices and may be determined with $\operatorname{O}(n^3)$ floating point operations.
\end{thm}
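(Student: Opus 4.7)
The plan is to construct $K_1$ and $K_2$ as products of elementary orthogonal symplectic matrices via a carefully ordered sweep analogous to the two-sided QR factorization, but carried out inside the subgroup $\operatorname{O}(2n)\cap\operatorname{Sp}(2n,\mathbb{R})$ so that the block-symplectic condition in Eq.~\eqref{eqn:k-type-special-orthgonal-block-symplectic} is preserved throughout. The first step is to identify a convenient generating set. Two classes of generators suffice: (i) ``direct-sum'' transformations of the form $\operatorname{diag}(Q,Q)$ with $Q\in\operatorname{O}(n)$, which include ordinary Householder reflectors and Givens rotations applied simultaneously to the top and bottom blocks; and (ii) ``symplectic Givens'' rotations that act as a planar rotation coupling coordinates $i$ and $i+n$. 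A direct check confirms that both families commute with $J=\bigl(\begin{smallmatrix}0 & I\\ -I & 0\end{smallmatrix}\bigr)$ under conjugation, and that any orthogonal symplectic matrix can be written as their product.

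Next I would perform the reduction in two phases. In phase one, working column by column from $j=1$ to $n$, I would zero out everything below the diagonal in column $j$ of the \emph{entire} left half of the matrix: a symplectic Givens rotation in the plane $(i,i+n)$ transfers the entry in position $(i+n,j)$ into position $(i,j)$, and a direct-sum Householder reflector from the left then annihilates the resulting subdiagonal entries of the top block. All left multiplications here touch only rows $\geq j$ and rows $\geq j+n$, so the zeros produced in columns $1,\ldots,j-1$ survive. At the end of phase one, the first $n$ columns of the transformed matrix have upper triangular top block and zero bottom block. In phase two, only the right half needs to be adjusted: I would apply direct-sum orthogonal symplectic matrices from the right that mix only columns $n+1,\ldots,2n$ to reduce $R_{22}$ to lower Hessenberg form. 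Because these right-multiplications leave columns $1,\ldots,n$ of the matrix untouched, the structures $R_{11}$ upper triangular and $R_{21}=0$ established in phase one are preserved exactly.

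The main obstacle is precisely this scheduling: the symplectic coupling between the top and bottom halves means any left-multiplication that acts on row $i$ necessarily interacts with row $i+n$, so creating zeros in $R_{11}$ and in $R_{21}$ cannot be decoupled, and a naive ordering will undo earlier work. The particular interleaving above resolves the conflict, and also explains why the right-half block $R_{22}$ can only be brought to Hessenberg rather than triangular form: achieving full triangularization of $R_{22}$ would require an additional left-multiplication that would destroy the $R_{21}=0$ structure. Once the schedule is fixed, correctness follows by induction on the column index, with the invariant that after processing column $j$ the top-left $j\times j$ block is upper triangular and the bottom-left $n\times j$ block is zero.

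Finally, the complexity count is routine. Each column in phase one is handled with $O(1)$ Householder reflectors plus $O(n)$ symplectic Givens rotations, and each such elementary transformation, when applied to the full working matrix, touches $O(n)$ entries per column across $O(n)$ columns, for $O(n^2)$ floating-point operations per column. Summing over the $n$ columns of phase one and the analogous $n$ columns of phase two yields the claimed $O(n^3)$ bound, with $O(n)$ Householder-type transformations in total (each encoded by an $O(n)$-dimensional reflection vector) as asserted.
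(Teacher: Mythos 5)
First, a framing remark: the paper does not prove this statement at all — it is imported verbatim from Ref.~\cite{symplecticPencils1996} and used as a black box — so your proposal has to be judged against the algorithm of that reference rather than against anything in this manuscript. Judged on its own terms, the proposal has a genuine gap, located exactly where you identified the difficulty (the scheduling), but the two-phase schedule you propose cannot work. Phase one asks for an orthogonal symplectic left factor alone to map the left half $\bigl(\begin{smallmatrix} X \\ Y \end{smallmatrix}\bigr)$ of $A$ (with $X,Y\in\mathbb{R}^{n\times n}$) to $\bigl(\begin{smallmatrix} T \\ 0 \end{smallmatrix}\bigr)$ with $T$ upper triangular, and this is impossible for generic $A$. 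Indeed, writing the inverse of that factor in the canonical block form $\begin{pmatrix} C & D \\ -D & C\end{pmatrix}$ with $C+iD$ unitary (so that $C^TD$ is symmetric), the phase-one claim forces $X=CT$ and $Y=-DT$, hence $YX^{-1}=-DC^{-1}=-C^{-T}\left(C^TD\right)C^{-1}$, which is symmetric; a generic pair $(X,Y)$ violates this. Equivalently, $\operatorname{O}(2n)\cap\operatorname{Sp}(2n,\mathbb{R})\cong\operatorname{U}(n)$ has dimension $n^2$, while the structure you want after phase one imposes $(3n^2-n)/2>n^2$ zero conditions once $n\geq 2$. The failure shows up concretely in your sweep: the left transformations admissible at step $j$ (those that do not refill zeros in columns $1,\dots,j-1$) are supported on rows $\{j,\dots,n\}\cup\{n+j,\dots,2n\}$, so the entries of column $j$ in rows $n+1,\dots,n+j-1$ are unreachable and survive.

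Phase two rests on a premise that is also false: there is no nontrivial orthogonal symplectic matrix that mixes only columns $n+1,\dots,2n$. If $V=\begin{pmatrix} C & D\\ -D & C\end{pmatrix}$ fixes $e_1,\dots,e_n$ then $C=I$ and $D=0$, so $V=I$; every admissible right transformation necessarily acts on the left half as well. The actual algorithm of Ref.~\cite{symplecticPencils1996} resolves both problems simultaneously by interleaving left and right eliminations: at step $j$ a left transformation supported on rows $\{j,\dots,n\}\cup\{n+j,\dots,2n\}$ clears column $j$ in those rows, and then a right transformation supported on columns $\{j+1,\dots,n\}\cup\{n+j+1,\dots,2n\}$ clears the entries of row $n+j$ in columns $j+1,\dots,n$ — precisely the entries that later left steps can no longer reach — at the price of leaving only a Hessenberg, rather than triangular, structure in $R_{22}$. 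So your instinct that $R_{22}$ ends up Hessenberg because the available freedom is exhausted is right, but the freedom is exhausted by \emph{completing} $R_{21}=0$, not by protecting it. A repaired proof must adopt this interleaved schedule. As a minor further point, the interleaved algorithm uses $\order{1}$ elementary transformations per step and hence $\order{n}$ in total, whereas your phase one uses $\order{n}$ Givens rotations per column, i.e.\ $\order{n^2}$ elementary factors, which contradicts the last clause of the statement.
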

To adapt this result for our purposes, we first define some notation. Let
\begin{align}
  e_1 & = \begin{pmatrix}1 \\ 0\end{pmatrix}, & e_2 & = \begin{pmatrix}0 \\ 1\end{pmatrix}, &  E_{jk} &= e_{j} e_{k}^T,
\end{align}
and note that the block structure in Eq.~\eqref{eqn:symplectic-block-factorisation} corresponds to a tensor product structure:
\begin{equation}
  R = \sum_{j,k=0}^{1} E_{jk}\otimes R_{jk}.
\end{equation}
Contrasting with, e.g., Eq.~\eqref{eqn:standard-symplectic-form}, what we need is the structure
\begin{align}
  R = \sum_{j,k=0}^{1}  R_{jk}\otimes E_{jk},
\end{align}
so that one can perform the reduction as a simple corollary of Theorem~\ref{thm:general-2n-condensed-form}. Note that swapping the tensor product structure also transforms the symplectic form $i\sigma_y \otimes I$ appearing in Eq.~\eqref{eqn:k-type-special-orthgonal-block-symplectic} into $I\otimes i\sigma_y = \Omega$.

\begin{cor}[Reshuffled reduction of a general real 2n × 2n matrix to a condensed form]\label{cor:reshuffled-2n-condensed-form}
  For any $2n\times 2n$ real matrix $A$, there exist orthogonal $2n\times 2n$ matrices $K_1$ and $K_2$ such that
  \begin{align}
    K_1 \Omega K_1^T  = K_2 \Omega K_2^T  = \Omega,
  \end{align}
  and
  \begin{align}
    K_1 A K_2 = R = \sum_{jk}  R_{jk}\otimes E_{jk},
  \end{align}
  where
  \begin{align}
    \begin{pmatrix}R_{11} & R_{12}\\ R_{21} & R_{22}\end{pmatrix} = \begin{pmatrix} \begin{tikzpicture}[baseline=(current  bounding box.south), scale=0.5]
        \node[] () at (-0.2,0) {};
        \draw (1,0) -- (1,1) -- (0,1)-- (1,0);
      \end{tikzpicture} & \begin{tikzpicture}[baseline=(current  bounding box.south), scale=0.5]
        \node[] () at (1.2,0) {};
        \draw (0,0) -- (0,1) -- (1,1)-- (1,0) -- (0,0);
      \end{tikzpicture}\\  & \begin{tikzpicture}[baseline=(current  bounding box.south), scale=0.5]
        \draw (0,0) -- (0,1) -- (1,0)-- (0,0);
        \draw (0.2,1) -- (1,0.2) ;
      \end{tikzpicture}\end{pmatrix}.\label{eqn:symplectic-block-factorisation-reshuffled}
  \end{align}
\end{cor}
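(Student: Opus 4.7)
The plan is to deduce Corollary~\ref{cor:reshuffled-2n-condensed-form} directly from Theorem~\ref{thm:general-2n-condensed-form} by conjugating with the perfect-shuffle permutation that swaps the two tensor factors. Concretely, let $P$ be the $2n\times 2n$ real orthogonal permutation matrix defined by the property
\begin{align}
P\,(M\otimes N)\,P^{T} = N\otimes M,
\end{align}
for any $n\times n$ matrix $M$ and $2\times 2$ matrix $N$. In particular $P\,(I_n\otimes i\sigma_y)\,P^{T} = i\sigma_y\otimes I_n$, i.e.\ $P\,\Omega\,P^{T}$ is exactly the symplectic form appearing in Eq.~\eqref{eqn:k-type-special-orthgonal-block-symplectic}.

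First I would form the reshuffled matrix $A' = P^{T} A P$, which is again real $2n\times 2n$, and apply Theorem~\ref{thm:general-2n-condensed-form} to obtain orthogonal matrices $K_1', K_2'$ that preserve $i\sigma_y\otimes I_n$ and satisfy
\begin{align}
K_1'\, A'\, K_2' \;=\; R' \;=\; \sum_{ij} E_{ij}\otimes R_{ij},
\end{align}
with $R_{11}$ upper triangular, $R_{21}=0$, and $R_{22}$ lower Hessenberg. Then I would set
\begin{align}
K_1 \;=\; P\, K_1'\, P^{T}, \qquad K_2 \;=\; P\, K_2'\, P^{T},
\end{align}
which are orthogonal since $P$ and $K_i'$ are. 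Conjugating the factorisation by $P$ gives $K_1 A K_2 = P R' P^{T} = \sum_{ij} R_{ij}\otimes E_{ij}$, which is exactly the block structure displayed in Eq.~\eqref{eqn:symplectic-block-factorisation-reshuffled}. The symplectic covariance also survives: using $P^{T}\Omega P = i\sigma_y\otimes I_n$ together with $K_i'\,(i\sigma_y\otimes I_n)\,K_i'^{T} = i\sigma_y\otimes I_n$, a direct computation yields $K_i\,\Omega\,K_i^{T} = \Omega$ for $i=1,2$.

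The argument is essentially a bookkeeping exercise, so I do not anticipate any real obstacle. The only points that warrant a careful check are (i) that the perfect-shuffle $P$ intertwines the two symplectic forms as claimed, which can be verified by evaluating both sides on the standard tensor basis, and (ii) that applying $P$ on either side indeed permutes the $E_{ij}$ factors without mixing the internal triangular/Hessenberg structure of the $R_{ij}$ blocks, which is immediate because the blocks $R_{ij}$ themselves are unaffected by the swap. The complexity statement ($\order{n^3}$ floating point operations and $\order{n}$ elementary orthogonal symplectic factors) is inherited verbatim from Theorem~\ref{thm:general-2n-condensed-form} since $P$ is a fixed, sparse permutation whose action costs $\order{n}$.
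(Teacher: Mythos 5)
Your approach is the same as the paper's: the corollary is obtained from Theorem~\ref{thm:general-2n-condensed-form} by conjugating with the perfect-shuffle permutation that exchanges the two tensor factors, thereby carrying the symplectic form $i\sigma_y\otimes I_n$ of Eq.~\eqref{eqn:k-type-special-orthgonal-block-symplectic} into $\Omega=I_n\otimes i\sigma_y$. However, as written your conjugations point the wrong way, and the identity you invoke at the end is false. With $P$ defined by $P\,(M\otimes N)\,P^{T}=N\otimes M$ for $M$ of size $n\times n$ and $N$ of size $2\times 2$, one has $P\,\Omega\,P^{T}=i\sigma_y\otimes I_n$ (as you correctly state at the outset), equivalently $P^{T}(i\sigma_y\otimes I_n)P=\Omega$; but $P^{T}\Omega P\neq i\sigma_y\otimes I_n$ in general, because $P$ is not an involution for $n\neq 2$. (For $n=3$ the shuffle is the $4$-cycle sending positions $2\mapsto 4\mapsto 5\mapsto 3\mapsto 2$, and a direct check shows $P^{T}\Omega P$ has its $+1$ entries at $(1,3)$, $(5,2)$, $(4,6)$ rather than at $(1,4)$, $(2,5)$, $(3,6)$.) For the same reason $P R' P^{T}=P\bigl(\sum_{ij}E_{ij}\otimes R_{ij}\bigr)P^{T}$ is \emph{not} $\sum_{ij}R_{ij}\otimes E_{ij}$; the reshuffle in that direction is $P^{T}\bigl(\sum_{ij}E_{ij}\otimes R_{ij}\bigr)P=\sum_{ij}R_{ij}\otimes E_{ij}$. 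So both of your concluding verifications fail as stated.

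The repair is mechanical: apply Theorem~\ref{thm:general-2n-condensed-form} to $A'=PAP^{T}$ to obtain orthogonal $K_1',K_2'$ preserving $i\sigma_y\otimes I_n$ with $K_1'A'K_2'=\sum_{ij}E_{ij}\otimes R_{ij}$, and set $K_i=P^{T}K_i'P$. Then $K_1AK_2=P^{T}\bigl(K_1'A'K_2'\bigr)P=\sum_{ij}R_{ij}\otimes E_{ij}$, and $K_i\Omega K_i^{T}=P^{T}K_i'\bigl(P\Omega P^{T}\bigr)K_i'^{T}P=P^{T}(i\sigma_y\otimes I_n)P=\Omega$. In short, swap $P\leftrightarrow P^{T}$ throughout; with that correction your argument is complete and coincides with the paper's intended proof, including the inheritance of the triangular/Hessenberg structure of the blocks $R_{ij}$ and of the $\order{n^3}$ cost.
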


Note that if $A$ in Corollary~\ref{cor:reshuffled-2n-condensed-form} is special orthogonal, then so is $R$, and one can easily verify that $R_{12} = 0$ and $R_{11}$ is diagonal. A KAK-type factorisation of special orthogonal matrices follows easily from this.
\begin{lem}\label{lem:kak-orthogonal-matrices}
  For any real, special orthogonal $2n\times 2n$ matrix $R$, there exist symplectic orthogonal $2n\times 2n$ matrices $K_1$ and $K_2$, and real, diagonal $\frac{n}{2}\times \frac{n}{2}$ matrix $\Lambda$ such that
  \begin{align}
    R &= K_1 \exp\left({\Lambda\otimes i\sigma_y\otimes \sigma_z}\right) K_2.\label{eqn:kak-orthogonal-matrices}
  \end{align}
\end{lem}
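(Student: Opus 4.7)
The strategy is to combine the symplectic condensed form of Corollary~\ref{cor:reshuffled-2n-condensed-form} with the extra rigidity imposed by $R$ being special orthogonal, and then to finish at the Lie-algebra level using Lemma~\ref{lem:decomp-p-type}. The key observation is that the orthogonal matrices $K_1,K_2$ delivered by Corollary~\ref{cor:reshuffled-2n-condensed-form} preserve the symplectic form $\Omega$ by construction, i.e.\ they already lie in the passive part; and since the normal form the corollary produces is upper triangular in the $(1,1)$ block and lower Hessenberg in the $(2,2)$ block, orthogonality will squeeze almost all freedom out and leave something very close to $\exp(\Lambda\otimes i\sigma_y\otimes\sigma_z)$ up to a further passive conjugation.

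My plan proceeds in three steps. \emph{Step~1.} Apply Corollary~\ref{cor:reshuffled-2n-condensed-form} to $R$, producing symplectic-orthogonal $K_1',K_2'$ such that $R':=K_1'RK_2'$ has the reshuffled block form of Eq.~\eqref{eqn:symplectic-block-factorisation-reshuffled}. \emph{Step~2.} Expand $(R')^TR'=I$ block-by-block. Since $R'_{11}$ is upper triangular and orthogonal it must be diagonal with $\pm 1$ entries; the off-diagonal relation then forces $R'_{12}=0$; finally $R'_{22}$ is orthogonal and lower Hessenberg. Thus in the reshuffled tensor layout one has $R'=R'_{11}\otimes E_{11}+R'_{22}\otimes E_{22}$, a block-diagonal matrix whose only nontrivial part is an orthogonal Hessenberg block. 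The remaining $\pm 1$ signs on $R'_{11}$ are absorbed using elementary passive rotations of the form $\exp(\tfrac{\pi}{2}c_{2k}c_{2k+1})\in\mathcal{K}$, and (if necessary) a further passive factor fixes the sign of $\det R'_{22}$, so that after these absorptions $R'$ can be written as $\exp(\beta)$ for a real antisymmetric $\beta$ whose block structure, read against $\Omega=I_n\otimes i\sigma_y$, satisfies $\{\beta,\Omega\}=0$. \emph{Step~3.} Apply Lemma~\ref{lem:decomp-p-type} to $\beta\in\mathfrak{p}$: there exists $\alpha\in\mathfrak{k}$ with $e^{\alpha}\beta e^{-\alpha}=\Lambda\otimes i\sigma_y\otimes\sigma_z$ for a real diagonal $\Lambda$. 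Exponentiating and rearranging yields
\[
R = (K_1')^T e^{-\alpha}\,\exp(\Lambda\otimes i\sigma_y\otimes\sigma_z)\,e^{\alpha}(K_2')^T,
\]
and since both $(K_i')^T$ and $e^{\pm\alpha}$ preserve $\Omega$, the combined outer factors $K_1:=(K_1')^T e^{-\alpha}$ and $K_2:=e^{\alpha}(K_2')^T$ are symplectic orthogonal, giving the advertised $KAK$ factorisation.

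The hard part will be Step~2, specifically the verification that the block-diagonal matrix surviving the orthogonality squeeze genuinely lies in $\exp(\mathfrak{p})$ once the $\pm 1$ signs are absorbed. The subtlety is that the reshuffled tensor order used by Corollary~\ref{cor:reshuffled-2n-condensed-form} is dual to the one in which $\Omega=I_n\otimes i\sigma_y$ is naturally expressed, so one has to track carefully how the $2\times 2$ blocks of $R'$ in the "$i$-outer" view sit relative to $i\sigma_y$, and check that conjugation by $\Omega$ really sends $R'$ to $(R')^{-1}$ (equivalently, that no passive component survives inside $R'$). If a residual passive component does remain — which can happen when $R'_{11}\neq R'_{22}$ — it must be stripped off by an extra passive conjugation and absorbed into $K_1$ or $K_2$ before Lemma~\ref{lem:decomp-p-type} is invoked. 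Once this bookkeeping is done, the rest of the argument is essentially mechanical.
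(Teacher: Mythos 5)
Your skeleton matches the paper's: both start from Corollary~\ref{cor:reshuffled-2n-condensed-form}, use orthogonality of $R$ to force $R'_{12}=0$ and $R'_{11}$ diagonal, and then try to reduce the surviving block to the canonical anti-passive exponential. The gap is in your Step~2. After absorbing the signs you are left with $M = I\otimes E_{11} + L\otimes E_{22}$ with $L$ orthogonal lower Hessenberg, and you assert this can be written as $\exp(\beta)$ for a real antisymmetric $\beta$ anticommuting with $\Omega$. That is false except when $L=I$: membership in $\exp(\mathfrak{p})$ forces $\Omega M\Omega^T = M^{-1}$, and a direct computation with $\Omega = I_n\otimes(E_{12}-E_{21})$ gives $\Omega M\Omega^T = I\otimes E_{22}+L\otimes E_{11}$, while $M^{-1}= I\otimes E_{11}+L^T\otimes E_{22}$, so the condition fails for every nontrivial $L$. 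Your fallback --- ``strip off the residual passive component by an extra passive conjugation'' --- cannot repair this: $\exp(\mathfrak{p})$ is invariant under conjugation by passive elements (if $K\Omega K^T=\Omega$ and $\{\beta,\Omega\}=0$ then $\{K^{-1}\beta K,\Omega\}=0$), so no passive conjugation moves $M$ into $\exp(\mathfrak{p})$ if it is not already there. What is required is a one-sided passive factor, and producing it explicitly is the actual content of the lemma; your plan names the difficulty but does not supply the construction.

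The paper's mechanism is a square-root symmetrization: block-diagonalize $D^{-1}L$ as $SBS^T$ with $B$ a direct sum of planar rotations $\exp(\lambda_j\, i\sigma_y)$, split $B=B^{1/2}B^{1/2}$, and pull the factor $B^{1/2}\otimes I$ (symplectic orthogonal, hence absorbable into $K_1$) out to the left. The residue $B^{-1/2}\otimes E_{11}+B^{1/2}\otimes E_{22}$ does satisfy $\Omega M\Omega^T=M^{-1}$ and is manifestly of the form $\exp\left(-\tfrac{1}{2}\Lambda\otimes i\sigma_y\otimes \sigma_z\right)$, so the detour through Lemma~\ref{lem:decomp-p-type} in your Step~3 becomes unnecessary --- by the time you have a legitimate element of $\exp(\mathfrak{p})$ in hand you already have it in the canonical form. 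As written, however, the plan never produces such an element.
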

\begin{proof}
  First, apply Corollary~\ref{cor:reshuffled-2n-condensed-form} to obtain symplectic orthogonal matrices $U$ and $V$, diagonal orthogonal matrix $D$, and lower Hessenberg orthogonal matrix $L$ such that
  \begin{align}
    R = U  \left(D \otimes E_{11} + L\otimes E_{22}\right) V.
  \end{align}
  Note that for any orthogonal $n\times n$ matrix $Q$, the matrix $Q\otimes I$ is both symplectic and orthogonal. We update $U$ and $L$ by applying the inverse of $D\otimes I$ to obtain
  \begin{align}
    R = U  D\otimes I \left(I \otimes E_{11} + D^{-1}L\otimes E_{22}\right) V.
  \end{align}
  Next, we use the well-known \changed{corollary of the real Schur decomposition~\cite{horn2012matrix}}, that any \changed{$2n\times 2n$} special orthogonal matrix may be block-diagonalized into $2\times 2$ blocks by conjugation by another special-orthogonal matrix,
  \begin{align}
    D^{-1} L &= S \bigoplus_{j=0}^{n-1} \begin{pmatrix} \cos(\lambda_j) & \sin(\lambda_j) \\ - \sin(\lambda_j) & \cos(\lambda_j)
                              \end{pmatrix}S^T\\
             &= S B S^T.
  \end{align}
  Applying this to $R$, we obtain
  \begin{align}
    R = U  (DS\otimes I) \left(I \otimes E_{11} + B \otimes E_{22}\right)  (S^T\otimes I) V.
  \end{align}
  Now, of course
  \begin{align}
    B &= \exp(\Lambda\otimes i\sigma_y),
  \end{align}
  where
  \begin{align}
    \Lambda &= \begin{pmatrix}
                 \lambda_0 & & &\\
                           & \lambda_1 & & \\
                           & & \ddots &\\
                           & & & \lambda_{\frac{n}{2}-1}
               \end{pmatrix}.
  \end{align}
  Next, we update $R$ by applying a square root of the rotation $B$, which is easily obtained from $\Lambda$:
  \begin{align}
    R = U  (DS\otimes I) (B^{\frac{1}{2}}\otimes I)\left(B^{-\frac{1}{2}} \otimes E_{11} + B^{\frac{1}{2}} \otimes E_{22}\right)  (S^T\otimes I) V.
  \end{align}
  Setting
  \begin{align}
    K_1 &= U  (DS\otimes I) (B^{\frac{1}{2}}\otimes I),\\
    A &= \left(B^{-\frac{1}{2}} \otimes E_{11} + B^{\frac{1}{2}} \otimes E_{22}\right),\\
    K_2 &= (S^T\otimes I) V,
  \end{align}
  it is easy to verify that we have obtained a decomposition of the form shown in Eq.~\eqref{eqn:kak-orthogonal-matrices}.
\end{proof}

A KAK-type factorisation of FLO unitaries follows easily from the above result.
\begin{proof}[Proof of Lemma~\ref{lem:kak-FLO-unitaries}]
  First, apply Lemma~\ref{lem:kak-orthogonal-matrices} to obtain
  \begin{align}
    \phi(U) = k_1 a k_2,
  \end{align}
  in the special-orthogonal group. Now, let
  \begin{align}
    K_1 &= \exp\left(-\frac{1}{4} \sum_{j,k=0}^{2n-1} \log(k_1)_{jk} c_j c_k\right),\\
    A &= \exp\left(-\frac{1}{4} \sum_{j,k=0}^{2n-1} \log(a)_{jk} c_j c_k\right),\\
    K_2 &= \exp\left(-\frac{1}{4} \sum_{j,k=0}^{2n-1} \log(k_2)_{jk} c_j c_k\right).
  \end{align}
  Since the map $\phi$ is an \changed{anti-homorphism} with $\ker{\phi} = \{\pm I\}$, we have that
  \begin{align}
    U &= \pm K_2 A K_1\label{eqn:flo-kak-up-to-phase}.
  \end{align}
  However, $\pm I$ are in the group of passive FLO unitaries, $\mathcal{K}$ so we just absorb the $-I$ in $K_1$ if necessary.
\end{proof}


\section{Recovering phases}
\label{app:recovering-phases}

In this appendix we address the problem of \emph{efficiently} recovering the phase $\pm I$ appearing in Eq.~\eqref{eqn:flo-kak-up-to-phase}. In order to address questions of efficiency, we specify that all FLO operators are expressed in terms of polynomial amounts of data rather than, for example, an explicit representation of a $2^{n}\times 2^{n}$ dimensional complex matrix. The method for fixing the phase is conceptually rather simple -- one simply chooses a matrix element and checks if it is the same on both sides of Eq.~\eqref{eqn:flo-kak-up-to-phase}, or if it differs by a phase. Practically, there are some complicating factors. Of course, if the matrix element turns out to be zero, this procedure does not work. Similarly, if the matrix element is very small (in magnitude) then imprecise (floating point) arithmetic on the classical computer will render the procedure meaningless. We show how to choose and compute a matrix element which is guaranteed to \changed{have absolute value equal to $1$}. We first require an intermediate lemma, showing how to compose passive (zero-preserving) FLO unitaries in a phase-sensitive way.
\begin{lem}\label{lem:app-merge-k-type-flo-unitaries}
  Given $2n\times 2n$ dimensional antisymmetric real matrices $\beta$ and $\gamma$, such that
  \begin{align}
    [\beta, \Omega] = [\gamma, \Omega] = 0,
  \end{align}
  there is a polynomial-time algorithm to compute $\xi$ such that
  \begin{align}
    \exp\left(\frac{1}{4}\sum_{j,k=0}^{2n-1}\beta_{jk}c_j c_k\right)\exp\left(\frac{1}{4}\sum_{j,k=0}^{2n-1}\gamma_{jk}c_j c_k\right) = \exp\left(\frac{1}{4}\sum_{j,k=0}^{2n-1}\xi_{jk}c_j c_k\right).
  \end{align}
\end{lem}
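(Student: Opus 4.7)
The plan is to exploit the identification of the passive subgroup $\mathcal{K}$ with a double cover of $U(n)$ given by parts (5) and (6) of Lemma~\ref{lem:props-of-k-type-flo}: multiply the two passive unitaries inside $U(n)$, take a matrix logarithm there, lift the result back to $\mathfrak{flo}(n)$, and then correct a residual $\pm$ sign coming from $\ker\phi=\{\pm I\}$ (Lemma~\ref{lem:app-ker-of-phi}).

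Concretely, since $[\beta,\Omega]=[\gamma,\Omega]=0$, Lemma~\ref{lem:props-of-k-type-flo}(5) lets me uniquely write $\beta = A_\beta\otimes I + B_\beta\otimes i\sigma_y$ with $A_\beta$ real antisymmetric and $B_\beta$ real symmetric, so that $M_\beta := A_\beta + iB_\beta$ is an $n\times n$ anti-Hermitian matrix, and analogously for $\gamma$; both are read off from the input in $\order{n^2}$ time. By Lemma~\ref{lem:props-of-k-type-flo}(6), $\exp(\beta)$ is encoded by $V_\beta := \exp(M_\beta)\in U(n)$, and a direct block computation (using $(i\sigma_y)^2=-I$) shows that the assignment $W\mapsto \re{W}\otimes I + \im{W}\otimes i\sigma_y$ is a group homomorphism, so $\exp(\beta)\exp(\gamma)$ is encoded by $V_\xi := V_\beta V_\gamma$. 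I would then compute $V_\xi$, take an anti-Hermitian logarithm $M_\xi$ by unitary diagonalisation of $V_\xi$ with the principal branch of $\log$ on each unimodular eigenvalue, split $M_\xi = A_\xi + iB_\xi$ into its real antisymmetric and real symmetric parts, and output $\xi := A_\xi\otimes I + B_\xi\otimes i\sigma_y$. By construction $\xi$ is real antisymmetric, commutes with $\Omega$, and has the correct image in $\operatorname{SO}(2n)$.

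The main obstacle is the residual global sign: by Lemma~\ref{lem:app-ker-of-phi} the construction so far only guarantees $\exp\!\left(\tfrac{1}{4}\sum_{jk}\xi_{jk}c_jc_k\right) = \pm\, U_\beta U_\gamma$, where $U_\beta, U_\gamma$ are the two input passive FLO unitaries. To fix this sign I would use the vacuum-phase identity
\begin{equation*}
  U_\alpha\ket{0} \;=\; \exp\!\left(\tfrac{1}{2}\tr M_\alpha\right)\ket{0},
\end{equation*}
valid for every passive FLO unitary with complex representative $M_\alpha$. This identity is verified by simultaneously diagonalising $M_\alpha$ in $U(n)$ by a passive FLO conjugation (which preserves vacuum phases) and reducing to a product of commuting single-pair generators $\tfrac{\theta_j}{2}c_{2j}c_{2j+1}$, after which $c_{2j}c_{2j+1}\ket{0}=i\ket{0}$ reads off the phase. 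Consequently the target phase is $\exp(\tfrac{1}{2}\tr(M_\beta+M_\gamma))$ while the candidate $\xi$ yields $\exp(\tfrac{1}{2}\tr M_\xi)$; if these agree the candidate is already correct, and if they differ (necessarily by $-1$) I would shift the principal-branch log of a single eigenvalue of $V_\xi$ by $2\pi i$, producing a new anti-Hermitian logarithm $M_\xi'$ of the same $V_\xi$ whose trace is shifted by $2\pi i$, and recompute $\xi$ from $M_\xi'$. All the steps---block extraction, matrix exponential, unitary diagonalisation, branched logarithm, trace comparison, and reconstruction---are standard $\order{n^3}$ operations, giving the claimed polynomial-time algorithm.
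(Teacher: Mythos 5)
Your proposal is correct and follows essentially the same route as the paper: compose the two elements in the base (linear) representation, take a matrix logarithm to get a candidate generator, and resolve the residual $\pm I$ ambiguity from $\ker\phi$ by comparing vacuum phases, which in both arguments reduce to the trace formula $\bra{0}\exp\bigl(\tfrac{1}{4}\sum_{jk}\alpha_{jk}c_jc_k\bigr)\ket{0}=\exp\bigl(\tfrac{1}{2}\tr M_\alpha\bigr)=\exp\bigl(\tfrac{i}{2}\tr\Lambda\bigr)$. The only (harmless) differences are that you multiply in $U(n)$ rather than in $\operatorname{SO}(2n)$ before lifting, and that you make the sign correction explicit by shifting one eigenvalue's logarithm by $2\pi i$, a detail the paper leaves implicit.
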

\begin{proof}
  We first perform the composition in the special-unitary group to obtain
  \begin{align}
    \exp(\beta)\exp(\gamma) = R.
  \end{align}
  Then, note that
  \begin{align}
    \exp\left(\frac{1}{4}\sum_{j,k=0}^{2n-1}\xi_{jk}c_j c_k\right) = \pm \exp\left(\frac{1}{4}\sum_{j,k=0}^{2n-1}\log(R)_{jk}c_j c_k\right).
  \end{align}
  The unknown phase above may be fixed by noting that
  \begin{align}
    \bra{0}\exp\left(\frac{1}{4}\sum_{j,k=0}^{2n-1}\beta_{jk}c_j c_k\right)\ket{0}\bra{0}\exp\left(\frac{1}{4}\sum_{j,k=0}^{2n-1}\gamma_{jk}c_j c_k\right)\ket{0} = \bra{0}\exp\left(\frac{1}{4}\sum_{j,k=0}^{2n-1}\xi_{jk}c_j c_k\right)\ket{0}.
  \end{align}
  We will show how to compute the first of these terms, which is sufficient since the three have the same form. First, choose $A$ and $B$ such that $\beta = A\otimes I + B\otimes i\sigma_y$, and note that $A + A^T = B - B^T = 0$. Let $U$ be the unitary which diagonalises the complex matrix $A + iB$. Since this matrix is antihermitian, it has purely imaginary eigenvalues. Let $c$ and $s$ be the real and imaginary parts of $U$, so that $U = c + is$, and define $Q = c\otimes I + s\otimes i\sigma_y$. It is easy to verify that
  \begin{align}
    Q \beta Q^T = \re{U(A + iB) U^\dagger} \otimes I + \im{U(A + iB) U^\dagger} \otimes i\sigma_y = \Lambda \otimes i\sigma_y,
  \end{align}
  where $\Lambda$ is diagonal and real, since we chose $U$ to diagonalise $A + iB$. Now, since $U$ is unitary, Lemma~\ref{lem:props-of-k-type-flo} implies that
  \begin{align}
    \bra{0}\exp\left(\frac{1}{4}\sum_{j,k=0}^{2n-1}\beta_{jk}c_j c_k\right)\ket{0} &= \bra{0}\exp\left(\frac{1}{2}\sum_{j=0}^{n-1}\Lambda_{jj}c_{2j} c_{2j+1}\right)\ket{0} = \exp\left(\frac{i}{2}\tr{\Lambda}\right).
  \end{align}
  Finally, for the sake of efficiency, note that $\tr{\Lambda}$ may be obtained without explicitly performing any diagonalisations, by noting that $\tr(\Lambda\otimes I) = -\tr((\Lambda \otimes i\sigma_y)( I  \otimes i\sigma_y)) = -\tr(Q \beta Q^T  (I  \otimes i\sigma_y) ) = -\tr(\beta (I  \otimes i\sigma_y))$.
\end{proof}

We then have the following crucial result. Note that the restriction to an even number of qubits, is a matter of convenience only; of course, we can always add an extra unused qubit to the computation.

\begin{lem}[Recovering phases]\label{lem:app-recovering-phases}
  Given Majorana fermion operators acting on an \changed{$n$-qubit Hilbert space, for even $n$} and $2n\times 2n$ dimensional real, \changed{antisymmetric} matrix $\alpha$, there is a polynomial time algorithm to determine an $\frac{n}{2}\times \frac{n}{2}$ diagonal real matrix $\Lambda$, and two $2n\times 2n$ antisymmetric real matrices $\beta$, $\gamma$ such that \mbox{$[\beta, \Omega] = [\gamma, \Omega] = 0$} and
  \begin{align}
    \exp\left(\frac{1}{4}\sum_{jk}\alpha_{jk}c_j c_k\right) = \exp\left(\frac{1}{4}\sum_{jk}\beta_{jk}c_j c_k\right)\exp\left( \frac{1}{4}\sum_{jk} (\Lambda\otimes i\sigma_y\otimes \sigma_z)_{jk} c_j c_k \right)\exp\left(\frac{1}{4}\sum_{jk}\gamma_{jk}c_j c_k\right).
  \end{align}
\end{lem}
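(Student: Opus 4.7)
The plan is to turn the existence-only statement of Lemma~\ref{lem:kak-FLO-unitaries} into an explicit, polynomial-time, phase-sensitive procedure. The strategy has three steps: (i) perform the $KAK$ decomposition at the level of $\mathrm{SO}(4n)$; (ii) lift each factor to $\operatorname{FLO}(2n)$ along the double cover $\phi$, which by Lemma~\ref{lem:app-ker-of-phi} leaves only a global $\pm 1$ ambiguity; and (iii) pin down that sign by comparing a single matrix element of the two sides between carefully chosen reference states.

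For step (i), form $R = \exp(\alpha) \in \mathrm{SO}(4n)$ and apply Lemma~\ref{lem:kak-orthogonal-matrices}. Its proof is constructive and runs in $\order{n^3}$ time via Theorem~\ref{thm:general-2n-condensed-form} and the Hua--Youla block-diagonalisation, producing symplectic orthogonal matrices $k_1, k_2$ and a real diagonal $\Lambda$ with $R = k_1 \exp(\Lambda\otimes i\sigma_y \otimes \sigma_z) k_2$. For step (ii), compute real antisymmetric logarithms $\beta_0 = \log(k_1)$ and $\gamma_0 = \log(k_2)$ by block-diagonalising each $k_i$ into $2\times 2$ rotation blocks; because $k_i\Omega k_i^T = \Omega$, we have $[\beta_0, \Omega] = [\gamma_0, \Omega] = 0$ by Lemma~\ref{lem:props-of-k-type-flo}. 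Lemma~\ref{lem:app-ker-of-phi} then guarantees a sign $\varepsilon \in \{\pm 1\}$ such that
\begin{multline*}
\exp\left(\tfrac{1}{4}\sum\alpha_{jk}c_j c_k\right) = \varepsilon\, \exp\left(\tfrac{1}{4}\sum (\beta_0)_{jk} c_j c_k\right) \\ \times \exp\left(\tfrac{1}{4}\sum (\Lambda\otimes i\sigma_y\otimes\sigma_z)_{jk} c_j c_k\right) \exp\left(\tfrac{1}{4}\sum (\gamma_0)_{jk} c_j c_k\right).
\end{multline*}
Setting $\beta = \beta_0$ and $\gamma = \gamma_0$ handles $\varepsilon = +1$; when $\varepsilon = -1$, exploit the identity $\exp(\pi c_0 c_1) = -I$ from the proof of Lemma~\ref{lem:app-ker-of-phi}, whose generator $2\pi(E_{01} - E_{10})$ commutes with $\Omega$, and merge it with the lifted $K_1$ via Lemma~\ref{lem:app-merge-k-type-flo-unitaries}, producing a corrected passive $\beta$ while keeping $\gamma = \gamma_0$.

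The main obstacle is step (iii): identifying $\varepsilon$ in a numerically robust way. The plan is to compute a matrix element $\bra{\xi}\text{LHS}\ket{\eta} = \varepsilon\, \bra{\xi}\text{RHS}\ket{\eta}$ and read the sign off by comparison. For the right-hand side, the passive $K_1, K_2$ can be propagated phase-sensitively using Lemma~\ref{lem:app-merge-k-type-flo-unitaries}, and $A$ has the explicit product form $\prod_a \exp(\lambda_a c_{4a} c_{4a+2})$ inherited from the rearrangement in Lemma~\ref{lem:FLO-BM-form}, so its amplitudes admit closed-form evaluation. For the left-hand side, the matrix element of $\exp(\tfrac{1}{4}\sum \alpha_{jk} c_j c_k)$ between Gaussian states reduces to a Pfaffian of a structured $\order{n}\times\order{n}$ matrix. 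The real technical difficulty is choosing $\ket{\xi}, \ket{\eta}$ so that this witness is bounded safely away from zero. My approach is to take $\ket{\eta} = \ket{0}$ and $\ket{\xi} = \ket{x}$ a computational basis state of parity matching that of $U\ket{0}$, selected greedily by bit-by-bit maximisation of the conditional marginals derived from the covariance matrix $M = \phi(U) M_0 \phi(U)^T$ of $U\ket{0}$. Because the $2^{2n-1}$ parity-restricted Born probabilities sum to one, some such $x$ obeys $|\bra{x}U\ket{0}| \geq 2^{-n}$, and it is identified in polynomial time from $M$. Assembling everything---the $\order{n^3}$ orthogonal $KAK$, matrix logarithm, covariance marginal maximisation, Pfaffian evaluation, and iterated applications of Lemma~\ref{lem:app-merge-k-type-flo-unitaries}---yields the claimed polynomial-time algorithm.
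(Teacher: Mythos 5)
Your steps (i) and (ii) follow the paper's route exactly: the orthogonal-group $KAK$ decomposition of Lemma~\ref{lem:kak-orthogonal-matrices}, lifting via matrix logarithms, and absorbing a possible $-I$ into the passive factor. The gap is in step (iii), and it sits at the exact point the lemma exists to address. You propose to certify the sign by computing $\bra{x}\exp\bigl(\tfrac14\sum_{jk}\alpha_{jk}c_jc_k\bigr)\ket{0}$ for a cleverly chosen computational basis state $\ket{x}$, asserting that this matrix element ``reduces to a Pfaffian of a structured $\order{n}\times\order{n}$ matrix''. But a phase-correct evaluation of a matrix element of a FLO unitary specified only by its generator $\alpha$ is precisely the problem being solved: the Gaussian-integral/Pfaffian formulas require the Grassmann generating function of the operator, and the scalar prefactor of that generating function (a square root of a determinant, or equivalently $\pm\prod_j\cos(\mu_j/2)$) carries exactly the $\pm1$ ambiguity you are trying to resolve. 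As written, your step (iii) is circular.

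The paper breaks the circularity differently: it block-diagonalises $\alpha$ itself by a special orthogonal $R$, lifts $R$ to a FLO unitary $V$ (whose own sign ambiguity cancels in the sandwich $V(\cdot)V^\dagger$), and thereby rewrites the left-hand side as $V\prod_j\bigl(\cos(\mu_j/2)I+\sin(\mu_j/2)c_{2j-1}c_{2j}\bigr)V^\dagger$, an \emph{explicit} product of commuting elementary rotations. The witness states are then taken to be $V$-rotated occupation-number states, so the quantity compared is $\bra{0}V^\dagger(\cdot)Vc(x)^\dagger\ket{0}$ with $x$ chosen to swap the roles of $\cos$ and $\sin$ wherever $\abs{\sin}>\abs{\cos}$; the left-hand matrix element is then the closed-form product $\prod_j\cos(\tilde\mu_j/2)$, each factor at least $1/\sqrt2$, with no Pfaffian and no residual sign ambiguity. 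Only the right-hand side $K_1AK_2$ is evaluated by the parity-operator trace and Grassmann/Pfaffian machinery, which is legitimate because the phases of $K_1$, $A$ and $K_2$ are individually pinned down (passivity for the $K$'s, the explicit commuting product form for $A$). Your greedy-marginal witness selection is a reasonable way to guarantee a $2^{-n}$ lower bound on the witness overlap, but unless you also supply the conjugation trick (or some other phase-unambiguous representation of $\exp\bigl(\tfrac14\sum_{jk}\alpha_{jk}c_jc_k\bigr)$), the left-hand side of your comparison cannot be computed with a determinate sign, and the sign test never gets off the ground.
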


\begin{proof}
  First, apply Lemma~\ref{lem:kak-FLO-unitaries} to obtain $\Lambda$, $\beta$ and $\gamma$ satisfying
  \begin{align}
    \exp\left(\frac{1}{4}\sum_{j,k=0}^{2n-1}\alpha_{jk}c_j c_k\right) = \pm\exp\left(\frac{1}{4}\sum_{j,k=0}^{2n-1}\beta_{jk}c_j c_k\right) \exp\left( \frac{1}{4}\sum_{j,k=0}^{2n-1} (\Lambda\otimes i\sigma_y\otimes \sigma_z)_{jk} c_j c_k \right)\exp\left(\frac{1}{4}\sum_{j,k=0}^{2n-1}\gamma_{jk}c_j c_k\right).\label{eqn:app-flo-kak-unknown-phase}
  \end{align}
  Now, employing the results of Hua~\cite{hua-automorphic-1} and Youla~\cite{youla-1960-normal}, \changed{or the real Schur decomposition~\cite{horn2012matrix}} one can block-diagonalize $\alpha$ into $2\times 2$ blocks by a special orthogonal matrix $R$:
  \begin{align}
    \alpha = R \left[\bigoplus_{j=0}^{n-1} \begin{pmatrix}
                             0 & \mu_j \\ -\mu_j & 0
                           \end{pmatrix}\right]R^T.
  \end{align}
  Define
  \begin{align}
    V = \exp\left(\changed{-}\frac{1}{4} \sum_{j,k=0}^{2n-1}\log(R)_{jk} c_j c_k\right),
  \end{align}
  so that we have
  \begin{align}
    \exp\left(\frac{1}{4}\sum_{j,k=0}^{2n-1}\alpha_{jk}c_j c_k\right) &= V \exp\left(\frac{1}{2} \sum_{j=0}^{n-1} \mu_j c_{2j}c_{2j+1}\right) V^\dagger\\
                                                            &= V \prod_{j=0}^{n-1}\left(\cos\left(\frac{\mu_j}{2}\right) I + \sin\left(\frac{\mu_j}{2}\right) c_{2j}c_{2j+1}\right) V^\dagger.
  \end{align}
  We recover the unknown phase in Eq.~\eqref{eqn:app-flo-kak-unknown-phase} by computing the ``expectation value''\footnote{Here defined as $\langle A\rangle_{\ket{\phi}} = \bra{\phi}A\ket{\phi}$ even when $A$ is not self-adjoint.} of each side of the equation in the vector $V\ket{0}$. We begin with left hand side
  \begin{align}
    \bra{0}V^\dagger \exp\left(\frac{1}{4}\sum_{j,k=0}^{2n-1}\alpha_{jk}c_j c_k\right) V \ket{0} &= \bra{0} \exp\left(\frac{1}{4}\sum_{j,k=0}^{2n-1}\left(R^T \alpha R\right)_{jk} c_j c_k\right)\ket{0}\\
    &=\bra{0}  \exp\left(\frac{1}{2}\sum_{j=0}^{n-1} \mu_j c_{2j} c_{2j+1} \right)\ket{0}\\
    &=\bra{0}  \exp\left(\frac{i}{2}\sum_{j=0}^{n-1} \mu_j Z_j\right)\ket{0}\\
     &=\exp\left(\frac{i}{2}\sum_{j=0}^{n-1}\mu_j\right).
  \end{align}
  We now calculate the expectation value of the right hand side of Eq.~\eqref{eqn:app-flo-kak-unknown-phase} the same vector
  \begin{align}
    \pm \exp\left(\frac{i}{2}\sum_{j=0}^{n}\mu_j\right) =  \bra{0}V^\dagger &\exp\left(\frac{1}{4}\sum_{j,k=0}^{2n-1}\beta_{jk}c_j c_k\right)\exp\left( \frac{1}{4}\sum_{j,k=0}^{2n-1} (\Lambda\otimes i\sigma_y\otimes \sigma_z)_{jk} c_j c_k \right)\times\nonumber\\&\exp\left(\frac{1}{4}\sum_{j,k=0}^{2n-1}\gamma_{jk}c_j c_k\right) V  \ket{0}.
  \end{align}
  For convenience, let the three exponentials in the above equation be $K_1$, $A$ and $K_2$, respectively, then this equation becomes
  \begin{align}
    \pm \exp\left(\frac{i}{2}\sum_{j=0}^{n-1}\mu_j\right) &=  \bra{0} V^\dagger K_1 A K_2 V \ket{0}\\
                                                         &=  \tr\left((K_1^\dagger V \ketbra{0}{0}  V^\dagger  K_1)A  ( K_2 K_1) \right).\label{eqn:trace-inner-product-3-terms-1}
  \end{align} 
  We now introduce the fermionic parity operator $P = (-i)^{n}\prod_{j=0}^{2n-1} c_j$. Recalling that $P$ is both unitary and self-adjoint we have
  \begin{align}
    \pm \exp\left(\frac{i}{2}\sum_{j=0}^{n-1}\mu_j\right) &=  \tr\left(P (P K_1^\dagger V \ketbra{0}{0}  V^\dagger  K_1)A  ( K_2 K_1) \right).\label{eqn:trace-inner-product-3-terms-2}
  \end{align}
    The trace may be computed, in polynomial time, using the methods detailed in Appendix~A of Ref.~\cite{bravyi-gosset-17-impurity}. We first compute generating functions for the three bracketed terms in Eq.~\eqref{eqn:trace-inner-product-3-terms-2}, and use formulas relating Gaussian integrals over Grassmann variables to Pfaffians to compute the trace as the Pfaffian of a $\operatorname{poly}(n)$ sized matrix. \changed{The formalism of Grassmann variables we use is exactly that employed in Ref.~\cite{bravyi-gosset-17-impurity}, in short a set of Grassmann variables may be thought of as a list of formal variables satisfying the anti-commutation relation $\theta_j \theta_k + \theta_k \theta_j = 0$, note that this includes the case $j=k$ implying that $\theta_j^2 = 0$ for all $j$. This relation is identical to that satisfied by the Majorana fermion operators, except when $j=k$. Quantum information theorists unfamiliar with this formalism may prefer to define each Grassmann variable $\theta_j$ to be some matrix of appropriate dimension, chosen such that the anticommutation relations are satisfied. Then the integrals over Grassmann variables we perform in the sequel may be expressed as traces. The generating functions we employ may be thought of as an isometry of (inner-product) vector spaces, mapping the algebra generated by the Majorana fermion operators to the algebra generated by the Grassmann variables. Although this isometry preserves inner products, it emphatically does not preserve algebra products, but only respects the vector-space structure of the algebras.}

  For the first bracketed term, first recall the Grassmann generating function for the projector onto the vacuum state
  \begin{align}
    \omega(\ketbra{0}{0}, \theta) &= \frac{1}{2^{2}} \exp\left(-\frac{i}{2} \theta^T M \theta \right)\\
    M  &= \bigoplus_{j=0}^n \begin{pmatrix}0 & 1 \\ -1 & 0\end{pmatrix},
  \end{align}
  where $M$ is the covariance matrix, and $\theta$ is a column vector of Grassman variables. To instead find the generating function $\omega(K_1^\dagger V \ketbra{0}{0}  V^\dagger  K_1, \theta)$ one simply replaces $M$ in the above formula by $\tilde{M} = R M R^T$ for a special orthogonal matrix $R$ which may be computed by simple matrix multiplication. We do not need to compute the product to find $R$ in a phase-sensitive way, since it appears conjugate-transposed on both sides of the projector and any overall phases will cancel.
  
  Again, following Ref.~\cite{bravyi-gosset-17-impurity}, we have
  \begin{align}
    \omega( P K_1^\dagger V \ketbra{0}{0}  V^\dagger  K_1, \theta) &= (-i)^{n} \int D\tau \exp(\tau^T \theta) \omega(K_1^\dagger V \ketbra{0}{0}  V^\dagger  K_1, \tau)\\
    &= (-i)^{n} \frac{1}{2^{n}} \int D\tau \exp\left(\tau^T \theta -\frac{i}{2} \tau^T \tilde{M}\tau\right),
  \end{align}
  where $\tau$ is another vector of $2n$ Grassman variables. We can compute the Grassman integral to obtain
  \begin{align}
      \omega( P K_1^\dagger V \ketbra{0}{0}  V^\dagger  K_1, \theta) &=(-i)^{n} \frac{1}{2^{n}} \operatorname{Pf}\left(-i \tilde{M}\right) \exp\left(-\frac{i}{2} \theta^T \tilde{M}^{-1}\theta\right)\\
      &=\frac{1}{2^{n}} \exp\left(\frac{i}{2} \theta^T \tilde{M}\theta\right).
  \end{align}
  
  The generating function for the second bracketed term in Eq.~\eqref{eqn:trace-inner-product-3-terms-2} may be directly computed. We will show the calculation in some detail, as we expect some readers will be unfamiliar with manipulations involving Grassman variables:
  \begin{align}
    A &= \exp\left( \frac{1}{4}\sum_{j,k=0}^{2n-1} (\Lambda\otimes i\sigma_y\otimes \sigma_z)_{jk} c_j c_k \right)\\
      &= \exp\left( \frac{1}{2}\sum_{j=0}^{\frac{n}{2}-1} \Lambda_{jj} \left(c_{4j} c_{4j+2}  - c_{4j+1} c_{4j+3}\right)\right)\\
      &= \prod_{j=0}^{\frac{n}{2}-1} \left(\cos\left(\frac{\lambda_j}{2}\right)I   +  \sin\left(\frac{\lambda_j}{2}\right)c_{4j} c_{4j+2} \right)\left(\cos\left(\frac{\lambda_j}{2}\right)I  -  \sin\left(\frac{\lambda_j}{2}\right)c_{4j+1} c_{4j+3} \right)\\
    \implies \omega(A, \eta) &= \prod_{j=0}^{\frac{n}{2}-1} \left(\cos\left(\frac{\lambda_j}{2}\right)   +  \sin\left(\frac{\lambda_j}{2}\right)\eta_{4j} \eta_{4j+2} \right)\left(\cos\left(\frac{\lambda_j}{2}\right)  -  \sin\left(\frac{\lambda_j}{2}\right)\eta_{4j+1} \eta_{4j+3} \right)\\
      &= \left(\prod_{j=0}^{\frac{n}{2}-1} \cos\left(\frac{\lambda_j}{2}\right)\right)^2 \prod_{j=0}^{\frac{n}{2}-1} \left(1   +  \tan\left(\frac{\lambda_j}{2}\right)\eta_{4j} \eta_{4j+2} \right)\left(1 -  \tan\left(\frac{\lambda_j}{2}\right)\eta_{4j+1} \eta_{4j+3} \right)\\
      &=\left(\prod_{j=0}^{\frac{n}{2}-1}\cos\left(\frac{\lambda_j}{2}\right)\right)^2 \prod_{j=0}^{\frac{n}{2}-1} \exp\left(\tan\left(\frac{\lambda_j}{2}\right)\left(\eta_{4j} \eta_{4j+2} - \eta_{4j+1} \eta_{4j+3}\right)\right)\\
      &=\left(\prod_{j=0}^{\frac{n}{2}-1} \cos\left(\frac{\lambda_j}{2}\right)\right)^2\exp\left(\frac{1}{2} \eta^T T \eta \right),
  \end{align}
  where $T$ is the appropriate  anti-symmetric matrix. Finally, the generating function for the third bracketed term in Eq.~\eqref{eqn:trace-inner-product-3-terms-2} may be found by employing Lemma~\ref{lem:app-merge-k-type-flo-unitaries} to obtain a matrix $\gamma$ satisfying
  \begin{align}
     K_2 K_1 = \exp\left(\frac{1}{4}\sum_{j,k=0}^{2n-1} \gamma_{jk} c_j c_k\right).
  \end{align}
  Then, we use the method given in the proof of that lemma to block-diagonalise $\gamma$ to obtain the decomposition
  \begin{align}
    K_2 K_1 &= K_3 \exp\left(\frac{1}{2}\sum_{j=0}^{n-1} \nu_j  c_{2j} c_{2j+1}\right) K_3^\dagger.
  \end{align}
  Using similar reasoning as we used above for the second term we obtain
  \begin{align}
    \omega(K_3^\dagger K_2 K_1  K_3, \phi) &=  \left(\prod_{j=0}^{n-1} \cos\left(\frac{\nu_j}{2}\right)\right) \exp\left(\sum_{j=0}^{n-1} \tan\left(\frac{\nu_j}{2}\right) \phi_{2j}\phi_{2j+1}\right)\\
   \omega( K_1 K_2, \phi) &=  \left(\prod_{j=0}^{n-1} \cos\left(\frac{\nu_j}{2}\right)\right) \exp\left(\frac{1}{2}\phi^T W \phi\right). 
  \end{align}
  We are now in a position to apply Eq.~(183) of Ref.~\cite{bravyi-gosset-17-impurity}, which we reproduce below for convenience. For any even operators $A$, $B$ and $C$ acting on an $n$ qubit Hilbert space 
  \begin{align}
    \tr\left(P A B C \right) &= (-i)^n 2^n \int D(\theta\eta\phi) \omega(A,\theta) \omega(B,\eta) \omega(C,\phi) \exp(\theta^T \eta + \eta^T\phi + \phi^T\theta).
  \end{align}
  Substituting the three generating functions, we obtain
  \begin{align}
    \tr\left(P A B C \right) &= (-i)^{n} \left(\prod_{j=0}^{\frac{n}{2}-1} \cos\left(\frac{\lambda_j}{2}\right)^2\right) \left(\prod_{k=0}^{n-1}\cos\left(\frac{\nu_k}{2}\right)\right) \\ &\qquad\int D(\theta\eta\phi) \exp\left(\frac{i}{2} \theta^T \tilde{M} \theta  +  \frac{1}{2} \eta^T T \eta+ \frac{1}{2}\phi^T W \phi + \theta^T \eta + \eta^T\phi + \phi^T\theta\right).\label{eqn:trace-inner-product-3-terms-3}
  \end{align}
  On employing the relation
  \begin{align}
    \int D(\theta)\exp\left(\frac{1}{2} \theta^T M \theta\right) &= \operatorname{Pf}(M),
  \end{align}
  the integral above can be rewritten as the Pfaffian of the matrix\\~\\
  \begin{align}
    L =\begin{pmatrix}
        i\tilde{M} & I & -I\\-I&T&I\\I&-I&W
    \end{pmatrix}
  \end{align}
  Although this algorithm is functional, we note that the numerical stability can be improved by reabsorbing the factors of $\cos\left(\frac{\lambda_j}{2}\right)$ and $\cos\left(\frac{\nu_j}{2}\right)$ back into the matrix $L$, using the property that $\operatorname{Pf}(BAB^T) = \operatorname{det}(B) \operatorname{Pf}(A)$. An appropriate choice of the matrix $B$ in this relation has determinant equal to the cosine prefactors in Eq.~\eqref{eqn:trace-inner-product-3-terms-3} and turns each appearance of $\tan$ in the matrices $T$ and $M$ into $\sin$. 
  Recalling that
  \begin{align}
      T &= \bigoplus_{j=0}^{\frac{n}{2}-1}\begin{pmatrix} 0 & 0 & \tan\left(\frac{\lambda_j}{2}\right) & 0\\
       0 & 0 & 0 & -\tan\left(\frac{\lambda_j}{2}\right)\\
       -\tan\left(\frac{\lambda_j}{2}\right) &0&0&0\\
       0&\tan\left(\frac{\lambda_j}{2}\right)&0&0
      \end{pmatrix}\\
      W&= S \bigoplus_{j=0}^{n-1}\begin{pmatrix}0 & \tan\left(\frac{\nu_j}{2}\right)\\
      -\tan\left(\frac{\nu_j}{2}\right) & 0
      \end{pmatrix} S^T,
  \end{align}
  where $S$ is the symplectic orthogonal matrix satisfying
  \begin{align}
      K_3^\dagger c_j K_3 = \sum_k S_{jk} c_k,
  \end{align}
  we define
  \begin{align}
      C_1 &= \bigoplus_{j=0}^{\frac{n}{2}-1}\begin{pmatrix}
          \cos\left(\frac{\lambda_j}{2}\right) & 0 & 0 & 0\\
          0 & \cos\left(\frac{\lambda_j}{2}\right) & 0 & 0 \\
          0&0&1&0\\
          0&0&0&1
      \end{pmatrix}\\
      C_2 &= S\bigoplus_{j=0}^{n-1}\begin{pmatrix}
          \cos\left(\frac{\nu_j}{2}\right) & 0 \\
          0 & 1
      \end{pmatrix} S^T
  \end{align}
  and set $B = I_{2n}\oplus C_1 \oplus C_2$. Finally, recalling that $\tilde{M}^{-1} = -\tilde{M}$, the standard Aitken block-diagonalization formula~\cite{schur-complement} may be used to improve the performance of the algorithm by a constant factor - requiring the computation of the Pfaffian of a $4n\times 4n$ matrix instead of a $8n\times 8n$ one.
\end{proof}
\changed{
We now show how to apply anti-passive FLO unitaries to a state expressed in the form~\ref{def:fermionic-gaussian-classical-data}.
\begin{lem}\label{lem:phase-sensitive-anti-passive}
    Given an $n$ qubit state $\ket{\psi}$ of the form given in definition~\ref{def:fermionic-gaussian-classical-data} 
    \begin{align}
        \ket{\psi} &= \omega K \exp\left(\sum_{j=0}^{\frac{n}{2}-1}\lambda_j c_{4j} c_{4j+2}\right)\ket{0}
    \end{align}
    and 
    \begin{align}
        A &= \exp\left(\sum_{j=0}^{\frac{n}{2}-1}\mu_j c_{4j} c_{4j+2}\right),
    \end{align}
    there is an $\order{n^3}$ time algorithm to compute the classical description for the state 
    \begin{align}
        A\ket{\psi} &= \tilde{\omega} \tilde{K}  \exp\left(\sum_{j=0}^{\frac{n}{2}-1}\tilde{\lambda}_j c_{4j} c_{4j+2}\right)\ket{0}
    \end{align}
\end{lem}
\begin{proof}
    The idea of the algorithm is simple, based on lemma~\ref{lem:app-recovering-phases}. We first perform the matrix multiplication in the special orthogonal group and invert the anti-homomorphism $\phi$ in order to obtain $U$ such that
    \begin{align}
        U = (-1)^a A K\exp\left(\sum_{j=0}^{\frac{n}{2}-1}\lambda_j c_{4j} c_{4j+2}\right),
    \end{align}
    for some as yet unknown $a$, we then apply the method shown in the proof of lemma~\ref{lem:app-recovering-phases} to this $U$ to obtain a $V$ such that
    $\bra{0}V^\dagger U V\ket{0}$ is an efficiently computable complex number of absolute value $1$, passive FLO unitaries $\tilde{K}$ and $\hat{K}$, and $\frac{n}{2}$ real numbers $\tilde{\lambda}_j$ such that
    \begin{align}
        U = \tilde{K} \exp\left(\sum_{j=0}^{\frac{n}{2}-1}\tilde{\lambda}_j c_{4j} c_{4j+2}\right)\hat{K}.
    \end{align}
    It remains only to compute the complex number
    \begin{align}
        \bra{0}V^\dagger A K\exp\left(\sum_{j=0}^{\frac{n}{2}-1}\lambda_j c_{4j} c_{4j+2}\right) V\ket{0} &= (-1)^a \bra{0}V^\dagger U V\ket{0},
    \end{align}
    from which we can obtain $a$. This can be achieved by essentially the same methods as employed in the proof of lemma~\ref{lem:app-recovering-phases}. We first rearrange
    \begin{align}
        \bra{0}V^\dagger A K\exp\left(\sum_{j=0}^{\frac{n}{2}-1}\lambda_j c_{4j} c_{4j+2}\right) V\ket{0} &= \tr\left( A^\dagger V \ketbra{0}{0}V^\dagger A K \exp\left(\sum_{j=0}^{\frac{n}{2}-1}(\lambda_j+\mu_j) c_{4j} c_{4j+2}\right) \right),
    \end{align}
    and then note that this trace has essentially exactly the same form as the one in equation~\eqref{eqn:trace-inner-product-3-terms-2}, and may be computed in the same way. Absorbing a factor of $(-1)$ into $\tilde{K}$ if necessary we finally obtain
    \begin{align}
        A\ket{\psi} &= \hat{\omega} \tilde{K}  \exp\left(\sum_{j=0}^{\frac{n}{2}-1}\tilde{\lambda}_j c_{4j} c_{4j+2}\right)\hat{K}\ket{0}\\
        &= \tilde{\omega} \tilde{K}  \exp\left(\sum_{j=0}^{\frac{n}{2}-1}\tilde{\lambda}_j c_{4j} c_{4j+2}\right)\ket{0}.
    \end{align}
\end{proof}}
\changed{
We now show how to compute the inner product between two states expressed in the form given in Definition~\ref{def:fermionic-gaussian-classical-data}.
\begin{lem}
  Given two $n$-qubit statevectors expressed in the form
  \begin{align}
      \ket{\psi_1} &= \omega_1 K_1 A_1 \ket{0}\\
      \ket{\psi_2} &= \omega_2 K_2 A_2 \ket{0},
  \end{align}
  where each $\omega_j\in\mathbb{C}$, each $K_j$ is a passive FLO unitary and each $A_j$ is a commuting anti-passive FLO unitary then there is an $\order{n^3}$ algorithm to compute the inner product
  \begin{align}
      \braket{\psi_1}{\psi_2} = \omega_1^*\omega_2 \bra{0}A_1^\dagger K_1^\dagger K_2 A_2\ket{0}.
  \end{align}
\end{lem}
\begin{proof}
 Again we assume without loss of generality that $n$ is even. We first express the inner product as a trace
 \begin{align}
     \omega_1^*\omega_2 \bra{0}A_1^\dagger K_1^\dagger K_2 A_2\ket{0} &= \omega_1^*\omega_2\tr\left(\ketbra{0}{0} A_1^\dagger K_1^\dagger K_2 A_2\right)\\
     &=\omega_1^*\omega_2\tr\left(A_1\ketbra{0}{0} A_1^\dagger (K_1^\dagger K_2) (A_2A_1^\dagger)\right),
 \end{align}
  now $A_2$ and $A_1^\dagger$ may be merged in the obvious way, simply by adding the two vectors of real numbers which define them, while $K_1^\dagger$ and $K_2$ may be combined using lemma~\ref{lem:app-merge-k-type-flo-unitaries}. This trace then has essentially the same form as the one appearing in equation~\eqref{eqn:trace-inner-product-3-terms-2} in lemma~\ref{lem:app-recovering-phases} and may be computed in the same way.
\end{proof}}
Since we are considering statevectors, including phase information, it makes sense to apply projectors to obtain possibly subnormalised states.
\begin{lem}[Applying projectors]\label{lem:app-applying-projectors}
  Given a fermionic-Gaussian statevector of the form
\begin{align}
  \ket{\psi} = \omega K A \ket{0},
\end{align}
where $K\ket{0} = \ket{0}$ and $A = \exp\left({\sum_k\lambda_k c_{4k}c_{4k+2}}\right)$, and a single qubit projector $P_j = a_j a_j^\dagger$. There is an $\order{N^3}$ algorithm to obtain $\omega^\prime$, $K^\prime$ and $A^\prime$ of the same form such that
\begin{align}
    a_j a_j^\dagger \ket{\psi} = \omega^\prime K^\prime A^\prime \ket{0}.
\end{align}

\end{lem}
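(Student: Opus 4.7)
The plan is to exploit the fact that single-mode occupation-number projections preserve fermionic Gaussianity, and reduce the task to producing a canonical form for the resulting Gaussian state using the machinery already developed in the paper. Writing $P_i = a_i a_i^\dagger = \tfrac{1}{2}(I - i c_{2i} c_{2i+1})$ as a projection onto the $+1$ eigenspace of a Majorana bilinear, the standard theory of fermionic Gaussian measurements tells us that whenever $P_i\ket{\psi}\neq 0$ the vector $P_i\ket{\psi}$ is, up to normalisation, a pure fermionic Gaussian state. Lemma~\ref{lem:FLO-BM-form} therefore guarantees the existence of a representation of the required form $\omega' K' A'\ket{0}$; the content of the lemma is that $(\omega', K', A')$ can be extracted in $\order{N^3}$ time.

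First, I would compute the real antisymmetric covariance matrix $M$ of the normalised state $\ket{\psi}/|\omega|$ from the classical description $(\omega, R, a, \lambda)$ via $M = \phi(K)\phi(A)\, M_0\, \phi(A)^T\phi(K)^T$, in $\order{N^3}$. The probability of the ``empty'' outcome on mode $i$ is read off as $p = (1 + M_{2i,2i+1})/2$; assuming $p > 0$, the covariance matrix $M'$ of the (normalised) post-projection state is obtained by the standard Schur-complement update for fermionic Gaussian measurements -- the $2\times 2$ block indexed by $(2i, 2i+1)$ is reset to the vacuum block $\left(\begin{smallmatrix}0 & 1\\-1 & 0\end{smallmatrix}\right)$ and the remaining blocks are updated so as to preserve $M'M'^T = I$. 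Next, I would run the algorithmic $KAK$-decomposition of Appendix~\ref{app:Cartans-decomp} (in particular Lemma~\ref{lem:kak-orthogonal-matrices}) on a special orthogonal $R'$ satisfying $R' M_0 R'^T = M'$, which produces a passive $K'\in\mathcal{K}$ and an anti-passive $A' = \exp\bigl(\sum_j \lambda'_j c_{4j} c_{4j+2}\bigr)$ whose product rotates the vacuum to a state with covariance matrix $M'$. All this costs $\order{N^3}$.

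It remains to determine the scalar $\omega'$, whose modulus is fixed directly by $|\omega'| = \sqrt{p}\,|\omega|$. For its argument I would invoke the phase-recovery strategy of Appendix~\ref{app:recovering-phases}: pick a Majorana monomial $c(y)$, with even weight on positions $2i$ and $2i+1$ so that $c(y)$ commutes with $P_i$, for which $\mu := \bra{0}c(y) K' A'\ket{0}$ is guaranteed to be bounded below in modulus by the product-of-cosines estimate given there, compute $\mu$ as a Pfaffian, and independently compute $\bra{0}c(y)\ket{\psi} = \bra{0}c(y) P_i\ket{\psi}/|\omega|$ using the inner-product subroutine of Section~\ref{sec:phase-sensitive-flo-sim}; the ratio of these two numbers, combined with the already-known $|\omega'|$, fixes the phase. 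The main subtlety is precisely this last step: one must verify that the reference-selection strategy of Appendix~\ref{app:recovering-phases}, which otherwise has complete freedom in its choice of $y$, can always be arranged to satisfy the parity constraint on modes $2i$ and $2i+1$ without sacrificing the lower bound on $|\mu|$. Since $K'A'\ket{0}$ has mode $i$ in the vacuum state by construction, any monomial $c(y)$ producing a nonzero overlap with it already has even Majorana weight on $(2i, 2i+1)$, so the constraint is automatic and the phase is extracted with the same numerical robustness as in the other update rules.
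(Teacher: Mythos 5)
Your proposal is correct in its overall architecture but follows a genuinely different route from the paper's. The paper never leaves the operator level: it writes \mbox{$a_ia_i^\dagger KA\ket{0} = K(K^\dagger a_iK)(K^\dagger a_i^\dagger K)A\ket{0}$}, uses passivity of $K$ to keep annihilators as pure annihilators, pushes the resulting linear combinations through $A$, annihilates on the vacuum so that the residue collapses to $KA\bigl(sI+\sum_{jk}t_ju_ka_j^\dagger a_k^\dagger\bigr)\ket{0}$, antisymmetrises $t$ against $u$, rotates the two coefficient vectors onto a fixed pair of modes with an auxiliary passive unitary $K_1$, recognises the surviving factor $\cos\lambda\, I+\sin\lambda\, c_0c_2$ as an elementary FLO unitary, and finally re-expresses the product $KAK_1^\dagger\exp(\lambda c_0c_2)$ in $KAK$ form. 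Because every step there is an exact operator identity, the only phase ambiguities that ever arise are the discrete $\pm 1$ of the covering map. Your route instead passes to the covariance matrix, applies the standard Gaussian-measurement update, re-synthesises $K'A'$ via the $KAK$ algorithm, and recovers the scalar at the end. This buys a shorter and more standard argument, at the price of importing the measurement-update formula (which the paper does not develop) and, more importantly, of discarding the phase at the covariance-matrix step, so that a \emph{continuous} $U(1)$ phase, not merely a sign, must be recovered from computed matrix elements; all the delicacy of the lemma is concentrated there.

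That final step is also where your argument has a gap. The product-of-cosines lower bound of Appendix~\ref{app:recovering-phases} applies to matrix elements of the form $\bra{0}V^\dagger(\cdot)Vc(x)^\dagger\ket{0}$ with $V$ adapted to the operator being probed; it does not supply a bare Majorana monomial $c(y)$ for which $\bra{0}c(y)K'A'\ket{0}$ is large. That quantity is (up to a phase) a computational-basis amplitude of the Gaussian state $K'A'\ket{0}$, and while a counting argument over the at most $2^{n-1}$ fixed-parity basis states shows that \emph{some} choice attains modulus at least $2^{-(n-1)/2}$, the selection strategy you cite does not hand you that choice. The repair stays inside the paper's toolbox: take as reference $\bra{0}c(x)^\dagger K'^{\dagger}$ with $c(x)$ flipping exactly those modes of $A'$ for which $\lvert\sin\lambda'_j\rvert>\lvert\cos\lambda'_j\rvert$, so that the overlap with $K'A'\ket{0}$ is a product of cosines each at least $1/\sqrt{2}$; or, simpler still, note that $\omega'=\bra{0}A'^\dagger K'^\dagger P_i\ket{\psi}$ exactly (the denominator is $1$), and that after writing $P_i=\tfrac{1}{2}(I-ic_{2i}c_{2i+1})$ this is a sum of two vacuum matrix elements of products of FLO unitaries and even Majorana monomials, which is precisely the kind of quantity the Pfaffian machinery of Lemma~\ref{lem:app-recovering-phases} evaluates in $\order{N^3}$. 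With that replacement your proof goes through.
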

\begin{proof}
First, we seek an update rule of the form 
\begin{align}
  K^\dagger a_j K &= \sum_k T_{jk} a_k.
\end{align}
This is easily obtained from
\begin{align}
  K^\dagger c_j K &= \sum_k R_{jk} c_k\\
  a_j &= \frac{1}{2}\left(c_{2j} + i c_{2j+1}\right),
\end{align}
and has the form
\begin{align}
  K^\dagger a_j K &= \frac{1}{2} K^\dagger c_{2j} K  + \frac{i}{2} K^\dagger c_{2j+1} K\\
                  &=\frac{1}{2} \sum_k \left(R_{2j,j} + i R_{2j+1, k}\right)c_{k}\\
                  &=\frac{1}{2} \sum_k \left(R_{2j,2k} + i R_{2j+1, 2k}\right)c_{2k} + \left(R_{2j,2k+1} + i R_{2j+1, 2k+1}\right)c_{2k+1} \\
                  &=\frac{1}{2} \sum_k \left(R_{2j,2k} + i R_{2j+1, 2k}\right)(a_k + a_k^\dagger) + \left(R_{2j,2k+1} + i R_{2j+1, 2k+1}\right)(-i)(a_k - a_k^\dagger) \\
                  &=\frac{1}{2} \sum_k \left(R_{2j,2k} + R_{2j+1, 2k+1}+  i R_{2j+1, 2k} - i R_{2j,2k+1}\right)a_k.
\end{align}
Note here that there is no $a_j^\dagger$ component in the sum exactly because $K$ is zero-preserving. Of course
\begin{align}
  K^\dagger a_j^\dagger K &= \left(K^\dagger a_j K\right)^\dagger\\
                          &=\frac{1}{2} \sum_k \left(R_{2j,2k} + R_{2j+1, 2k+1} -i R_{2j+1, 2k} + i R_{2j,2k+1}\right)a_k^\dagger.
\end{align}
Compressing the notation slightly, we now have
\begin{align}
  a_j a_j^\dagger K A \ket{0} &= K K^\dagger a_j a_j^\dagger K A \ket{0}\\ 
                              &= K \left(\sum_kx_k a_k\right)\left(\sum_kx_k^*a_k^\dagger\right)A\ket{0},\label{eqn:projector-move-1}
\end{align}
and we need to move $\sum_jx_j a_j$ and $\sum_jx_j^*a_j^\dagger$ through the unitary $A$. Using similar reasoning to above and the explicit form
\begin{align}
  A = \prod_j\left(\cos\lambda_j I + \sin\lambda_j c_{4j}c_{4j+2}\right),
\end{align}
it is easy to find $S$ and $T$ such that
\begin{align}
  A^\dagger a_j A &= \sum_j S_{jk} a_k + T_{jk} a_k^\dagger,\label{eqn:projector-move-2} \\
  A^\dagger a_j^\dagger A &= \sum_j S_{jk}^* a_k^\dagger + T_{jk}^* a_k.\label{eqn:projector-move-3}
\end{align}
Since $A$ is not zero-preserving, we have $a_k^\dagger$ components in the expression for $A^\dagger a_k A$ and vice-versa. However, noting that $a_j\ket{0} = \changed{0}$, we obtain 
\begin{align}
  a_j a_j^\dagger K A \ket{0} &= K A \sum_k \left(s_k a_k + t_k a_k^\dagger\right)\sum_l u_l a_l^\dagger \ket{0},
\end{align}
\changed{for some complex vectors $s$, $t$ and $u$ which may be computed by combining equations~\eqref{eqn:projector-move-1}, ~\eqref{eqn:projector-move-2} and~\eqref{eqn:projector-move-3}.}
Recalling the anti-commutation relations for the creation and annihilation operators, this becomes
\begin{align}
  a_j a_j^\dagger K A \ket{0} &= K A\left( \sum_k s_k u_k I  + \sum_{kl} t_k u_l a_k^\dagger a_l^\dagger\right)\ket{0}\\
                              &= K A\left( \sum_k s_k u_k I  + \sum_{kl} t_k u_l a_k^\dagger a_l^\dagger\right)\ket{0}.
\end{align}
For any real vector $v$, the anti-commutation relations imply
\begin{align}
  \sum_{jk} v_j v_k a_j^\dagger a_k^\dagger = 0,
\end{align}
so defining
\begin{align}
  t^\prime = t - \frac{(t\cdot u)}{u\cdot u} u,
\end{align}
so that $t^\prime \cdot u = 0$, we obtain
\begin{align}
  a_j a_j^\dagger K A \ket{0} &= K A\left( \sum_k s_k u_k I  + \sum_{kl} t_k u_l a_k^\dagger a_l^\dagger\right)\ket{0}\\
                              &=K A\left( \sum_k s_k u_k I  + \sum_{kl} t_k^\prime u_l a_k^\dagger a_l^\dagger\right)\ket{0}.
\end{align}
Since $t^\prime$ and $u$ are orthogonal, we can find an $SO(n)$ matrix $r$ such that
\begin{align}
  R t^\prime &= e_0\\
  R u &= e_1,
\end{align}
where $e_j$ is the $j^\text{th}$ column of the $n\times n$ identity matrix. This provides a passive FLO unitary $K_1$ such that
\begin{align}
    a_j a_j^\dagger K A \ket{0} &=K A\left( \sum_k s_k u_k I  + \sum_{kl} t_k^\prime u_l a_k^\dagger a_l^\dagger\right)\ket{0}\\
                                &= K A K_1^\dagger\left( a I  + b a_0^\dagger a_1^\dagger\right) K_1 \ket{0}\\
                                &= \alpha K A K_1^\dagger\left( \cos{\lambda}\, I  + \sin{\lambda}\, a_0^\dagger a_1^\dagger\right) K_1 \ket{0}\\
                                &= \tilde{\alpha} K A K_1^\dagger\left( \cos{\lambda}\, I  + \sin{\lambda}\, c_0 c_2\right) \ket{0}, \label{eqn:applying-projectors-all-unitary}
\end{align}
for real constants $a$, $b$, $\lambda$, and $\alpha$, and complex $\tilde{\alpha}$. Now, each operator applied to $\ket{0}$ in Eq.~\eqref{eqn:applying-projectors-all-unitary} is a FLO unitary, and the product may be re-expressed in $KAK$ form using the methods we have described above.
\end{proof}


\section{\texorpdfstring{Bounding $\lvert\alpha_y\rvert$}{Bounding the norm of alpha}}
\label{lem:app-bounding-norm-alpha-y}
\FloatBarrier 
\begin{proof}[Proof of Lemma~\ref{lem:bounding-the-norm2}]
  Recall
  \begin{align}
    \abs{\alpha_y} &= 4^k \abs{\bra{0}^n \bra{\tilde{y}, -\vec{\varphi}} \prod_{j=1}^m \tilde{V}_j \ket{0}^{\otimes(n+4k)}}
  \end{align}
  and the initial state has norm $1$, so we need to show that each gadget we introduce reduces the norm of the state by a factor of $\frac{1}{4}$. Each gadget will have the form shown in figure~\ref{fig:gadget2}, with $\ket{M_\theta}$ replaced by either $\ket{A(\theta_j)}$ or $\ket{B(\theta_j)}$ depending on the value of $y_j$. It makes no difference to the following argument whether we choose $A$ or $B$, for notational convenience we will assume that $y_j=0$ so we have state $A(\theta_j)$. In this case the gadget has the form shown in figure~\ref{fig:gadget-right-specialized}
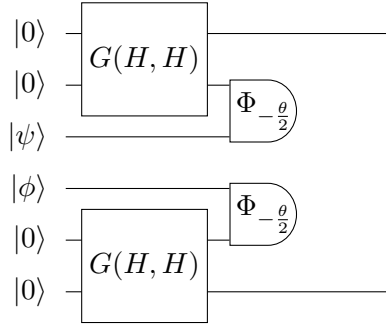
\begin{figure}
  \centering

\pgfdeclarelayer{background}
\pgfdeclarelayer{foreground}
\pgfsetlayers{background,main,foreground}
        \begin{tikzpicture}
    \tikzstyle{operator} = [draw,fill=white,minimum size=1.5em] 
    \tikzstyle{phase} = [draw,fill,shape=circle,minimum size=0.2cm,inner sep=0pt]
    \tikzstyle{multiQubitGate} = [fill=white,draw=black,shape=rectangle, minimum height=1.5cm, inner sep=1mm]
    \tikzstyle{singleQubitMeasurement} = [fill=white,draw=black,shape=rounded rectangle,rounded rectangle left arc=none,rounded rectangle arc length=130,inner sep=0.1cm, outer sep=0cm]
    %
    \matrix[row sep=0.3cm, column sep=1cm] (circuit) {
    \node[label=left:{$\ket{0}$}] (q1) {}; & 
    \node[phase] (G11) {}; & &
    \node (skip) {}; &
    \coordinate (end1); \\
    \node[label=left:{$\ket{0}$}] (q2) {}; & 
    \node[phase] (G12) {}; &  
    \node (M2) {}; &
    \coordinate (end2);\\
    \node[label=left:{$\ket{\psi}$}] (q3) {}; & & \node(M3) {};
    &
    \coordinate (end3);\\
    \node[label=left:{$\ket{\phi}$}] (q4){}; & & \node(M4) {};
    &
    \coordinate (end4);\\
    \node[label=left:{$\ket{0}$}] (q5) {}; & 
    \node[phase] (G21) {}; & 
    \node(M5) {}; &
    \coordinate (end5); \\
    \node[label=left:{$\ket{0}$}] (q6) {};  &
    \node[phase] (G22) {}; &
    \node (end7) {}; & &
    \coordinate (end6); \\
    };
     \begin{pgfonlayer}{foreground}
       \coordinate (rectbottomleft) at ($(M2)!1.1!(M3)$);
       \coordinate (recttopleft) at ($(M3)!1.1!(M2)$);
       \coordinate (recttopright) at ($(recttopleft) + (0.5cm, 0)$);
       \coordinate (rectbottomright) at ($(rectbottomleft) + (0.5cm, 0)$);
       \newdimen\mydim
       \pgfextracty{\mydim}{\pgfpointscale{0.5}{\pgfpointdiff{\pgfpointanchor{rectbottomright}{center}}{\pgfpointanchor{recttopright}{center}}}}
       \draw[fill=white, draw=black] (recttopleft) -- (rectbottomleft) -- (rectbottomright) arc[start angle=270, delta angle=180, x radius=.375cm, y radius=\mydim] -- cycle;
       \coordinate (rectmidleft) at ($(M2)!0.5!(M3)$);       
       \node (bigmeasurementlabel) at ($(rectmidleft) + (0.45cm, 0)$) {$\displaystyle{\Phi_{-\frac{\theta}{2}}}$};
    \end{pgfonlayer}
    \begin{pgfonlayer}{foreground}
       \coordinate (rectbottomleft) at ($(M4)!1.1!(M5)$);
       \coordinate (recttopleft) at ($(M5)!1.1!(M4)$);
       \coordinate (recttopright) at ($(recttopleft) + (0.5cm, 0)$);
       \coordinate (rectbottomright) at ($(rectbottomleft) + (0.5cm, 0)$);
       \newdimen\mydim
       \pgfextracty{\mydim}{\pgfpointscale{0.5}{\pgfpointdiff{\pgfpointanchor{rectbottomright}{center}}{\pgfpointanchor{recttopright}{center}}}}
       \draw[fill=white, draw=black] (recttopleft) -- (rectbottomleft) -- (rectbottomright) arc[start angle=270, delta angle=180, x radius=.375cm, y radius=\mydim] -- cycle;
       \coordinate (rectmidleft) at ($(M4)!0.5!(M5)$);       
       \node (bigmeasurementlabel) at ($(rectmidleft) + (0.45cm, 0)$) {$\displaystyle{\Phi_{-\frac{\theta}{2}}}$};
    \end{pgfonlayer}
        
        
    \begin{pgfonlayer}{background}
        \draw[] (q1) -- (end1)  (q2) -- (end2) (q3) -- (end3) (q4) -- (end4) (q5) -- (end5) (q6) -- (end6); 
        \coordinate (rectbottomleft) at ($(M2)!1.2!(M3)$);
        \coordinate (recttopleft) at ($(M3)!1.2!(M2)$);
        \coordinate (recttopright) at ($(M2)!1.2!(end2)$);
        \coordinate (rectbottomright) at ($(M3)!1.2!(end3)$);
        \draw[fill=white, draw=white] (rectbottomleft) -- (recttopleft) -- (recttopright) -- (rectbottomright) -- cycle;
        \coordinate (rectbottomleft) at ($(M4)!1.2!(M5)$);
        \coordinate (recttopleft) at ($(M5)!1.2!(M4)$);
        \coordinate (recttopright) at ($(M4)!1.2!(end4)$);
        \coordinate (rectbottomright) at ($(M5)!1.2!(end5)$);
        \draw[fill=white, draw=white] (rectbottomleft) -- (recttopleft) -- (recttopright) -- (rectbottomright) -- cycle;
    \end{pgfonlayer}
    \begin{pgfonlayer}{foreground}
        \node[multiQubitGate, at = ($(G11)!0.5!(G12)$)] (G1) {$G(H,H)$};
        \node[multiQubitGate, at = ($(G21)!0.5!(G22)$)] (G2) {$G(H,H)$};
    \end{pgfonlayer}
    \end{tikzpicture}
    \caption{The ``reverse gadget'', shown in in Fig.~\ref{fig:gadget2}, here with the magic state $\ket{M_{-\theta}}$ replaced by the state $\ket{A_{-\theta}} = \frac{1}{2}\left(\ket{00} + e^{-i\frac{\theta}{2}}\ket{11}\right)\otimes \left(\ket{00} + e^{-i\frac{\theta}{2}}\ket{11}\right)$ appearing in the decomposition \eqref{eqn:magic-state-orthogonal-decomposition}. The state $\ket{\Phi_{\theta}}$ in the projectors indicates the generalisation of the Bell-state with arbitrary phase $\ket{\Phi_{\theta}} = \frac{1}{\sqrt{2}}\left(\ket{00} + e^{i\theta}\ket{11}\right)$.}\label{fig:gadget-right-specialized}
\end{figure}

    We wish to examine what happens when we input two qubits of an arbitrary $n$ qubit state $\ket{\phi}$ on the third and fourth branches of this gadget. First recall from the definition of $G$, in equation~\eqref{eq:matchgate}, that upon applying it to two Hadamard gates one obtains
    \begin{align}
        G(H,H) &= \frac{1}{\sqrt{2}}\begin{pmatrix} 1 & 0 & 0 &1\\0&1&1&0\\0&1&-1&0\\1&0&0&-1\end{pmatrix},
    \end{align}
    so $G(H,H)\ket{00} = \frac{1}{\sqrt{2}}\left(\ket{00} + \ket{11}\right)$.
    
    Schmidt decomposing the input state we obtain
    \begin{align}
        \ket{\text{input}} = \sum_{a,b\in\{0,1\}} \frac{\lambda_{ab}}{2} \left(\ket{00} + \ket{11}\right)\ket{a}\ket{b} \left(\ket{00} + \ket{11}\right)\ket{\phi_{ab}},
    \end{align}
    where $\ket{a}$ and $\ket{b}$ are computational basis states, the $\ket{\phi_{ab}}$ a set of $4$ orthonormal vectors on the remaining $n-2$ qubits comprising the $n$ qubit state and the $\lambda_{ab}$ are positive real numbers whose squares sum to the square of the norm of the input state. Now since
    \begin{align}
        \ketbra{\Phi_\theta}{\Psi_\theta} &= \frac{1}{2} \left(\ket{00} + e^{i\theta}\ket{11}\right)\left(\bra{00} + e^{-i\theta}\bra{11}\right)\\
         \ketbra{\Phi_\theta}{\Psi_\theta} \ket{00} &= \frac{1}{\sqrt{2}}\ket{\Psi_\theta}\\
         \ketbra{\Phi_\theta}{\Psi_\theta} \ket{11} &= e^{-i\theta}\frac{1}{\sqrt{2}}\ket{\Psi_\theta}\\
         \ketbra{\Phi_\theta}{\Psi_\theta} \ket{01} &= 0\\
         \ketbra{\Phi_\theta}{\Phi_\theta} \ket{10} &= 0,
    \end{align}
    we have
    \begin{align}
        \left(I\otimes\ketbra{\Phi_{-\frac{\theta}{2}}}{\Phi_{-\frac{\theta}{2}}}\otimes \ketbra{\Phi_{-\frac{\theta}{2}}}{\Phi_{-\frac{\theta}{2}}}\otimes I\right)\ket{\text{input}} = &\frac{\lambda_{00}}{4}\ket{0}\ket{\Phi_{-\frac{\theta}{2}}}\ket{\Phi_{-\frac{\theta}{2}}}\ket{0} \ket{\phi_{00}} +\\ 
        &\frac{\lambda_{01}}{4}e^{i\frac{\theta}{2}}\ket{0}\ket{\Phi_{-\frac{\theta}{2}}}\ket{\Phi_{-\frac{\theta}{2}}}\ket{1} \ket{\phi_{01}} +\\ 
        &\frac{\lambda_{10}}{4}e^{i\frac{\theta}{2}}\ket{1}\ket{\Phi_{-\frac{\theta}{2}}}\ket{\Phi_{-\frac{\theta}{2}}}\ket{0}\ket{\phi_{10}} +\\ 
        &\frac{\lambda_{11}}{4}e^{i\theta}\ket{1}\ket{\Phi_{-\frac{\theta}{2}}}\ket{\Phi_{-\frac{\theta}{2}}}\ket{1} \ket{\phi_{11}},
    \end{align}
    evidently the norm of this vector is a quarter of the norm of the vector we started with.
\end{proof}

\section{FLO-fidelity of a tensor product of two maximally non-Gaussian states}
\label{app:fidelity-of-a8-pair}
Let
\begin{align}
    \ket{a_8} = \frac{1}{\sqrt{2}}\left(\ket{0000} + \ket{1111}\right),
\end{align}
be the state defined by Bravyi in Ref.~\cite{PhysRevA.73.042313}. We show that 
\begin{align}
    \fid{\ket{a_8} \ket{a_8} } &= \fid{\ket{a_8}}^2 = \frac{1}{4}.
\end{align}
Recall
\begin{align}
   \fid{\ket{a_8} \ket{a_8} } &= \max_{U\in\mathrm{FLO}(8)} \abs{\bra{a_8}\bra{a_8}U\ket{0} }^2.
\end{align}
We know from Ref.~\cite{BOTERO200439} that the $8$ qubit FLO state $\ket{\phi} = U\ket{0}$ may be decomposed into BCS-like entangled modes. That is there exist $4$-qubit FLO unitaries $U_A$, $U_B$ and $4$ angles $\theta_j$ such that 
\begin{align}
    \ket{\phi} &= U_A\otimes U_B \prod_j\left(\cos(\theta_j)\ket{0}_A\ket{0}_B +  \sin(\theta_j)\ket{1}_A\ket{1}_B\right),
\end{align}
so the optimization above becomes
\begin{align}
    &\fid{\ket{a_8} \ket{a_8} }\nonumber\\ &= \max_{U_A, U_B, \theta_j} \abs{\bra{a_8}\bra{a_8}U_A\otimes U_B \prod_j\left(\cos(\theta_j)\ket{0}_A\ket{0}_B +  \sin(\theta_j)\ket{1}_A\ket{1}_B\right)}^2.
\end{align}
Abbreviating with
\begin{align}
    c_j &= \cos(\theta_j)\\
    s_j &= \sin(\theta_j)\\
    t(y) &= \prod_j c_j^{1-y_j}s_j^{y_j},\label{eqn:abbreviating-t}
\end{align}
for each length $4$ bitstring, $y$ we obtain
\begin{align}
    \fid{\ket{a_8} \ket{a_8} } &= \max_{U_A, U_B, \theta_j} \abs{\sum_{y\in\{0,1\}^4} t(y)\bra{a_8}U_A \ket{y}\bra{a_8}U_B \ket{y}}^2.
\end{align}
At this point we note that we can restrict our attention to $c_j \geq 0$, $s_j \geq 0$, by appropriately defining $U_A$ and $U_B$. The case where either $c_j$ or $s_j$ is zero is uninteresting, as the state becomes a product state and may be addressed separately. Therefore we only consider $t(y) > 0$, it follows that $t$ defines an inner product on length $16$ vectors (indexed by $y$)
\begin{align}
    \langle a, b \rangle_t = \sum_y t(y) a_y^* b_y.
\end{align}
We apply the Cauchy-Schwartz inequality
\begin{align}
    |\langle a, b \rangle_t|^2 \leq \langle a, a \rangle_t \langle b, b \rangle_t
\end{align}
to obtain
\begin{align}
    \fid{\ket{a_8} \ket{a_8} } &\leq \max_{U_A, U_B, \theta_j} \abs{\sum_{y\in\{0,1\}^4} t(y)\bra{a_8}U_A^\dagger \ketbra{y}{y}U_A \ket{a_8}} \abs{\sum_{y\in\{0,1\}^4} t(y)\bra{a_8}U_B^\dagger \ketbra{y}{y}U_B \ket{a_8}}.
\end{align}
Obviously the $A$ and $B$ terms are exactly the same, so the result we seek reduces to showing 
\begin{align}
    \max_{U_A, \theta_j} \abs{\sum_{y\in\{0,1\}^4} t(y)\bra{a_8}U_A^\dagger \ketbra{y}{y}U_A \ket{a_8}} \leq \fid{\ket{a_8}}.
\end{align}
We know from Ref~\cite{Oszmaniec2014} that the orbit $\operatorname{FLO}(4)\ket{a_8}$ consists of states with purely real or purely imaginary coefficients when expressed in the basis $\ket{\eta_j}$ which appears in the proof of Theorem~\ref{thm:decomposition-of-magic-states}. Given the explicit form for the vectors in this basis we obtain
\begin{align}
    \begin{split}
    \sum_{y\in\{0,1\}^4} t(y)\bra{a_8}U_A^\dagger \ketbra{y}{y}U_A \ket{a_8} &= \frac{1}{2} \left(t(0000) + t(1111)\right) (|r_1|^2 + |r_2|^2) \\
    &\quad+ \frac{1}{2} \left(t(0011) + t(1100)\right) (|r_3|^2 + |r_4|^2) \\
    &\quad+ \frac{1}{2} \left(t(0101) + t(1010)\right) (|r_5|^2 + |r_6|^2) \\
    &\quad+ \frac{1}{2} \left(t(1001) + t(0110)\right) (|r_7|^2 + |r_8|^2),
    \end{split}\label{eqn:fidelity-two-copies-of-a8}
\end{align}
where $U_A\ket{a_8} = \sum_j r_j\ket{\eta_j}$. The expression appearing in~\eqref{eqn:fidelity-two-copies-of-a8} is easy to optimize. Since the $|r_j|^2$ sum to $1$ and are non-negative, the optimizer will have $r$ chosen such that all of the weight in the expression is applied to whichever of the $t(y) + t(\bar{y})$ terms is largest. Then we simply have to optimize an expression of the form
\begin{align}
    t(0000) + t(1111) &= \cos(\theta_1)\cos(\theta_2)\cos(\theta_3)\cos(\theta_4) + \sin(\theta_1)\sin(\theta_2)\sin(\theta_3)\sin(\theta_4).
\end{align}
By choosing each $\theta$ equal to $0$ we see that the value $1$ is achievable, while by noting 
\begin{align}
    t(0000) + t(1111) \leq \cos(\theta_1)\cos(\theta_2) + \sin(\theta_1)\sin(\theta_2) = \cos(\theta_1 - \theta_2) \leq 1
\end{align}
shows this value is the optimum. We can insert this value into the expression~\eqref{eqn:fidelity-two-copies-of-a8} to obtain
\begin{align}
    \sum_{y\in\{0,1\}^4} t(y)\bra{a_8}U_A^\dagger \ketbra{y}{y}U_A \ket{a_8} \leq \frac{1}{2},
\end{align}
or, equivalently
\begin{align}
   \fid{\ket{a_8} \ket{a_8} } &= \fid{\ket{a_8}}^2 = \frac{1}{4}.
\end{align}
This method of proof generalises quite broadly. Define the $2^n\times 2^n$ diagonal matrix
\begin{align}
    Y(\theta) &= \sum_{y\in\{0,1\}^n }t(y)\ketbra{y}{y}
\end{align}
with $t(y)$ depending on $n$ angles $\theta_j$ as in equation~\eqref{eqn:abbreviating-t}. Then say an $n$ qubit pure $\ket{\psi}$ is \emph{FLO-aligned} if
\begin{align}
    \bra{\psi} U^\dagger Y U \ket{\psi} \leq \fid{\ket{\psi}},
\end{align}
for all $\theta$ and $U\in\mathrm{FLO}(n)$. Then if $\ket{\psi_1}$ and $\ket{\psi_2}$ are \emph{FLO-aligned} then
\begin{align}
    \fid{\ket{\psi_1}\ket{\psi_2}} = \fid{\ket{\psi_1}}\fid{\ket{\psi_2}}.
\end{align}

\end{document}